\def\orcid#1{\kern .08em\href{https://orcid.org/#1}{\includegraphics[keepaspectratio,width=0.7em]{parametros/orcid.pdf}}}
\theoremstyle{definition}
\newtheorem{theorem}{Theorem}[section]
\newtheorem{corollary}[theorem]{Corollary}
\newtheorem{lemma}[theorem]{Lemma}
\newtheorem{proposition}[theorem]{Proposition}
\newtheorem{remark}[theorem]{Remark}
\numberwithin{equation}{section} 
\def\@seccntformat#1{\@ifundefined{#1@cntformat}%
	{\csname the#1\endcsname\quad}
	{\csname #1@cntformat\endcsname}
}
\newif\ifShowComments
\def\strutdepth{\dp\strutbox}
\def\druk#1{\strut\vadjust{\kern-\strutdepth
        {\vtop to \strutdepth{%
                \baselineskip\strutdepth\vss
                        \llap{\hbox{#1}\quad}\null}}}}
\title{\bf A General Class of Trimodal Distributions: Properties and Inference}
\author[1]{Roberto Vila \thanks{rovig161@gmail.com}}
\author[1]{Victor Serra  \thanks{victorserra92@gmail.com} }
\author[2,3]{Mehmet N. Çankaya \thanks{mehmet.cankaya@usak.edu.tr}}
\author[4]{Felipe Quintino \thanks{felipe.quintino@unir.br}}
\affil[1]{Department of Statistics, University of Bras\'ilia, Bras\'ilia, Brazil}
\affil[2]{ Department of International Trading and Finance, Faculty of Applied Sciences, Uşak University, Uşak, Turkey;
	}
\affil[3]{ 
	Department of Statistics, Faculty of Art and Sciences, Uşak University, Uşak, Turkey
}
\affil[4]{Department of Mathematics and Statistics, Federal University of Rondônia,
Paraná, Brazil}
\date{\today}                     
\begin{document}
\maketitle

\begin{abstract}
The modality is important topic for modelling. Using parametric models is an efficient way when real data set shows trimodality. In this paper we propose a new class of trimodal probability distributions, that is, probability distributions that have up to three modes. Trimodality itself is achieved by applying a proper transformation to density function of certain continuous probability distributions. At first, we obtain preliminary results for an arbitratry density function $g(x)$ and, next, we focus on the Gaussian case, studying  trimodal Gaussian model more deeply. The Gaussian distribution is applied to produce the trimodal form of Gaussian known as normal distribution. The tractability of analytical expression of normal distribution, and properties of the trimodal normal distribution are important reasons why we choose normal distribution. Furthermore, the existing distributions should be improved to be capable of  modelling efficiently when there exists a trimodal form in a data set. After new density function is proposed, estimating its parameters is important. Since Mathematica 12.0 software has optimization tools and important modelling techniques,  computational steps are performed by using this software. The bootstrapped form of real data sets are applied to show the modelling ability of the proposed distribution when real data sets show trimodality. 
\end{abstract}
\smallskip
\noindent
{\small {\bfseries Keywords.} {Class of distributions $\cdot$ unimodality $\cdot$  bimodality $\cdot$   trimodality $\cdot$  inference.}}
\\
{\small{\bfseries Mathematics Subject Classification (2010).} {MSC 60E05 $\cdot$ MSC 62Exx $\cdot$ MSC 62Fxx.}}


\section{Introduction}\label{introsec}

The modality is an important topic when the nature of phenomena can be modelled by using the function which can be capable to have different forms of peaks.  The modality can occur when there exists an irregularity in the output of an experiment. In the statistical view point, the random variables are nonidentically distributed. In other words, there can be a mixing of some populations even if same experiment is conducted while getting outputs of the corresponding experiment \cite{CBDA15,mixeddistr,denizspain,VHR21}. In such a situation, location and scale parameters of the mixed populations are important to get the central tendency and disperson (statistics) of the mixed form of the population in the experiment after shape, scale and bimodality parameters of function in the statistical theory are necessary components of a function which is used for conducting an efficient modelling  \cite{CanEEq,Dogq,mixeddistr,VFSPO20,VC20,popovicazza,Vilaetalbi01}. 

There are many well-known distributions used at the statistical inference in which regression and its counter parts such as time series, design of experiments, structural equation modelling in the applications from social science. There are different techniques to produce a probability density function \cite{felixgen}. The analytical tractability and properties of the proposed distribution are important when the new distribution is used for modelling. For example, the existence of moments and entropy function are important when the modelling on the real data sets are performed \cite{Lehmann1998,mcoa20}. The well-known normal distribution which is also known as Gaussian distribution should be transformed  into a trimodal form when data sets show trimodality. The advantage of using a trimodal
distribution is that a real data set can be a combination of two, three or more normal distributions with different parameter values for location and scale. The apperance of such mixed distributions can be in a trimodal form.  Especially, since the working principle of a phenomena depends on many factors, it is reasonable to expect that a trimodal form for a real data set can occur. In other words, it is assumed that the random variables are identically distributed. However, the identicality is a restrictive assumptation for modelling a data set. The parametric models are necessary to perform an efficient modelling when trimodal representation
in a data set exists due to the structure of non-identicality (hetero data) or the mixing form of distributions. On the other side, if observations $x_1,x_2, \cdots, x_n$ are distributed as a parametric model such as trimodal form, then they are identical, that is, it depends on where and how you look at the results of experiments, because we do already have finite sample size. While managing an efficient modelling on real data sets, it is reasonable to consult parametric models which can be capable of dealing with different forms of modality. The bimodality parameters  $\rho$ and $\delta$ with shape parameter $\alpha$  of Maxwell  distribution in  \cite{maxwelldist} are tools for us to generate different forms of modality. There are different degrees of mixing via Maxwell distribution and the expression which can help us to have different forms of modality of function for generating modality via compounding distributions  \cite{VC20,popovicazza}. The bimodal form on the positive part of the real line produces a trimodal form when it is reflected to the negative part of the real line via mirror imaging \cite{CBDA15,canent,Domma17,transmut}. The different degree of bimodality, i.e. the length of  periodicity of modality,  on the positive part of the real line can be constructed by producing a new objective function based on the deformation \cite{VC20,Can21frac}, because deformation which can be regarded as a kind of rescaling can make the different length for periodicity \cite{CanEEq,BercherFisq,Bercherq}.

The smooth kernel distribution will be used to fit data sets; because, the strict and soft forms of trimodal normal distribution should be compared with smooth kernel technique to perform a comparison among them.  
Three normal distributions, represented by functions $g(x; \mu_1, \sigma_1)$, $g(x; \mu_2, \sigma_2)$ and $g(x; \mu_3, \sigma_3)$, can be mixed to get the trimodal form of normal distribution $g$. Thus, the modelling performance of the trimodal distribution constructed for the normal distribution $g$ can be tested for the mixed data sets. In fact, the mixed form of two or three function with one mode and symmetric can show the symmetric form with trimodal representation.  In other words, the groups around location parameter can be divided into two forms \cite{CBDA15,mixeddistr,denizspain,VHR21}. 

The main aim in this paper is to propose a distribution with trimodal form on the real line via using a technique, as is given by \cite{VC20,denizspain,VHR21,bimodal}. We keep to follow the symmetric case and our aim is to focus on the trimodal form on the real line. When there exists a trimodal form in a data set, the location and scale parameters should be estimated efficiently. In other words, each group coming from groups $g(x; \mu_1, \sigma_1)$, $g(x; \mu_2, \sigma_2)$ and $g(x; \mu_3, \sigma_3)$ has its values for location and scale parameters. In our case, we try to estimate one location parameter and one scale parameter when the mixed data sets for two or three groups are used. It is important to note that the apperance of trimodality can occur via mixing the different values of parameters of functions. Since the true model for the mixed three normal distributions is chosen to estimate the location and scale parameters precisely, the performance of modelling will be increased greatly when
the probability density functions having modality property are taken into account. For example, the mixed distribution has  parameters which are the mixing proportion $w_1, w_2$ and $w_3=1-w_1-w_2$ for three groups which are necessary to estimate. For the mixed normal distributions, we have $\mu_1, \sigma_1, \mu_2, \sigma_2, \mu_3, \sigma_3, w_1$ and $w_2$ (see Section \ref{Sect:5}). In totally, there are 8 parameters which have to be estimated. However, in our case we have three main parameters and also $\mu$ and $\sigma$ of the distribution. Thus, there will be 5 parameters which will  be estimated. The optimization of the log-likelihood function according to these 5 parameters of a function can be easy to reach the global point of the $ \log(f) $ when compared with a function including 8 parameters. In addition, we have only one location and scale parameter which can be free from the mixing proportion  $w_1, w_2$ and $w_3=1-w_1-w_2$. The numerical computation while conducting the optimization of log-likelihood according to parameters can include the less numerical errors. Note that the numerical errors in the function with 8 parameters can be bigger than that of 5 parameters. Thus, the more precise evaluation can be achieved for the numerically precise evaluation of estimating the parameters $\mu$ and $\sigma$, which is why we prefer to consider proposing such a trimodality while conducting an efficient fitting on the real data sets. On the other side of the modelling perspective, the structure of grouping cannot be determined only for the used mixing proportions $w_1, w_2$ and $w_3=1-w_1-w_2$. There can be irrational proportions such as 1/6, 1/9, etc. for the mixing in a data set. The precise evaluation in computation  for the true value of irrational proportion cannot be performed accurately, which makes a disadvantage for us when we use the estimated values of these kinds of the proportions such as irrational ones. 

This paper is organized as follows. In Section \ref{A class of CDF}, we define the class of trimodal probabilistic models. In Section \ref{Structural properties}, some structural properties of the proposed model are examined.  We provide a formal proof for the trimodality of a class of symmetric kernel densities, present a stochastic representation, and we provide  the closed formulas for the moments, entropies and stochastic representation. The existence of these expressions is important to use the proposed distrbution for fitting the data sets. Section \ref{Sect:4} is divided to describe the proposed model when the normal (Gaussian) is applied (see Table \ref{table:1}). For this case, some properties such as modality, moments, Shannon entropy among others also are discussed. In Sections \ref{Sect:5}, we introduce the mixing form. Section  \ref{Sect:6}, represents the $\log_q$ likelihood function  used as a method for parameter estimation. In Section  \ref{appsect}, the real data sets are applied. Section  \ref{concsect} is for conclusion and future works. In Appendices as a supplementary material, we provide proofs and codes of Mathematica software.

\section{A class of  continuous probability distributions}
\label{A class of CDF}
\noindent
Let $g:D\subset \mathbbm{R}\to [0,\infty)$, $D={\rm supp}(g)\neq\emptyset$, be a kernel density  with corresponding cumulative distribution function (CDF) denoted by $G$. 
The function $g$ can be associated (or not) with an additional parameter $\xi$ (or vector $\boldsymbol{\xi}$).
For a random variable $X$ we define the following probability density function (PDF)
\begin{eqnarray}\label{pdfbgev2}
	f(x;\boldsymbol{\theta})
	=
	{\sigma \over Z_{\boldsymbol{\theta}}}\, 
	\biggl[\rho+\delta T\biggl({x-\mu\over\sigma};\alpha,p\biggr)\biggr] \,
g\biggl({x-\mu\over\sigma}\biggr),  \quad {x-\mu\over\sigma}\in  D,
\end{eqnarray}
where  $\boldsymbol{\theta}=(\mu,\sigma,\alpha,\rho, \delta)$ is a parameter vector such that $\mu\in\mathbbm{R}$ is a location parameter, $\sigma>0$ is a scale parameter, $\alpha>0$ is a shape parameter and $\rho\geqslant 0$  and $\delta\geqslant 0$ are parameters that control different forms of modality of distribution. Note that $\rho$ and $\delta$ cannot be zero simultaneously.
The function $Z_{\boldsymbol{\theta}}$ appearing in the definition of $f$ is a normalizing factor and the function $T:D\subset \mathbbm{R}\to (0,1)$ is given by
\begin{align}\label{tranf-G} 
T(x;\alpha;p)=
{\gamma(p,x^2/\alpha^{2})\over\Gamma(p)} \quad \text{with} \ p>0 \ \text{known}.
\end{align}
Here 
$\gamma(p,u)=\int_{0}^{u}w^{p-1}{\rm e}^{-w} {\rm d}w$ is
the incomplete gamma function and $\Gamma(p)$ is the gamma
function. For $p=3/2$ and $D=(0,\infty)$, the function $T(x;\alpha,p)$ on $D$ defines the Maxwell distribution with scale parameter $\alpha$.

Hereafter, we will denote $X\sim  {\rm TD}(\boldsymbol{\theta})$ for a random variable $X$ that follows the trimodal distribution \eqref{pdfbgev2}.

When $\delta= 0$ and $\rho\neq 0$ fixed in \eqref{pdfbgev2},  the original density $g$ is recovered.

Since $0<T(x;\alpha,p)<1$ for almost all $x\in D$, we have 
$0<Z_{\boldsymbol{\theta}}
\leqslant
(\rho+\delta)\sigma$.
A simple calculation shows that (see Corollary \ref{prop-gen})
\begin{align}\label{normalizator}
Z_{\boldsymbol{\theta}}
&=
(\rho+ \delta)\sigma+
\delta\sigma 
\left\{
{\mathbbm{E}}
\big[
G(-\alpha \sqrt{Y})
\big]
-
{\mathbbm{E}}
\big[
G(\alpha \sqrt{Y})
\big]
\right\},
\end{align}
where $Y\sim{\rm Gamma}(p,1)$ and $G$ is the corresponding CDF of $g$. Furthermore, the CDF of $X\sim {\rm TD}(\boldsymbol{\theta})$, denoted by $F(x;\boldsymbol{\theta})$, is written as
\begin{align} \label{CDF}
F(x;\boldsymbol{\theta})
&=
{\rho\sigma\over Z_{\boldsymbol{\theta}}}\, G\biggl({x-\mu\over\sigma}\biggr)
+
{\delta\sigma\over Z_{\boldsymbol{\theta}}}
\left\{	\mathbbm{E}\big[G(-\alpha\sqrt{Y})\big]
+
T\biggl({x-\mu\over\sigma};\alpha,p\biggr) G\biggl({x-\mu\over\sigma}\biggr)
\right\} \nonumber
\\[0,3cm]
& 
-
{\delta\sigma\over Z_{\boldsymbol{\theta}}}
\left\{
\mathbbm{E}\Big[\mathds{1}_{\{Y\leqslant  ({x-\mu\over\sigma\alpha})^2\}} G(-\alpha\sqrt{Y})\Big]
\mathds{1}_{\{x< \mu\}}
+
\mathbbm{E}\Big[\mathds{1}_{\{Y\leqslant  ({x-\mu\over\sigma\alpha})^2\}} G(\alpha\sqrt{Y})\Big]	
\mathds{1}_{\{x\geqslant \mu\}}
\right\},
\end{align}
for each $x\in\mathbbm{R}$. For more details, see Corollary \ref{prop-gen}. 

Taking $x=\mu$ in \eqref{CDF}, we get
$
F(\mu;\boldsymbol{\theta})
=
\sigma
\{
{\rho} G(0)
+
{\delta}
\mathbbm{E}[G(-\alpha\sqrt{Y})]
\}
/Z_{\boldsymbol{\theta}}.
$
Let $W$ be a random variable with corresponding CDF $G$.
If the distribution $G$ of $W$ is symmetric about zero, then 
$
Z_{\boldsymbol{\theta}}
=
\rho\sigma
+
2\delta\sigma 
{\mathbbm{E}}
[
G(-\alpha \sqrt{Y})
]
$,
$G(0)=1/2$, and then $F(\mu;\boldsymbol{\theta})=1/2$. So, in this case, $\mu$ is location parameter of $X$.
Moreover, by letting $x\to\infty$ in \eqref{CDF}, a simple observation shows that $F(x;\boldsymbol{\theta})$ tends to 1, showing that the parametric function in \eqref{pdfbgev2} is in fact a PDF.
\smallskip

Some natural examples of  kernel densities $g$ to be plugged into \eqref{pdfbgev2}, with $p$ given, where trimodality shape is observed, are presented in Table \ref{table:1}.
\begin{table}[H]
	\caption{Some kernel densities $(g)$ that generate multimodality in the model \eqref{pdfbgev2}.} \vspace*{0.15cm}
	\centering 
	\begin{tabular}{llllll} 
		\hline
		Distribution 
		& $g$ & $G$ & $\boldsymbol{\xi}$ & $D$ 
		\\ [0.5ex] 
		\noalign{\hrule height 0.7pt}
		Trimodal Gumbel 
		& ${\rm e}^{-x-{\rm e}^{-x}}$ & ${\rm e}^{-{\rm e}^{-x}}$  & $-$ & $\mathbbm{R}$ 
		\\ [1ex] 
		Trimodal Laplace
		& ${1\over 2}{\rm e}^{-\vert x\vert}$  & $1+{1\over 2}{\rm e}^{-x}[\mathds{1}_{\{x\leqslant 0\}}-\mathds{1}_{\{x\geqslant 0\}}]$  & $-$ & $\mathbbm{R}$ 
		\\ [1ex]   
		Trimodal Logistic
		& ${{\rm e}^{-x}\over (1+{\rm e}^{-x})^2}$  & ${1\over 1+{\rm e}^{-x}}$ &  $-$ & $\mathbbm{R}$  
		\\ [1ex] 
		Trimodal Cauchy
		& ${1\over \pi(1+x^2)}$  & ${1\over\pi}{\rm arctan}(x)+{1\over 2}$ & $-$ & $\mathbbm{R}$ 
		\\ [1ex] 
		Trimodal Student-$t$
		& ${\Gamma({\nu+1\over 2})\over\sqrt{\nu\pi} \Gamma({\nu\over 2}) }(1+{x^2\over\nu})^{-(\nu+1)/ 2}$  & ${1\over 2}+\Gamma({\nu+1\over 2}) {_2F_1({1\over 2},{\nu+1\over 2};{3\over 2}; -{x^2\over \nu} )\over\sqrt{\nu\pi} \Gamma({\nu\over 2})}$ & $\nu>0$ & $\mathbbm{R}$
		\\ [1ex]  
		Trimodal Normal
		& $\phi(x)={1\over\sqrt{2\pi}} {\rm e}^{-x^2/2}$  & $\Phi(x)=\int_{-\infty}^x \phi(t){\rm d}t$ & $-$ & $\mathbbm{R}$ 
		\\ [1ex] 
		\hline
	\end{tabular}
	\label{table:1} 
\end{table}
\noindent
Here $_2F_1$ is the hypergeometric function and ${\rm erf}(x)={2}\int_{0}^{x} {\rm e}^{-t^2} {\rm d}t/\sqrt{\pi}$ is the error function (also called the Gauss error function).

\section{Structural properties}
\label{Structural properties}
\noindent
In this section, some basic properties such as trimodality for symmetric kernels, moments and truncated moments, and entropies for $X\sim {\rm TD}(\boldsymbol{\theta})$ are discussed in detail.

\subsection{Trimodality for a class of symmetric kernel densities}
In this subsection, we suposse that the kernel density $g$ in \eqref{pdfbgev2} has the following form
\begin{align}\label{g-def}
	g(x)=g(0)\, {\rm e}^{^{\textstyle - \int_{0}^{x}t\mathfrak{h}(t^2)\, {\rm d}t}}, \quad x\in\mathbbm{R},
\end{align}
for some positive real function $\mathfrak{h}$ such that the integral $\int_{0}^{x}t\mathfrak{h}(t^2)\, {\rm d}t$ exists. 	
Notice that \eqref{g-def} is equivalent to
\begin{align}\label{der-condition}
g'(x)=-x\mathfrak{h}(x^2) g(x), \quad x\in\mathbbm{R}.
\end{align}
It is immediate to verify that $g$, as defined in \eqref{g-def},  is symmetric about zero, that is, $g(x)=g(-x)$ on the real line $D=\mathbbm{R}$. 
For example, in the Laplace, Cauchy, Student-$t$ and Normal kernel densities (see Table \ref{table:1}) we have  $\mathfrak{h}(y)=1/\sqrt{y}$,  $\mathfrak{h}(y)=2/(1+y)$, $\mathfrak{h}(y)=(\nu+1)/(\nu+{y})$ and $\mathfrak{h}(y)=1$, $\forall y>0$, respectively. 

Moreover, we assume that 
$\mathfrak{h}$ has the following form
\begin{align}\label{h-definition}
\mathfrak{h}(y)={C\over (\alpha^2 A+y)^{\beta-p}} \quad \text{for } \, C\geqslant 1, A\geqslant 0 \ \text{and} \ \beta>p.
\end{align}
That is, $\mathfrak{h}(y)$, $y>0$, decays polynomially.
The function  $\mathfrak{h}(y)=1$ corresponding to the Normal distribution is not of the form \eqref{h-definition}, then the next result cannot be applied and a separate study must be carried out.
In this paper, the Gaussian case 
will be studied in detail in Section \ref{Sect:4}.

For a formal proof of following lemma, see Section \ref{Ap-1} of Appendix.
\begin{lemma}\label{def-R-0}
	Let  $\mathfrak{h}$ be as in \eqref{h-definition}.
For some $\rho>0$ the function $\mathfrak{R}$, defined by
	\begin{align*}
	\mathfrak{R}(y)
	=
	{2\delta}\biggl[ 
	{(1/\alpha^2)^p\over\Gamma(p)}\, y^{p-1} {\rm e}^{-y/\alpha^2}\biggr]
	-
	\biggl[\rho+\delta\, {\gamma(p,y/\alpha^2)\over\Gamma(p)}\biggr]\mathfrak{h}(y), \quad y>0,
	\end{align*}
	has at most two real roots.
\end{lemma}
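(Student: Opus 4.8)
The plan is to read $\mathfrak R(y)=0$ (for $y>0$) as the equation locating the non‑central critical points of the density \eqref{pdfbgev2} under the substitution $y=((x-\mu)/\sigma)^2$, and to reduce counting its roots to a one‑variable unimodality statement. Writing $\tau(y):=\gamma(p,y/\alpha^2)/\Gamma(p)$, one has $\tau'(y)=\frac{(1/\alpha^2)^p}{\Gamma(p)}\,y^{p-1}{\rm e}^{-y/\alpha^2}$, which is precisely the bracketed factor in the first summand of $\mathfrak R$, so that $\mathfrak R(y)=2\delta\tau'(y)-[\rho+\delta\tau(y)]\,\mathfrak h(y)$. Since $\mathfrak h>0$ on $(0,\infty)$, dividing by $\mathfrak h$ shows the roots of $\mathfrak R$ are exactly the solutions of $\Upsilon(y)=\rho$, where
\[
\varphi(y):=\frac{\tau'(y)}{\mathfrak h(y)},\qquad \Upsilon(y):=\delta\bigl(2\varphi(y)-\tau(y)\bigr).
\]

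The first thing I would prove is the key structural fact that $\varphi$ is strictly unimodal on $(0,\infty)$. Indeed,
\[
(\ln\varphi)'(y)=\frac{\tau''(y)}{\tau'(y)}-\frac{\mathfrak h'(y)}{\mathfrak h(y)}=\frac{p-1}{y}+\frac{\beta-p}{\alpha^2A+y}-\frac{1}{\alpha^2}=:\ell(y),
\]
and here the hypothesis \eqref{h-definition} enters decisively: because $\mathfrak h$ decays polynomially, $\ell$ is a constant plus two strictly decreasing positive terms whenever $p\ge1$, hence strictly decreasing from $+\infty$ at $0^{+}$ to $-1/\alpha^{2}$ at $+\infty$, with a unique zero $y_0\in(0,\infty)$. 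Consequently $\varphi$ strictly increases on $(0,y_0)$, strictly decreases on $(y_0,\infty)$, and $\varphi(0^{+})=\varphi(+\infty)=0$ (the vanishing at $0^{+}$ uses $p>1$; for $p\le1$ the same scheme works after a separate endpoint discussion).

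To finish, note that $\tau\ge0$ gives $\Upsilon(y)\le2\delta\varphi(y)$, so every solution of $\Upsilon(y)=\rho$ lies in $I_\rho:=\{y:\varphi(y)\ge\rho/(2\delta)\}$, which by strict unimodality of $\varphi$ is a compact interval (empty once $\rho>2\delta\max\varphi$) that shrinks to $\{y_0\}$ as $\rho$ increases. Moreover $\Upsilon'(y)=\delta\bigl(2\varphi'(y)-\tau'(y)\bigr)$ and $\Upsilon'(y_0)=-\delta\tau'(y_0)<0$ (since $\varphi'(y_0)=\varphi(y_0)\ell(y_0)=0$), so $\Upsilon$ is strictly decreasing on some $(y_0-\varepsilon,y_0+\varepsilon)$. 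Choosing $\rho>0$ large enough that $I_\rho\subset(y_0-\varepsilon,y_0+\varepsilon)$, on $I_\rho$ the equation $\Upsilon=\rho$ has at most one solution, and outside $I_\rho$ it has none (there $\Upsilon(y)\le2\delta\varphi(y)<\rho$); hence $\mathfrak R$ has at most one — in particular at most two — real roots for such $\rho$, which is the assertion. In the regime $\beta\le p+1$ (the one in which $g$ is an integrable kernel) one even gets the conclusion for every $\rho>0$: with $K(y):=(\alpha^2A+y)\ell(y)$ and $\mathfrak N(y):=(\alpha^2A+y)^{\beta-p-1}K(y)$ one computes
\[
\mathfrak N'(y)=(\alpha^2A+y)^{\beta-p-2}\bigl[(\beta-p-1)K(y)+(\alpha^2A+y)K'(y)\bigr],
\]
and since $\beta-p-1\le0$ while $K>0$ and $K'<0$ on $(0,y_0)$, $\mathfrak N$ strictly decreases there (and $\mathfrak N<0<C/2$ on $(y_0,\infty)$); thus $\mathfrak N=C/2$, i.e. $\Upsilon'=0$, has at most one zero, $\Upsilon$ is unimodal, and together with $\Upsilon(0^{+})=0<\rho$ and $\Upsilon(+\infty)=-\delta<0$ this forces $\Upsilon=\rho$ to have at most two solutions.

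The step I expect to be the real obstacle is the strict unimodality of $\varphi=\tau'/\mathfrak h$: it rests on $(\ln\varphi)'=\ell$ being strictly monotone, which is tied to the polynomial profile \eqref{h-definition} — for a general $\mathfrak h$ one loses this, and indeed the constant profile $\mathfrak h\equiv1$ of the Gaussian, not of the form \eqref{h-definition}, is precisely the case excluded here and treated separately in Section \ref{Sect:4}. A secondary, more clerical difficulty is the endpoint bookkeeping for $\varphi$, $\tau$, $\Upsilon$ as $y\to0^{+}$ when $p\le1$, and, if an effective statement is wanted, making the threshold on $\rho$ explicit.
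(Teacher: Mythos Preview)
Your argument is correct and takes a genuinely different route from the paper's. The paper rewrites $\mathfrak R(y)=2\delta f(y;\alpha,p)-[\rho+\delta F(y;\alpha,p)]\mathfrak h(y)$ with $f,F$ the Gamma$(p,1/\alpha^2)$ density and CDF, establishes via the exponential series that for $\rho$ satisfying an explicit lower bound one has $2\delta f(y;\alpha,p)\le\rho\,\mathfrak h(y)$ for $y\ge\alpha^2(1-A)$, and then argues \emph{graphically}---by sketching the possible shapes of $2\delta f$ (monotone for $p\le1$, unimodal for $p>1$) against the polynomially decaying curve $[\rho+\delta F]\mathfrak h$---that the two graphs meet in at most two points. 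Your approach instead reduces to the level-set equation $\Upsilon(y)=\rho$ and proves, by computing $(\ln\varphi)'$, the strict unimodality of $\varphi=\tau'/\mathfrak h$ analytically; from there you get a rigorous root count (at most one for $\rho$ near $2\delta\max\varphi$, and even at most two for every $\rho>0$ when $\beta\le p+1$, which covers all the kernels in Table~\ref{table:1}). What your method buys is a self-contained calculus proof that does not rely on a figure, plus the sharper ``all $\rho$'' statement in the relevant regime; what the paper's argument buys is an explicit threshold on $\rho$ and a visual explanation of why the number of roots can be $0$, $1$, or $2$, tying directly into Theorem~\ref{Teo-trimodality}. Your acknowledged loose end---the endpoint analysis of $\varphi$ for $p\le1$ (where $(\ln\varphi)'$ need not be monotone)---mirrors the paper's own reliance on case pictures for $p\le1$, so neither treatment is fully uniform in $p$.
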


\begin{proposition}\label{prop-crit-0}
	Let  $g$  be a kernel density as in \eqref{g-def}, with $\mathfrak{h}$ as in \eqref{h-definition}.
	A point $x\in\mathbbm{R}$ is a critical point of density \eqref{pdfbgev2} if $x=\mu$ or 
	$
	\mathfrak{R}[(x-\mu)^2/\sigma^2]=0,
	$
	where $\mathfrak{R}$ is as in Lemma \ref{def-R-0}.
\end{proposition}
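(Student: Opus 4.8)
The plan is to prove Proposition \ref{prop-crit-0} by a direct computation: differentiate the density \eqref{pdfbgev2} with respect to $x$, factor the resulting expression, and read off its zeros. No stochastic-representation or integral machinery is needed here; everything reduces to the product and chain rules together with the defining ODE \eqref{der-condition} of $g$.

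First I would set $z=(x-\mu)/\sigma$ and write $f(x;\boldsymbol{\theta})=(\sigma/Z_{\boldsymbol{\theta}})\,[\rho+\delta T(z;\alpha,p)]\,g(z)$. Differentiating in $x$, the chain-rule factor $1/\sigma$ cancels the $\sigma$ in front, and the product rule gives
\[
f'(x;\boldsymbol{\theta})=\frac{1}{Z_{\boldsymbol{\theta}}}\Bigl\{\delta\,T'(z;\alpha,p)\,g(z)+[\rho+\delta T(z;\alpha,p)]\,g'(z)\Bigr\}.
\]
Next I would compute $T'$ explicitly. Since $T(z;\alpha,p)=\gamma(p,z^2/\alpha^2)/\Gamma(p)$ and, by the fundamental theorem of calculus, $\tfrac{d}{du}\gamma(p,u)=u^{p-1}{\rm e}^{-u}$, the chain rule with $u=z^2/\alpha^2$ yields
\[
T'(z;\alpha,p)=\frac{2z}{\Gamma(p)}\,\Bigl(\frac{1}{\alpha^2}\Bigr)^{p}(z^2)^{p-1}\,{\rm e}^{-z^2/\alpha^2}.
\]
For $g'$ I would invoke \eqref{der-condition}, namely $g'(z)=-z\,\mathfrak{h}(z^2)\,g(z)$.

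Substituting both expressions into $f'$ and factoring out $z\,g(z)/Z_{\boldsymbol{\theta}}$, one obtains
\[
f'(x;\boldsymbol{\theta})=\frac{z\,g(z)}{Z_{\boldsymbol{\theta}}}\left\{2\delta\Bigl[\frac{(1/\alpha^2)^p}{\Gamma(p)}(z^2)^{p-1}{\rm e}^{-z^2/\alpha^2}\Bigr]-\Bigl[\rho+\delta\,\frac{\gamma(p,z^2/\alpha^2)}{\Gamma(p)}\Bigr]\mathfrak{h}(z^2)\right\},
\]
and the factor in braces, evaluated at $y=z^2=(x-\mu)^2/\sigma^2$, is precisely $\mathfrak{R}(y)$ from Lemma \ref{def-R-0}. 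Hence $f'(x;\boldsymbol{\theta})=\dfrac{(x-\mu)}{\sigma\,Z_{\boldsymbol{\theta}}}\,g\bigl((x-\mu)/\sigma\bigr)\,\mathfrak{R}\bigl((x-\mu)^2/\sigma^2\bigr)$. Since $g>0$ on $\mathbbm{R}$ and $Z_{\boldsymbol{\theta}}>0$, the stationarity equation $f'(x;\boldsymbol{\theta})=0$ is equivalent to $x=\mu$ or $\mathfrak{R}[(x-\mu)^2/\sigma^2]=0$, which gives the claim (in fact as an equivalence, slightly stronger than the stated implication).

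This argument is essentially mechanical; the only step requiring a bit of care is the differentiation of the incomplete gamma function inside $T$ and the algebraic bookkeeping needed to recognize the bracketed factor as exactly $\mathfrak{R}$. I do not expect a genuine obstacle. Note that the polynomial-decay hypothesis \eqref{h-definition} on $\mathfrak{h}$ is not used in this proposition: it only enters afterwards, through Lemma \ref{def-R-0}, to bound the number of roots of $\mathfrak{R}$ and thereby control the number of critical points, which is the route to the trimodality statement.
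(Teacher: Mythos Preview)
Your argument is correct and is exactly the approach taken in the paper: the paper's proof simply records the factorization $f'(x;\boldsymbol{\theta})=[(x-\mu)/\sigma]\,g((x-\mu)/\sigma)\,\mathfrak{R}[(x-\mu)^2/\sigma^2]/(\sigma Z_{\boldsymbol{\theta}})$ obtained from \eqref{der-condition} and declares the result immediate. You have merely spelled out the product-rule and chain-rule bookkeeping that the paper leaves implicit, and your closing observations (that $g>0$ upgrades the implication to an equivalence, and that hypothesis \eqref{h-definition} is not actually used here) are accurate.
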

\begin{proof}
The proof is immediate since,
by using Equation \eqref{der-condition}, the first-order derivative of $f(x;\boldsymbol{\theta})$, with respect to $x$, is given by
	$
	f'(x;\boldsymbol{\theta})
	=
	[(x-\mu)/\sigma] 
	g((x-\mu)/\sigma) 
	\mathfrak{R}[(x-\mu)^2/\sigma^2]/(\sigma Z_{\boldsymbol{\theta}})	
	$.
\end{proof}

\begin{theorem}[Uni- bi- or trimodality]\label{Teo-trimodality}
	If $X\sim {\rm TD}(\boldsymbol{\theta})$
	then the following hold:
	\begin{itemize}
	\item[(1)] If $\mathfrak{R}$ has no real roots then  $f(x;\boldsymbol{\theta})$ is unimodal with mode $x=\mu$.
	\item[(2)] If $\mathfrak{R}$ has one real root then $f(x;\boldsymbol{\theta})$ is bimodal with minimum point $x=\mu$.
	\item[(3)] If $\mathfrak{R}$ has two distinct real roots then $f(x;\boldsymbol{\theta})$ is trimodal where $x=\mu$ is one of the modes.
\end{itemize}
\end{theorem}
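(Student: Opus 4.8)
The plan is to read the modality directly off the sign of the derivative $f'(x;\boldsymbol{\theta})$, using the factorisation from Proposition~\ref{prop-crit-0} and the root bound for $\mathfrak{R}$ from Lemma~\ref{def-R-0}. By Proposition~\ref{prop-crit-0},
\[
f'(x;\boldsymbol{\theta})
=
{1\over\sigma Z_{\boldsymbol{\theta}}}\,
{x-\mu\over\sigma}\,
g\biggl({x-\mu\over\sigma}\biggr)\,
\mathfrak{R}\biggl[{(x-\mu)^2\over\sigma^2}\biggr],
\]
and since $g>0$ on $\mathbbm{R}$ (immediate from \eqref{g-def}), $\sigma>0$ and $Z_{\boldsymbol{\theta}}>0$, the sign of $f'(x;\boldsymbol{\theta})$ equals the sign of $(x-\mu)\,\mathfrak{R}[(x-\mu)^2/\sigma^2]$. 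Under the change of variable $y=(x-\mu)^2/\sigma^2$, a root $y_0>0$ of $\mathfrak{R}$ produces the pair of critical points $x=\mu\pm\sigma\sqrt{y_0}$, while $x=\mu$ is always critical; so everything reduces to tracking the sign of $\mathfrak{R}$ on $(0,\infty)$, which by Lemma~\ref{def-R-0} has at most two roots there.

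First I would pin down the sign of $\mathfrak{R}$ near $+\infty$: there $\mathfrak{R}(y)<0$, because the first bracket is $2\delta$ times the ${\rm Gamma}(p,\alpha^2)$ density and decays exponentially, whereas $\rho+\delta\,\gamma(p,y/\alpha^2)/\Gamma(p)$ stays positive and tends to $\rho+\delta$ while $\mathfrak{h}(y)=C(\alpha^2A+y)^{-(\beta-p)}$ decays only polynomially (as $\beta>p$); hence the negative second term dominates in absolute value. Combining this with continuity and the at-most-two-roots bound, $\mathfrak{R}$ must exhibit on $(0,\infty)$ one of the three sign patterns $(-)$, $(+,-)$ or $(-,+,-)$, according as it has $0$, $1$ or $2$ roots.

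\textbf{No root.} Then $\mathfrak{R}<0$ throughout, so $f'(x;\boldsymbol{\theta})$ has the sign of $-(x-\mu)$: $f$ strictly increases on $(-\infty,\mu)$ and strictly decreases on $(\mu,\infty)$, hence is unimodal with mode $x=\mu$. \textbf{One root $y_1$.} Then $\mathfrak{R}>0$ on $(0,y_1)$ and $\mathfrak{R}<0$ on $(y_1,\infty)$; with $a_1=\sigma\sqrt{y_1}$ the critical points are $\mu-a_1<\mu<\mu+a_1$, and a sign table for $(x-\mu)\mathfrak{R}[(x-\mu)^2/\sigma^2]$ on the four resulting intervals gives $f'$ of sign $+,-,+,-$. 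So $\mu\pm a_1$ are local maxima and $\mu$ a local minimum: $f$ is bimodal with minimum point $x=\mu$. \textbf{Two roots $y_1<y_2$.} Then $\mathfrak{R}<0$ on $(0,y_1)$, $\mathfrak{R}>0$ on $(y_1,y_2)$ and $\mathfrak{R}<0$ on $(y_2,\infty)$; with $a_i=\sigma\sqrt{y_i}$ the critical points in increasing order are $\mu-a_2<\mu-a_1<\mu<\mu+a_1<\mu+a_2$, and the sign of $f'$ on the six intervals is $+,-,+,-,+,-$. Hence $\mu-a_2$, $\mu$, $\mu+a_2$ are local maxima and $\mu\pm a_1$ local minima: $f$ is trimodal and $x=\mu$ is one of the modes. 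In every case the monotonicity on the extreme intervals (increasing on the far left, decreasing on the far right) shows the listed local maxima exhaust the modes.

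The genuinely delicate point is the bridge between the \emph{counting} statement of Lemma~\ref{def-R-0} and the \emph{alternation} of the sign of $f'$: one must know that each root of $\mathfrak{R}$ is a true sign change (odd multiplicity), not a tangency, for otherwise ``one real root'' could still yield a unimodal density and the case split would be ill-posed. I would settle this by reusing the monotonicity/asymptotic analysis of $\mathfrak{R}$ underlying the proof of Lemma~\ref{def-R-0}, so that a root at which $\mathfrak{R}$ fails to change sign cannot occur under the stated hypotheses (equivalently, one reads the hypotheses as referring to the number of sign changes of $\mathfrak{R}$). Once that is in place, the remainder is just the bookkeeping of the three sign tables above.
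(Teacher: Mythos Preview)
Your proof is correct and follows essentially the same line as the paper's: both use the derivative factorisation of Proposition~\ref{prop-crit-0} and the root bound of Lemma~\ref{def-R-0} to enumerate the critical points, then classify them as maxima or minima. The paper's argument is terser---it appeals directly to $\lim_{x\to\pm\infty}f(x;\boldsymbol{\theta})=0$ to force the max/min alternation rather than your explicit asymptotic sign analysis of $\mathfrak{R}$---and it does not raise the multiplicity/tangency caveat you flag at the end, so your version is in fact the more careful of the two.
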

\begin{proof}
It is clear that if $\mathfrak{R}$ has no real roots, by Proposition \ref{prop-crit-0}, $x=\mu$ is the only critical point of the density $f$. Since $\lim_{x\to\pm \infty}f(x;\boldsymbol{\theta})=0$, the point $x=\mu$ is a mode. This proves Item (1).

In order to prove Item (2), we suppose that $\mathfrak{R}$ has one real root, denoted by $a$. By Proposition \ref{prop-crit-0}, it follows that $x=\mu$ and $x=\mu\pm\sigma \sqrt{a}$ are three critical points of  $f$. Since $\lim_{x\to\pm \infty}f(x;\boldsymbol{\theta})=0$, the point $x=\mu$ is a minimum and $x=\mu\pm\sigma \sqrt{a}$ are two symmetrical modes. This proves the second item.

Now, we assume that $\mathfrak{R}$ has two distinct real roots, denoted by $a$ and $b$. Without loss of generality, we can assume that $a<b$. Again, by Proposition \ref{prop-crit-0} we have that $x=\mu$, $x=\mu\pm\sigma \sqrt{a}$ and $x=\mu\pm\sigma \sqrt{b}$ are five critical points of $f$. Since $\lim_{x\to\pm \infty}f(x;\boldsymbol{\theta})=0$ and $a<b$, the critical points $x=\mu$ and $x=\mu\pm\sigma \sqrt{b}$ are modes and $x=\mu\pm\sigma \sqrt{a}$ are minimum points. Hence, the proof of Item (3) follows. 
\end{proof}


\subsection{Moments}

\begin{theorem}\label{prop-exp}
Let $X\sim {\rm TD}(\boldsymbol{\theta})$ and  $L:\mathbbm{R}\to\mathbbm{R}$ be a Borel-measurable function. Then, the expectation of random variable $L(X)$ with $X\leqslant b$ and $b\in\mathbbm{R}$, is given by 
\begin{align*}
\mathbbm{E}\big[\mathds{1}_{\{X\leqslant b\}} L(X)\big]
&=
{\rho\sigma\over Z_{\boldsymbol{\theta}}}\, 
\mathbbm{E}\big[\mathds{1}_{\{W\leqslant{b-\mu\over\sigma}\}}L(W_{\mu,\sigma})\big]
\\[0,3cm]
&+
{\delta\sigma\over Z_{\boldsymbol{\theta}}}
\left\{ 
\mathbbm{E}\left[\mathds{1}_{\{Y\leqslant ({b-\mu\over\alpha\sigma})^2,\, W\leqslant {b-\mu\over\sigma}\}} L(W_{\mu,\sigma})\right]
+
\mathbbm{E}\left[\mathds{1}_{\{Y\geqslant ({b-\mu\over\alpha\sigma})^2,\,W\leqslant -\alpha\sqrt{Y}\}} L(W_{\mu,\sigma})\right]
\right\}\mathds{1}_{\{b< \mu\}}
\\[0,3cm]
&+
{\delta\sigma\over Z_{\boldsymbol{\theta}}}
\left\{ 
\mathbbm{E}\left[\mathds{1}_{\{W\leqslant-\alpha\sqrt{Y}\}} L(W_{\mu,\sigma})\right]
+
\mathbbm{E}\left[\mathds{1}_{\{Y\leqslant ({b-\mu\over\alpha\sigma})^2,\,\alpha\sqrt{Y}\leqslant W\leqslant {b-\mu\over\sigma}\}} L(W_{\mu,\sigma})\right]
\right\}\mathds{1}_{\{b\geqslant \mu\}},
\end{align*}
\noindent
where $W_{\mu,\sigma}=\sigma W+\mu$, $W$ is a  continuous random variable with CDF $G$ (that for brevity we write  $W\stackrel{d}{=}G$), $Y\sim{\rm Gamma}(p,1)$, and $W$ and $Y$ are independent.
\end{theorem}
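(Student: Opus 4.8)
The plan is to evaluate the defining integral $\mathbbm{E}[\mathds{1}_{\{X\le b\}}L(X)]=\int_{-\infty}^{b}L(x)\,f(x;\boldsymbol{\theta})\,{\rm d}x$ directly, exploiting the fact that the weight $\rho+\delta T(\cdot;\alpha,p)$ in \eqref{pdfbgev2} is affine in $(\rho,\delta)$ and that the transformed function $T$ has a transparent probabilistic reading. After the substitution $w=(x-\mu)/\sigma$ (which also absorbs the support restriction $x\in\mu+\sigma D$ into the law of $W$), the integral takes the form $\tfrac{\sigma}{Z_{\boldsymbol{\theta}}}\int_{\{w\le(b-\mu)/\sigma\}} L(\sigma w+\mu)\,[\rho+\delta T(w;\alpha,p)]\,g(w)\,{\rm d}w$. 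Splitting along the two summands of the bracket and recalling that $g$ is the density of $W\stackrel{d}{=}G$, the $\rho$-summand is immediately $\tfrac{\rho\sigma}{Z_{\boldsymbol{\theta}}}\,\mathbbm{E}[\mathds{1}_{\{W\le(b-\mu)/\sigma\}}L(W_{\mu,\sigma})]$, i.e. the first term of the claimed identity; everything else comes from the $\delta$-summand.

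For the $\delta$-summand the key step is the observation that, by \eqref{tranf-G}, $T(w;\alpha,p)=\gamma(p,w^2/\alpha^2)/\Gamma(p)=\mathbbm{E}\big[\mathds{1}_{\{Y\le w^2/\alpha^2\}}\big]=\mathbbm{E}\big[\mathds{1}_{\{\alpha\sqrt{Y}\le|w|\}}\big]$ for $Y\sim{\rm Gamma}(p,1)$ taken independent of $W$. Substituting this and interchanging the order of integration (justified by nonnegativity after replacing $L$ by $|L|$, with the integrability needed for the general statement), the $\delta$-summand becomes $\tfrac{\delta\sigma}{Z_{\boldsymbol{\theta}}}\,\mathbbm{E}\big[\mathds{1}_{\{W\le(b-\mu)/\sigma\}}\,\mathds{1}_{\{\alpha\sqrt{Y}\le|W|\}}\,L(W_{\mu,\sigma})\big]$, a joint expectation over the independent pair $(W,Y)$.

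What remains is purely set-theoretic: rewrite the event $\{W\le c\}\cap\{\alpha\sqrt{Y}\le|W|\}$, with $c:=(b-\mu)/\sigma$, in the piecewise form displayed in the theorem. I would split on the sign of $W$ — on $\{W\ge0\}$ the second constraint reads $W\ge\alpha\sqrt{Y}$, on $\{W<0\}$ it reads $W\le-\alpha\sqrt{Y}$ — then intersect each piece with $\{W\le c\}$ and simplify, treating $c<0$ and $c\ge0$ separately. If $b<\mu$ then $c<0$, the branch $\{W\ge0\}$ is empty, and on $\{W<0\}$ one compares $c$ with $-\alpha\sqrt{Y}$: when $Y\le(c/\alpha)^2$ the binding bound is $W\le c$, when $Y\ge(c/\alpha)^2$ it is $W\le-\alpha\sqrt{Y}$, which reproduces the two terms carrying $\mathds{1}_{\{b<\mu\}}$. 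If $b\ge\mu$ then $c\ge0$, the branch $\{W<0\}$ contributes $\{W\le-\alpha\sqrt{Y}\}$ (automatically below $c$) and the branch $\{W\ge0\}$ contributes $\{\alpha\sqrt{Y}\le W\le c\}$ (which forces $Y\le(c/\alpha)^2$), reproducing the two terms carrying $\mathds{1}_{\{b\ge\mu\}}$. Assembling the three groups gives the identity.

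I expect this last, set-theoretic, step to be the only place requiring care: one must correctly identify which of the two competing bounds on $W$ is active in each region and verify that the constraints the theorem leaves implicit are genuinely redundant (for instance, why no extra $\{W\le c\}$ needs to be written in the term $\mathbbm{E}[\mathds{1}_{\{Y\ge(c/\alpha)^2,\,W\le-\alpha\sqrt{Y}\}}L(W_{\mu,\sigma})]$), using that overlap sets such as $\{Y=(c/\alpha)^2\}$ are null since $Y$ has a continuous law. The change of variables and the Fubini interchange are routine. As a sanity check one can take $L\equiv1$ and let $b\to\infty$: the $\delta$-bracket collapses to $1+\mathbbm{E}[G(-\alpha\sqrt{Y})]-\mathbbm{E}[G(\alpha\sqrt{Y})]$ and the identity reduces to the normalization \eqref{normalizator}.
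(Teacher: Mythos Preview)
Your proposal is correct and follows essentially the same approach as the paper. Both proofs perform the change of variables $w=(x-\mu)/\sigma$, split the integral along $\rho$ and $\delta$, represent $T(w;\alpha,p)$ as $\int_0^{w^2/\alpha^2}\,$(Gamma density), and then decompose the region $\{w\le c,\ y\le w^2/\alpha^2\}$ according to whether $c<0$ or $c\ge 0$; the paper carries this out as an iterated-integral/Fubini argument guided by a picture of the parabolic region, while you phrase the identical decomposition in terms of the event $\{W\le c\}\cap\{\alpha\sqrt{Y}\le |W|\}$ and split on the sign of $W$, but the resulting pieces are the same.
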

\begin{proof} By using the definition of expectation and by taking the change of variables $w=(x-\mu)/\sigma$ and 
	${\rm d}x=\sigma {\rm d}w$, we have
	\begin{align}\label{exp-b}
	\mathbbm{E}\big[\mathds{1}_{\{X\leqslant b\}} L(X)\big]
	&=
	{\rho\over Z_{\boldsymbol{\theta}}}\, 
	\int_{\sigma D+\mu}
	\mathds{1}_{\{x\leqslant b\}}
	L(x)
	g\biggl({x-\mu\over\sigma}\biggr)
	\, {\rm d}x	
\nonumber
\\[0,3cm]
&	
	+
	{\delta\over Z_{\boldsymbol{\theta}}}\, 
	\int_{\sigma D+\mu}
	\mathds{1}_{\{x\leqslant b\}}
	L(x)
	T\biggl({x-\mu\over\sigma};\alpha,p\biggr)
	g\biggl({x-\mu\over\sigma}\biggr)
	\, {\rm d}x \nonumber
		\\[0,3cm]
	&=
	{\rho\sigma\over Z_{\boldsymbol{\theta}}}
\int_{D}
	\mathds{1}_{\{w\leqslant {b-\mu\over\sigma}\}}
L(\sigma w+\mu)
g(w)
 {\rm d}w	
+
{\delta\sigma\over Z_{\boldsymbol{\theta}}}\
\iint\limits_{\substack{ \scriptscriptstyle w\leqslant {(b-\mu)/\sigma} \\ \scriptscriptstyle 0<y\leqslant w^2/\alpha^2
}} \mathds{1}_D(\omega) \tau(w,y) \,{\rm d}y {\rm d}w,
	\end{align}
where, for notational simplicity, we denote 
\begin{align*}
\tau(w,y)
=
L(\sigma w+\mu) g(w) \biggl[{y^{p-1}{\rm e}^{-y} \over \Gamma(p)}\biggr].
\end{align*}
There are two cases to consider according to whether $\xi\coloneqq(b-\mu)/\sigma<0$ or $\xi\coloneqq(b-\mu)/\sigma\geqslant 0$; see Figure \ref{Fig-Parabola} (a) and (b).
\begin{figure}[H]
	\centering
	\includegraphics[height=5.cm,width=14.3cm]{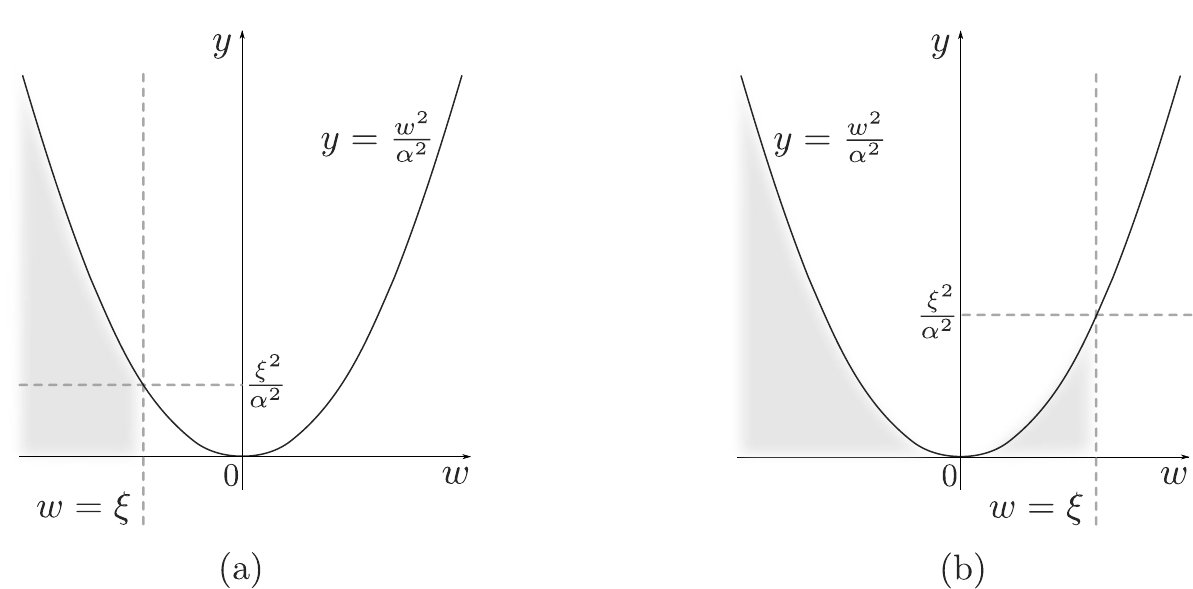}
	\caption{(a) $\{w\leqslant \xi, 0<y\leqslant w^2/\alpha^2, \xi<0\}$; \ (b) $\{w\leqslant \xi, 0<y\leqslant w^2/\alpha^2, \xi\geqslant 0\}$.}
	\label{Fig-Parabola}
\end{figure} 
\noindent
In the former case,
\begin{align*}
\iint\limits_{\substack{ \scriptscriptstyle w\leqslant {(b-\mu)/\sigma} \\ \scriptscriptstyle 0<y\leqslant w^2/\alpha^2
}} 
\mathds{1}_D(\omega) \tau(w,y) \,{\rm d}y {\rm d}w
&=
\int_{0}^{({b-\mu\over\alpha\sigma})^2}
\Biggl(
\int_{-\infty}^{{b-\mu\over\sigma}}
\mathds{1}_{D}(\omega)
\tau(w,y)
\, {\rm d}w
\Biggr)
{\rm d}y
\\[0,3cm]
&+
\int_{({b-\mu\over\alpha\sigma})^2}^{\infty}
\Biggl(
\int_{-\infty}^{-\alpha\sqrt{y}}
\mathds{1}_{D}(\omega)
\tau(w,y)
\, {\rm d}w
\Biggr)
{\rm d}y
\end{align*}
and in the latter case,	
\begin{align*}
\iint\limits_{\substack{ \scriptscriptstyle w\leqslant {(b-\mu)/\sigma} \\ \scriptscriptstyle 0<y\leqslant w^2/\alpha^2
}} \mathds{1}_D(\omega) \tau(w,y) \,{\rm d}y {\rm d}w
&=
\int_{0}^{\infty}
\Biggl(
\int_{-\infty}^{-\alpha\sqrt{y}}
\mathds{1}_{D}(\omega)
\tau(w,y)
\, {\rm d}w
\Biggr)
{\rm d}y
\\[0,3cm]
&+
\int_{0}^{({b-\mu\over\alpha\sigma})^2}
\Biggl(
\int_{\alpha\sqrt{y}}^{{b-\mu\over\sigma}}
\mathds{1}_{D}(\omega)
\tau(w,y)
\, {\rm d}w
\Biggr)
{\rm d}y.
\end{align*}

Hence, by combining the last two integral identities with \eqref{exp-b}, when $b<\mu$, we have
	\begin{align*}
\mathbbm{E}\big[\mathds{1}_{\{X\leqslant b\}} L(X)\big]
&=
{\rho\sigma\over Z_{\boldsymbol{\theta}}}
\int_{D}
\mathds{1}_{\{w\leqslant {b-\mu\over\sigma}\}}
L(\sigma w+\mu)
g(w)
{\rm d}w
+
{\delta\sigma\over Z_{\boldsymbol{\theta}}}
\int_{0}^{({b-\mu\over\alpha\sigma})^2}
\Biggl(
\int_{-\infty}^{{b-\mu\over\sigma}}
\mathds{1}_{D}(\omega)
\tau(w,y)
\, {\rm d}w
\Biggr)
\, {\rm d}y
\\[0,3cm]	
&+
{\delta\sigma\over Z_{\boldsymbol{\theta}}}
\int_{({b-\mu\over\alpha\sigma})^2}^{\infty}
\Biggl(
\int_{-\infty}^{-\alpha\sqrt{y}}
\mathds{1}_{D}(\omega)
\tau(w,y)
\, {\rm d}w
\Biggr)
\, {\rm d}y
\end{align*}
and, when $b\geqslant \mu$,
	\begin{align*}
\mathbbm{E}\big[\mathds{1}_{\{X\leqslant b\}} L(X)\big]
&=
{\rho\sigma\over Z_{\boldsymbol{\theta}}}
\int_{D}
\mathds{1}_{\{w\leqslant {b-\mu\over\sigma}\}}
L(\sigma w+\mu)
g(w)
{\rm d}w
+
{\delta\sigma\over Z_{\boldsymbol{\theta}}}
\int_{0}^{\infty}
\Biggl(
\int_{-\infty}^{-\alpha\sqrt{y}}
\mathds{1}_{D}(\omega)
\tau(w,y)
\, {\rm d}w
\Biggr)
\, {\rm d}y
\\[0,3cm]	
&+
{\delta\sigma\over Z_{\boldsymbol{\theta}}}
\int_{0}^{({b-\mu\over\alpha\sigma})^2}
\Biggl(
\int_{\alpha\sqrt{y}}^{{b-\mu\over\sigma}}
\mathds{1}_{D}(\omega)
\tau(w,y)
\, {\rm d}w
\Biggr)
\, {\rm d}y.
\end{align*}
Then there are $W\stackrel{d}{=}G$ and $Y\sim{\rm Gamma}(p,1)$ independent so that, for $b<\mu$, 
\begin{align*}
\mathbbm{E}\big[\mathds{1}_{\{X\leqslant b\}} L(X)\big]
&=
{\rho\sigma\over Z_{\boldsymbol{\theta}}}\, 
\mathbbm{E}\big[\mathds{1}_{\{W\leqslant{b-\mu\over\sigma}\}}L(W_{\mu,\sigma})\big]
\\[0,3cm]
&+
{\delta\sigma\over Z_{\boldsymbol{\theta}}}
\left\{ 
\mathbbm{E}\left[\mathds{1}_{\{Y\leqslant ({b-\mu\over\alpha\sigma})^2,\, W\leqslant {b-\mu\over\sigma}\}} L(W_{\mu,\sigma})\right]
+
\mathbbm{E}\left[\mathds{1}_{\{Y\geqslant ({b-\mu\over\alpha\sigma})^2,\,W\leqslant -\alpha\sqrt{Y}\}} L(W_{\mu,\sigma})\right]
\right\}
\end{align*}
and, for $b\geqslant\mu$, 
\begin{align*}
\mathbbm{E}\big[\mathds{1}_{\{X\leqslant b\}} L(X)\big]
&=
{\rho\sigma\over Z_{\boldsymbol{\theta}}}\, 
\mathbbm{E}\big[\mathds{1}_{\{W\leqslant{b-\mu\over\sigma}\}}L(W_{\mu,\sigma})\big]
\\[0,3cm]
&+
{\delta\sigma\over Z_{\boldsymbol{\theta}}}
\left\{ 
\mathbbm{E}\left[\mathds{1}_{\{W\leqslant-\alpha\sqrt{Y}\}} L(W_{\mu,\sigma})\right]
+
\mathbbm{E}\left[\mathds{1}_{\{Y\leqslant ({b-\mu\over\alpha\sigma})^2,\,\alpha\sqrt{Y}\leqslant W\leqslant {b-\mu\over\sigma}\}} L(W_{\mu,\sigma})\right]
\right\}.
\end{align*}
This completes the proof.
\end{proof}

By taking $b\to\infty$ in Theorem \ref{prop-exp}, with ${(X-\mu)/\sigma}$ instead of $X$, we get
\begin{corollary}\label{prop-exp-1}
Under the hypotheses of Theorem  \ref{prop-exp}, 
\begin{align*}
\mathbbm{E}\biggl[L\biggl({X-\mu\over\sigma}\biggr)\biggr]
=
{(\rho+\delta)\sigma\over Z_{\boldsymbol{\theta}}}\, \mathbbm{E}[L(W)]
+
{\delta\sigma\over Z_{\boldsymbol{\theta}}}
\left\{
\mathbbm{E}\left[\mathds{1}_{\{W\leqslant-\alpha\sqrt{Y}\}} L(W)\right]
-
\mathbbm{E}\left[\mathds{1}_{\{W\leqslant\alpha\sqrt{Y}\}} L(W)\right]
\right\}.
\end{align*}
\end{corollary}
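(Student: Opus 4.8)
The plan is to obtain Corollary~\ref{prop-exp-1} as a limiting case of Theorem~\ref{prop-exp}. First I would apply Theorem~\ref{prop-exp} not to $L$ but to the Borel function $\widetilde L$ defined by $\widetilde L(x)=L\bigl((x-\mu)/\sigma\bigr)$, and to a threshold $b>\mu$, so that only the branch multiplied by $\mathds{1}_{\{b\geqslant\mu\}}$ contributes. Since $\widetilde L(W_{\mu,\sigma})=\widetilde L(\sigma W+\mu)=L(W)$, this gives, for every $b>\mu$,
\begin{align*}
\mathbbm{E}\bigl[\mathds{1}_{\{X\leqslant b\}}L\bigl(\tfrac{X-\mu}{\sigma}\bigr)\bigr]
&=
{\rho\sigma\over Z_{\boldsymbol{\theta}}}\,\mathbbm{E}\bigl[\mathds{1}_{\{W\leqslant\frac{b-\mu}{\sigma}\}}L(W)\bigr]
+
{\delta\sigma\over Z_{\boldsymbol{\theta}}}\,\mathbbm{E}\bigl[\mathds{1}_{\{W\leqslant-\alpha\sqrt{Y}\}}L(W)\bigr]
\\
&\quad+
{\delta\sigma\over Z_{\boldsymbol{\theta}}}\,\mathbbm{E}\bigl[\mathds{1}_{\{Y\leqslant(\frac{b-\mu}{\alpha\sigma})^2,\,\alpha\sqrt{Y}\leqslant W\leqslant\frac{b-\mu}{\sigma}\}}L(W)\bigr].
\end{align*}

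Next I would send $b\to\infty$. On the left $\mathds{1}_{\{X\leqslant b\}}\uparrow 1$; on the right $\mathds{1}_{\{W\leqslant(b-\mu)/\sigma\}}\uparrow 1$ and, because both $(\tfrac{b-\mu}{\alpha\sigma})^2$ and $\tfrac{b-\mu}{\sigma}$ tend to $+\infty$, the last indicator increases to $\mathds{1}_{\{\alpha\sqrt{Y}\leqslant W\}}$. I would pass these limits inside the expectations by dominated convergence, using $|L(W)|$ as the integrable majorant (this is precisely the integrability making the displayed expectations meaningful; equivalently one may split $L=L^{+}-L^{-}$ and use monotone convergence on each part). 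This yields
\begin{align*}
\mathbbm{E}\bigl[L\bigl(\tfrac{X-\mu}{\sigma}\bigr)\bigr]
=
{\rho\sigma\over Z_{\boldsymbol{\theta}}}\,\mathbbm{E}[L(W)]
+
{\delta\sigma\over Z_{\boldsymbol{\theta}}}\Bigl\{\mathbbm{E}\bigl[\mathds{1}_{\{W\leqslant-\alpha\sqrt{Y}\}}L(W)\bigr]+\mathbbm{E}\bigl[\mathds{1}_{\{W\geqslant\alpha\sqrt{Y}\}}L(W)\bigr]\Bigr\}.
\end{align*}

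Finally I would rewrite $\mathds{1}_{\{W\geqslant\alpha\sqrt{Y}\}}=1-\mathds{1}_{\{W<\alpha\sqrt{Y}\}}$ and use that $W$ is a continuous random variable independent of $Y$, so that $\mathbbm{P}(W=\alpha\sqrt{Y})=0$ (condition on $Y$); hence $\mathbbm{E}\bigl[\mathds{1}_{\{W\geqslant\alpha\sqrt{Y}\}}L(W)\bigr]=\mathbbm{E}[L(W)]-\mathbbm{E}\bigl[\mathds{1}_{\{W\leqslant\alpha\sqrt{Y}\}}L(W)\bigr]$. Substituting this and merging the two resulting copies of $\mathbbm{E}[L(W)]$ into the coefficient $(\rho+\delta)\sigma/Z_{\boldsymbol{\theta}}$ gives the stated identity. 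The \emph{only} delicate points are the interchange of limit and expectation and the harmless passage from a strict to a non-strict inequality on the diagonal $\{W=\alpha\sqrt{Y}\}$; everything else is bookkeeping.
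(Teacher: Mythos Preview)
Your proposal is correct and follows exactly the route the paper indicates: the paper's entire proof reads ``By taking $b\to\infty$ in Theorem~\ref{prop-exp}, with $(X-\mu)/\sigma$ instead of $X$,'' and you have simply fleshed out these two steps---applying the theorem to $\widetilde L(x)=L((x-\mu)/\sigma)$ so that $\widetilde L(W_{\mu,\sigma})=L(W)$, then letting $b\to\infty$ in the $b\geqslant\mu$ branch and regrouping. Your additional care about dominated convergence and the null set $\{W=\alpha\sqrt{Y}\}$ is more detail than the paper supplies, but the underlying argument is the same.
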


\begin{corollary}\label{prop-gen}
Under the hypotheses of Theorem  \ref{prop-exp}, if 
\begin{itemize}
\item[(a)] $b\to\infty$ and  $L(x)=1$, $\forall x\in \sigma D+\mu$, then the formula \eqref{normalizator} for the normalizing factor $Z_{\boldsymbol{\theta}}$ is obtained.
\item[(b)] $b=x\in\mathbbm{R}$ fixed and $L(x)=1$, $\forall x\in \sigma D+\mu$, then the formula \eqref{CDF} of the CDF $F(x;\boldsymbol{\theta})$ is obtained.
\end{itemize}
\end{corollary}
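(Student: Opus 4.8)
The plan is to specialize the already-proved expectation formulas to the constant function $L\equiv 1$ and then evaluate each expectation that appears using the independence of $W\stackrel{d}{=}G$ and $Y\sim{\rm Gamma}(p,1)$, together with the elementary identity $\mathbbm{P}(Y\leqslant u)=\gamma(p,u)/\Gamma(p)$ for $u>0$. No new idea is needed; the content is purely bookkeeping.

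For item (a) I would invoke Corollary~\ref{prop-exp-1} with $L\equiv 1$, so that its left-hand side equals $1$ and the term $\mathbbm{E}[L(W)]$ equals $1$. Conditioning on $Y$ and using independence gives $\mathbbm{E}[\mathds{1}_{\{W\leqslant-\alpha\sqrt{Y}\}}]=\mathbbm{E}[G(-\alpha\sqrt{Y})]$ and $\mathbbm{E}[\mathds{1}_{\{W\leqslant\alpha\sqrt{Y}\}}]=\mathbbm{E}[G(\alpha\sqrt{Y})]$. Substituting these and clearing the denominator $Z_{\boldsymbol{\theta}}$ yields $Z_{\boldsymbol{\theta}}=(\rho+\delta)\sigma+\delta\sigma\{\mathbbm{E}[G(-\alpha\sqrt{Y})]-\mathbbm{E}[G(\alpha\sqrt{Y})]\}$, which is exactly \eqref{normalizator}.

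For item (b) I would set $b=x\in\mathbbm{R}$ fixed and $L\equiv 1$ in Theorem~\ref{prop-exp}, so that the left-hand side becomes $F(x;\boldsymbol{\theta})=\mathbbm{P}(X\leqslant x)$. The first summand is immediately $(\rho\sigma/Z_{\boldsymbol{\theta}})\,G((x-\mu)/\sigma)$. For the remaining summands I treat the regimes $x<\mu$ and $x\geqslant\mu$ separately, conditioning on $Y$ in each conditional expectation: a factor $\mathds{1}_{\{W\leqslant c\}}$ contributes $G(c)$, the event $\{\alpha\sqrt{Y}\leqslant W\leqslant (x-\mu)/\sigma\}$ contributes $G((x-\mu)/\sigma)-G(\alpha\sqrt{Y})$, the constraint $\{Y\leqslant((x-\mu)/(\alpha\sigma))^2\}$ once isolated contributes $\mathbbm{P}(Y\leqslant((x-\mu)/\sigma)^2/\alpha^2)=\gamma(p,((x-\mu)/\sigma)^2/\alpha^2)/\Gamma(p)=T((x-\mu)/\sigma;\alpha,p)$ by \eqref{tranf-G}, and an event of the type $\{Y\geqslant((x-\mu)/(\alpha\sigma))^2,\,W\leqslant-\alpha\sqrt{Y}\}$ I split as $\mathbbm{E}[G(-\alpha\sqrt{Y})]-\mathbbm{E}[\mathds{1}_{\{Y\leqslant((x-\mu)/(\alpha\sigma))^2\}}G(-\alpha\sqrt{Y})]$. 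Collecting the pieces, the $\delta$-part in the case $x<\mu$ equals $(\delta\sigma/Z_{\boldsymbol{\theta}})\{\mathbbm{E}[G(-\alpha\sqrt{Y})]+T((x-\mu)/\sigma;\alpha,p)G((x-\mu)/\sigma)-\mathbbm{E}[\mathds{1}_{\{Y\leqslant((x-\mu)/(\sigma\alpha))^2\}}G(-\alpha\sqrt{Y})]\}$, while in the case $x\geqslant\mu$ it equals the same expression with $G(-\alpha\sqrt{Y})$ replaced by $G(\alpha\sqrt{Y})$ only in the last, subtracted, term; multiplying by the indicators $\mathds{1}_{\{x<\mu\}}$, $\mathds{1}_{\{x\geqslant\mu\}}$ and adding the first summand reproduces \eqref{CDF} verbatim.

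The only point requiring attention is to keep the two case distinctions consistent — the split $b<\mu$ versus $b\geqslant\mu$ coming from Theorem~\ref{prop-exp} must line up with the branches carrying $\mathds{1}_{\{x<\mu\}}$ and $\mathds{1}_{\{x\geqslant\mu\}}$ in \eqref{CDF} — and to recognize the distribution function of ${\rm Gamma}(p,1)$ hidden inside $T$. Beyond this, I anticipate no genuine obstacle.
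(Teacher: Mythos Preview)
Your proposal is correct and is precisely the argument the paper has in mind: the corollary is stated without proof because it is an immediate specialization of Theorem~\ref{prop-exp} (and its limit form, Corollary~\ref{prop-exp-1}) with $L\equiv 1$, and your conditioning-on-$Y$ computations supply exactly the bookkeeping that the paper leaves implicit. There is no alternative route here to compare.
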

 
By taking  $b\to\infty$  in Theorem \ref{prop-exp}, with $L(x)=x^n$, $\forall x\in \sigma D+\mu$ and $n\geqslant 1$ integer, by a binomial expansion, we have the next formula for the moments. 
\begin{corollary}\label{moments}
The $n$-th moment of $X\sim {\rm TD}(\boldsymbol{\theta})$ is  written as
\begin{align*}
\mathbbm{E}(X^n)
=
\sum_{k=0}^{n}\binom{n}{k} \mu^{n-k}\sigma^{k}
\left\{
{
{({{\rho}+\delta})\sigma\over Z_{\boldsymbol{\theta}}}\, \mathbbm{E}(W^k)
+
{\delta \sigma\over Z_{\boldsymbol{\theta}}}
\left[
\mathbbm{E}\left(\mathds{1}_{\{W\leqslant-\alpha\sqrt{Y}\}} W^k\right)
-
\mathbbm{E}\left(\mathds{1}_{\{W\leqslant\alpha\sqrt{Y}\}} W^k\right)
\right]
}
\right\}.
\end{align*}
The above formula informs that the moments of $X$ (whenever  they exist) depends on the existence of  moments of  $\mathds{1}_{\{W\leqslant \pm\alpha\sqrt{Y}\}} W$, with $W\stackrel{d}{=}G$ and $Y\sim{\rm Gamma}(p,1)$ independent.
\end{corollary}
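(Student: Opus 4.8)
The plan is to obtain the moment formula directly from Corollary~\ref{prop-exp-1} combined with a binomial expansion, as the statement of the corollary already indicates. First I would write $X=(X-\mu)+\mu$ and raise both sides to the $n$-th power,
\begin{align*}
X^n
=
\big[(X-\mu)+\mu\big]^n
&=
\sum_{k=0}^{n}\binom{n}{k}\,\mu^{n-k}(X-\mu)^k
\\
&=
\sum_{k=0}^{n}\binom{n}{k}\,\mu^{n-k}\sigma^{k}\biggl({X-\mu\over\sigma}\biggr)^{\!k}.
\end{align*}
Taking expectations of both sides and using linearity of $\mathbbm{E}(\cdot)$ over this finite sum, the evaluation of $\mathbbm{E}(X^n)$ reduces to that of $\mathbbm{E}\big[((X-\mu)/\sigma)^k\big]$ for $k=0,1,\dots,n$.

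Next I would apply Corollary~\ref{prop-exp-1} with the Borel-measurable function $L(x)=x^k$, which gives
\begin{align*}
\mathbbm{E}\biggl[\biggl({X-\mu\over\sigma}\biggr)^{\!k}\biggr]
=
{(\rho+\delta)\sigma\over Z_{\boldsymbol{\theta}}}\,\mathbbm{E}(W^k)
+
{\delta\sigma\over Z_{\boldsymbol{\theta}}}
\left\{
\mathbbm{E}\left[\mathds{1}_{\{W\leqslant-\alpha\sqrt{Y}\}}W^k\right]
-
\mathbbm{E}\left[\mathds{1}_{\{W\leqslant\alpha\sqrt{Y}\}}W^k\right]
\right\},
\end{align*}
with $W\stackrel{d}{=}G$ and $Y\sim{\rm Gamma}(p,1)$ independent. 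Substituting this into the binomial expansion above and pulling the common factor $\binom{n}{k}\mu^{n-k}\sigma^{k}$ through the sum yields exactly the asserted closed form; the final remark of the corollary is then immediate, since the only quantities appearing inside the braces are $\mathbbm{E}(W^k)$ and the truncated moments $\mathbbm{E}[\mathds{1}_{\{W\leqslant\pm\alpha\sqrt{Y}\}}W^k]$.

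The only step beyond routine algebra is the control of integrability. Because Corollary~\ref{prop-exp-1} is itself derived by letting $b\to\infty$ in Theorem~\ref{prop-exp}, I would work under the standing assumption, implicit in the corollary's qualifier, that $\mathbbm{E}(|X|^n)<\infty$. Since $f(x;\boldsymbol{\theta})$ is bounded below by a positive constant multiple of $g((x-\mu)/\sigma)$ outside a compact set (using $\rho+\delta T>0$ and $T\to 1$ at $\pm\infty$), this forces $\mathbbm{E}(|W|^k)<\infty$ for every $k\leqslant n$, and then the pointwise bound $|\mathds{1}_{\{W\leqslant\pm\alpha\sqrt{Y}\}}W^k|\leqslant|W|^k$ makes each truncated moment finite as well. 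Granting this, every expectation above is finite, the interchange of the finite sum with $\mathbbm{E}(\cdot)$ is trivial, and the passage $b\to\infty$ is justified by dominated convergence. Thus the main (and essentially only) obstacle is tracking these integrability conditions, which is precisely the caveat recorded in the last sentence of the corollary.
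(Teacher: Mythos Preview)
Your proposal is correct and follows essentially the same route as the paper: the paper's one-line argument is ``take $b\to\infty$ in Theorem~\ref{prop-exp} with $L(x)=x^n$ and use a binomial expansion,'' which amounts to expanding $L(W_{\mu,\sigma})=(\sigma W+\mu)^n$ inside the expectation, whereas you expand $X^n=(\sigma\cdot(X-\mu)/\sigma+\mu)^n$ first and then invoke Corollary~\ref{prop-exp-1} term by term with $L(x)=x^k$; the two are the same computation in reversed order. Your added integrability discussion is more careful than what the paper records, but not required for the level of rigor the paper adopts.
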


\subsection{Entropies}
The Tsallis \cite{Tsallis1988} entropy associated with the random variable $X\sim {\rm TD}(\boldsymbol{\theta})$ is defined as
\begin{align*}
S_q(X)=
\begin{cases}\displaystyle
-\int_{\sigma D+\mu} f^q(x;\boldsymbol{\theta}) \log_q f(x;\boldsymbol{\theta}) \, {\rm d}x, & \text{ if} \ q\neq 1,
\\[0,6cm]\displaystyle
-\int_{\sigma D+\mu} f(x;\boldsymbol{\theta}) 
\log f(x;\boldsymbol{\theta}) \, {\rm d}x,& \text{ if} \  q=1,
\end{cases}
\end{align*}
where, for $x>0$,
\begin{align}\label{log-def}
\log_q(x) =
\begin{cases}\displaystyle 
{x^{1-q}-1\over 1-q}, & \text{ if} \ q\neq 1,
\\[0,2cm]
\log(x), & \text{ if} \ q= 1,
\end{cases}
\end{align}
represents a Box-Cox transformation in statistics (often called deformed logarithm \cite{Ferrari2010,Tsallis2009}). Since $\log_q(x)\to \log(x)$ when $q\to 1$, we have $S_q(X)\to S_1(X)$ when $q\to 1$. That is, when $q\to 1$ the usual definition of Shannon's entropy $S_1(X)$ \cite{Shannon1948}  is recovered.

\begin{proposition}
Under the hypotheses of Theorem  \ref{prop-exp}, 
\begin{align*}
\mathbbm{E}\big[f^{q-1}(X;\boldsymbol{\theta})\big]
&=
{({\rho}+\delta)\sigma\over Z_{\boldsymbol{\theta}}}\, 
\mathbbm{E}\big[f^{q-1}(\sigma W+\mu;\boldsymbol{\theta})\big]
\\[0,2cm]
&+
{{\delta}\sigma \over Z_{\boldsymbol{\theta}}}\,
\left\{
\mathbbm{E}\left[\mathds{1}_{\{W\leqslant-\alpha\sqrt{Y}\}} f^{q-1}(\sigma W+\mu;\boldsymbol{\theta})\right]
-
\mathbbm{E}\left[\mathds{1}_{\{W\leqslant\alpha\sqrt{Y}\}} f^{q-1}(\sigma W+\mu;\boldsymbol{\theta})\right]
\right\},
\end{align*}	
\noindent
where $W\stackrel{d}{=}G$, $Y\sim{\rm Gamma}(p,1)$, and $W$ and $Y$ are independent.	
Consequently, the Tsallis entropy of $X$ (whenever it exists) depends on the existence of  truncated moments of $f^{q-1}(W;\boldsymbol{\theta})$.
\end{proposition}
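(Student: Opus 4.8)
The plan is to read this proposition as an immediate specialization of Corollary \ref{prop-exp-1}. First I would fix the Borel-measurable function $L:\mathbbm{R}\to\mathbbm{R}$ defined by $L(x)=f^{q-1}(\sigma x+\mu;\boldsymbol{\theta})$ (this is measurable because $f(\cdot\,;\boldsymbol{\theta})$ is continuous on $\sigma D+\mu$, hence so is $x\mapsto f^{q-1}(\sigma x+\mu;\boldsymbol{\theta})$ on $D$). With this choice one has the two identities $L\bigl((X-\mu)/\sigma\bigr)=f^{q-1}(X;\boldsymbol{\theta})$ and $L(W)=f^{q-1}(\sigma W+\mu;\boldsymbol{\theta})$. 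Substituting $L$ into the formula of Corollary \ref{prop-exp-1},
\[
\mathbbm{E}\Bigl[L\Bigl({X-\mu\over\sigma}\Bigr)\Bigr]
=
{(\rho+\delta)\sigma\over Z_{\boldsymbol{\theta}}}\,\mathbbm{E}[L(W)]
+
{\delta\sigma\over Z_{\boldsymbol{\theta}}}
\Bigl\{
\mathbbm{E}\bigl[\mathds{1}_{\{W\leqslant-\alpha\sqrt{Y}\}}L(W)\bigr]
-
\mathbbm{E}\bigl[\mathds{1}_{\{W\leqslant\alpha\sqrt{Y}\}}L(W)\bigr]
\Bigr\},
\]
produces the asserted expression verbatim, with $W\stackrel{d}{=}G$ and $Y\sim{\rm Gamma}(p,1)$ independent as in Theorem \ref{prop-exp}.

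For the concluding sentence I would make explicit the link with the Tsallis entropy. Using the deformed logarithm \eqref{log-def}, for $q\neq 1$ one has $f^{q}\log_q f=(f-f^{q})/(1-q)$, so that, after integrating over $\sigma D+\mu$ and using $\int_{\sigma D+\mu} f^{q}(x;\boldsymbol{\theta})\,{\rm d}x=\mathbbm{E}[f^{q-1}(X;\boldsymbol{\theta})]$, we obtain $S_q(X)=\bigl(\mathbbm{E}[f^{q-1}(X;\boldsymbol{\theta})]-1\bigr)/(1-q)$. Hence the finiteness of $S_q(X)$ is equivalent to that of $\mathbbm{E}[f^{q-1}(X;\boldsymbol{\theta})]$, and by the displayed formula the latter is controlled by the existence of the truncated quantities $\mathbbm{E}\bigl[\mathds{1}_{\{W\leqslant\pm\alpha\sqrt{Y}\}}f^{q-1}(\sigma W+\mu;\boldsymbol{\theta})\bigr]$ together with $\mathbbm{E}[f^{q-1}(\sigma W+\mu;\boldsymbol{\theta})]$, i.e.\ on truncated moments of $f^{q-1}(W;\boldsymbol{\theta})$.

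There is essentially no serious obstacle here: the argument is a one-line substitution into an already proved corollary. The only points that deserve an explicit remark are the measurability of the integrand $L$ (which is immediate from continuity of $f$) and the observation $\mathbbm{E}[f^{q-1}(X;\boldsymbol{\theta})]=\int f^{q}(x;\boldsymbol{\theta})\,{\rm d}x$, which is what ties the general moment identity of Corollary \ref{prop-exp-1} to the Tsallis entropy. No integrability or convergence subtlety arises at this stage, since the statement is phrased "whenever it exists''.
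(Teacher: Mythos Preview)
Your proof is correct and matches the paper's approach. The paper's one-line proof takes $b\to\infty$ in Theorem~\ref{prop-exp} with $L(x)=f^{q-1}(x;\boldsymbol{\theta})$, while you go through Corollary~\ref{prop-exp-1} with $L(x)=f^{q-1}(\sigma x+\mu;\boldsymbol{\theta})$; since Corollary~\ref{prop-exp-1} is itself just Theorem~\ref{prop-exp} with $b\to\infty$ and an affine reparametrization, the two routes are identical up to this trivial relabeling. Your additional remarks on measurability and the explicit computation $S_q(X)=(\mathbbm{E}[f^{q-1}(X;\boldsymbol{\theta})]-1)/(1-q)$ tying the identity to the Tsallis entropy are accurate and make the concluding sentence more transparent than in the paper.
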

\begin{proof}
By taking $b\to\infty$ in Theorem \ref{prop-exp}, with $L(x)=f^{q-1}(x;\boldsymbol{\theta})$, $\forall x\in \sigma D+\mu$,  the proof follows.
\end{proof}

For a formal proof of next result, see Section \ref{Ap-1} of Appendix.
\begin{proposition}\label{existence-tsallis}
If $S_q(W)$ exists and $q>0,$ then $S_q(X)$, with $X\sim {\rm TD}(\boldsymbol{\theta})$, also exists.
\end{proposition}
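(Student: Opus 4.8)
The plan is to transfer the existence of $S_q(W)$ to $S_q(X)$ using only the fact that $0<T(w;\alpha,p)<1$ for a.e.\ $w\in D$, which gives $0\leqslant \rho+\delta\, T(w;\alpha,p)\leqslant \rho+\delta$ and hence, because $q>0$ makes $t\mapsto t^q$ nondecreasing on $[0,\infty)$,
\[
\bigl[\rho+\delta\, T(w;\alpha,p)\bigr]^q\leqslant (\rho+\delta)^q\qquad\text{for a.e.\ }w\in D .
\]
I will also use the immediate fact that existence of $S_q(W)$ amounts to $\int_D g^q(w)\,{\rm d}w<\infty$ when $q\neq1$ and to $\int_D g(w)\,\lvert\log g(w)\rvert\,{\rm d}w<\infty$ when $q=1$ (and likewise for $S_q(X)$ with $f$ and $\sigma D+\mu$ in place of $g$ and $D$). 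The argument then splits according to whether $q\neq1$ or $q=1$.

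For $q\neq 1$, the identity $\log_q(t)=(t^{1-q}-1)/(1-q)$ yields the pointwise relation $f^q\log_q f=(f-f^q)/(1-q)$. Changing variables $w=(x-\mu)/\sigma$ in \eqref{pdfbgev2} and invoking the displayed bound,
\[
\int_{\sigma D+\mu}f^q(x;\boldsymbol{\theta})\,{\rm d}x
=\frac{\sigma^{q+1}}{Z_{\boldsymbol{\theta}}^{\,q}}\int_D\bigl[\rho+\delta\, T(w;\alpha,p)\bigr]^q g^q(w)\,{\rm d}w
\leqslant \frac{\sigma^{q+1}(\rho+\delta)^q}{Z_{\boldsymbol{\theta}}^{\,q}}\int_D g^q(w)\,{\rm d}w<\infty ,
\]
where finiteness of the last integral is exactly the hypothesis on $S_q(W)$. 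Combined with $\int f\,{\rm d}x=1$, this makes $(f-f^q)/(1-q)$ absolutely integrable, so $S_q(X)=\frac{1}{1-q}\bigl[\int_{\sigma D+\mu}f^q(x;\boldsymbol{\theta})\,{\rm d}x-1\bigr]$ is a finite number.

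For $q=1$, I would substitute $w=(x-\mu)/\sigma$ in $S_1(X)=-\int f\log f\,{\rm d}x$ and split $\log f=\log(\sigma/Z_{\boldsymbol{\theta}})+\log\Psi(w)+\log g(w)$ with $\Psi(w):=\rho+\delta\, T(w;\alpha,p)\in(0,\rho+\delta]$, reaching
\[
S_1(X)=-\frac{\sigma^2}{Z_{\boldsymbol{\theta}}}\left[\log\!\Bigl(\tfrac{\sigma}{Z_{\boldsymbol{\theta}}}\Bigr)\int_D\Psi g\,{\rm d}w+\int_D\Psi g\log\Psi\,{\rm d}w+\int_D\Psi g\log g\,{\rm d}w\right].
\]
The three integrals are all finite: $\int_D\Psi g\,{\rm d}w\leqslant(\rho+\delta)\int_D g\,{\rm d}w=\rho+\delta$; $\int_D\Psi g\,\lvert\log g\rvert\,{\rm d}w\leqslant(\rho+\delta)\int_D g\,\lvert\log g\rvert\,{\rm d}w<\infty$ by the hypothesis on $S_1(W)$; and $\int_D\Psi g\,\lvert\log\Psi\rvert\,{\rm d}w\leqslant\bigl(\sup_{0\leqslant t\leqslant\rho+\delta}\lvert t\log t\rvert\bigr)\int_D g\,{\rm d}w<\infty$, because $t\mapsto t\log t$ (with $0\log0:=0$) is continuous, hence bounded, on the compact interval $[0,\rho+\delta]$. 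These absolute bounds also legitimise the termwise splitting, and we conclude that $S_1(X)$ exists.

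The one genuinely delicate point is the case $q=1$ with $\rho=0$: then $\Psi(w)\to0$ as $w\to0$, so $\log\Psi$, and with it $\log f$, is unbounded near the origin and cannot be dominated by a constant; the saving feature is that the \emph{product} $\Psi\log\Psi$ stays bounded, which is precisely what the last estimate exploits. (For $q\neq1$ one could alternatively deduce finiteness from the formula for $\mathbbm{E}\big[f^{q-1}(X;\boldsymbol{\theta})\big]$ in the previous proposition together with $\lvert\mathbbm{E}[\mathds{1}_A f^{q-1}]\rvert\leqslant\mathbbm{E}[f^{q-1}]$; however, that route requires $\rho>0$ when $q<1$, whereas the estimate above treats $\rho\geqslant0$ uniformly.)
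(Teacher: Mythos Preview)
Your argument for $q\neq 1$ is exactly the paper's proof: both rewrite $S_q(X)$ as $\frac{1}{q-1}\bigl[1-\int f^q\,{\rm d}x\bigr]$, use $0\leqslant f(x;\boldsymbol{\theta})\leqslant \frac{(\rho+\delta)\sigma}{Z_{\boldsymbol{\theta}}}\,g\bigl(\frac{x-\mu}{\sigma}\bigr)$ (equivalently $[\rho+\delta T]^q\leqslant(\rho+\delta)^q$), and the monotonicity of $t\mapsto t^q$ for $q>0$ to bound $\int f^q\,{\rm d}x$ by a constant times $\int_D g^q(w)\,{\rm d}w$, which is finite by hypothesis.

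Where you go further is the case $q=1$. The paper's written proof treats only $q\neq 1$ and then, in a separate remark, obtains the Shannon case by letting $q\to 1$. Your direct argument---splitting $\log f=\log(\sigma/Z_{\boldsymbol{\theta}})+\log\Psi+\log g$ and controlling $\int_D\Psi g\,|\log\Psi|\,{\rm d}w$ via the boundedness of $t\mapsto t\log t$ on $[0,\rho+\delta]$---is cleaner and does not rely on any limit interchange; it also transparently handles $\rho=0$, as you note. So the two approaches coincide on the core inequality, but your treatment of $q=1$ is more self-contained than the paper's.
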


\begin{remark}\label{rem-H1}
As consequence of Proposition \ref{existence-tsallis}, by letting $q\to 1$, the Shannon entropy $S_1(X)$  exists whenever $S_1(W)$ also exists.
\end{remark}

For a rigorous proof of next result, see Section \ref{Ap-1} of  Appendix.
\begin{proposition}\label{prop-Shannon entropy}
Under the hypotheses of Theorem  \ref{prop-exp},  the Shannon entropy of $X\sim {\rm TD}(\boldsymbol{\theta})$ is written as 
	\begin{align*}
	S_1(X)&=
	\log(Z_{\boldsymbol{\theta}})
	-
	{
			{({\rho}+\delta)\sigma \over Z_{\boldsymbol{\theta}}}\,
			\mathbbm{E}\big[\log(\rho+\delta T(W;\alpha,p))\big]		
}
	\\[0,2cm]
	&
	-
		{
		{{\delta}\sigma \over Z_{\boldsymbol{\theta}}}\,
		\left\{
		\mathbbm{E}\left[\mathds{1}_{\{W\leqslant-\alpha\sqrt{Y}\}} \log(\rho+\delta T(W;\alpha,p))\right]
		-
		\mathbbm{E}\left[\mathds{1}_{\{W\leqslant\alpha\sqrt{Y}\}} \log(\rho+\delta T(W;\alpha,p))\right]
		\right\}
	}
	\\[0,2cm]
&
+
	{
	{({\rho}+\delta)\sigma\over Z_{\boldsymbol{\theta}}}\, 
	S_1(W)
	+
	{{\delta}\sigma\over Z_{\boldsymbol{\theta}}}\,
	\left\{
	{\mathbbm{E}}
	\left[
	S_1\big(\mathds{1}_{ \{W\leqslant -\alpha\sqrt{Y}\} } W\big)
	\right]
	-
	{\mathbbm{E}}
	\left[
	S_1\big(\mathds{1}_{ \{W\leqslant \alpha\sqrt{Y}\} } W\big)
	\right]
	\right\}
},
	\end{align*}
where $T$ was defined in \eqref{tranf-G}.
\end{proposition}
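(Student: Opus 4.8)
The plan is to compute $S_1(X)=-\mathbbm{E}[\log f(X;\boldsymbol{\theta})]$ directly from the explicit density \eqref{pdfbgev2}. Taking logarithms there and isolating the normalizing constant gives
\begin{align*}
-\log f(x;\boldsymbol{\theta})
=
\log Z_{\boldsymbol{\theta}}
-\log\!\left[\rho+\delta\, T\!\left(\frac{x-\mu}{\sigma};\alpha,p\right)\right]
-\log g\!\left(\frac{x-\mu}{\sigma}\right),
\end{align*}
so that $S_1(X)=\log Z_{\boldsymbol{\theta}}-\mathbbm{E}\big[\log(\rho+\delta T((X-\mu)/\sigma;\alpha,p))\big]-\mathbbm{E}\big[\log g((X-\mu)/\sigma)\big]$. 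Each of the last two expectations has the form $\mathbbm{E}[L((X-\mu)/\sigma)]$, which is exactly the quantity evaluated by Corollary \ref{prop-exp-1}; equivalently one may apply that corollary a single time with $L(w)=\log\{[\rho+\delta T(w;\alpha,p)]\,g(w)\}$ and split afterwards, the transfer identity being linear in $L$.

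The second step is to invoke Corollary \ref{prop-exp-1}, which rewrites $\mathbbm{E}[L((X-\mu)/\sigma)]$ as $\tfrac{(\rho+\delta)\sigma}{Z_{\boldsymbol{\theta}}}\,\mathbbm{E}[L(W)]$ plus $\tfrac{\delta\sigma}{Z_{\boldsymbol{\theta}}}$ times $\mathbbm{E}[\mathds{1}_{\{W\leqslant-\alpha\sqrt{Y}\}}L(W)]-\mathbbm{E}[\mathds{1}_{\{W\leqslant\alpha\sqrt{Y}\}}L(W)]$, with $W\stackrel{d}{=}G$ and $Y\sim{\rm Gamma}(p,1)$ independent. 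Feeding in $L=\log(\rho+\delta T(\cdot;\alpha,p))$ and multiplying by $-1$ reproduces the second and third lines of the claimed formula verbatim. Feeding in $L=\log g$, multiplying by $-1$, and then using $-\mathbbm{E}[\log g(W)]=S_1(W)$ together with the shorthand $\mathbbm{E}\big[S_1(\mathds{1}_{\{W\leqslant\pm\alpha\sqrt{Y}\}}W)\big]=-\mathbbm{E}[\mathds{1}_{\{W\leqslant\pm\alpha\sqrt{Y}\}}\log g(W)]$ — which is merely Fubini over the independent $Y$ applied to the partial-entropy integral $-\int_{-\infty}^{\pm\alpha\sqrt{y}}g(w)\log g(w)\,{\rm d}w$ — reproduces the last line. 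Collecting these with the constant $\log Z_{\boldsymbol{\theta}}$ yields the asserted identity.

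Before any of this is permissible one must check that $\mathbbm{E}[|\log f(X;\boldsymbol{\theta})|]<\infty$, so that the integral defining $S_1(X)$ may be split into three finite pieces and Corollary \ref{prop-exp-1} (and Fubini) applied to each. This is exactly the content of Proposition \ref{existence-tsallis} / Remark \ref{rem-H1} under the standing assumption that $S_1(W)$ exists: since $0<T<1$ forces $\rho\leqslant\rho+\delta T(w;\alpha,p)\leqslant\rho+\delta$, the term $\log(\rho+\delta T(W;\alpha,p))$ is bounded; $\log g(W)$ is integrable by hypothesis; and $Z_{\boldsymbol{\theta}}$ is a finite positive constant. Hence all three contributions are finite and the term-by-term computation is legitimate.

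The only place where I expect to need care is the bookkeeping of the six truncated-expectation terms — which of them carries the $\tfrac{\delta\sigma}{Z_{\boldsymbol{\theta}}}$ prefactor and which $\pm$ sign it inherits from the difference $G(-\alpha\sqrt{Y})-G(\alpha\sqrt{Y})$ built into Corollary \ref{prop-exp-1}, together with the matching signs in the $S_1$-valued terms of the last line. There is no analytic obstacle beyond the integrability check above; the detailed and rigorous verification is the one carried out in Section \ref{Ap-1} of the Appendix.
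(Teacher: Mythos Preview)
Your proposal is correct and follows essentially the same route as the paper: decompose $-\log f(X;\boldsymbol{\theta})$ into $\log Z_{\boldsymbol{\theta}}$, $-\log(\rho+\delta T)$ and $-\log g$, then apply Corollary~\ref{prop-exp-1} separately to the latter two pieces and reassemble. Your added integrability discussion is a welcome addition the paper omits; note only that the boundedness argument for $\log(\rho+\delta T(W;\alpha,p))$ needs $\rho>0$, whereas the model allows $\rho=0$ (with $\delta>0$), in which case one must instead observe that the logarithmic singularity of $\log T(w;\alpha,p)$ at $w=0$ is integrable against $g$.
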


\subsection{Stochastic representation}\label{stocrep}
Let $h(u)$, $0<u<1$, be a PDF with corresponding CDF $H$. 
Let $\mathcal{S}: D\to (0,1)$ be an injective and increasing transformation, where $D$ is a non-empty set of $\mathbbm{R}$. We consider the following  CDF:
\begin{align}\label{F-t-CDF}
F(z)=\int_{0}^{\mathcal{S}(z)} h(u) {\rm d}u
=H(\mathcal{S}(z)),\quad z\in D.
\end{align}
We also define by $f$ to the corresponding PDF of $F$. That is, $F'(z)=f(z)=h(\mathcal{S}(z))\mathcal{S}'(z)$ for almost all $z\in D$.

We define the PDF $h$ as follows
\begin{align}\label{def-h-pdf}
h(u)
=
{\sigma\over Z_{\boldsymbol{\theta}}} \, 
{\big[
\rho+\delta
T\big(G^{-1}(u);\alpha,p\big)
\big]}, \quad 0<u<1, \rho\geqslant 0, \delta\geqslant 0, \alpha>0, \sigma>0,
\end{align}
where $G$ and $G^{-1}$, respectively, are the CDF defined in Section \ref{A class of CDF} and its inverse function, and $T$ is as in \eqref{tranf-G}. When $\delta=0$,  $h$ reduces to the continuous uniform distribution on the interval $(0,1)$.  The CDF $H$ of  $h$ is given by
\begin{align*} 
H(u)
&=
{\rho\over Z_{\boldsymbol{\theta}}}\, u
+
{\delta\over Z_{\boldsymbol{\theta}}}
\left\{	\mathbbm{E}\big[G(-\alpha\sqrt{Y})\big]
+
\mathcal{S}\big(G^{-1}(u);\alpha,p\big) u
\right\} \nonumber
\\[0,3cm]
& 
-
{\delta\over Z_{\boldsymbol{\theta}}}
\left\{
\mathbbm{E}\Big[\mathbbm{1}_{\{Y\leqslant [{G^{-1}(u)\over\alpha}]^2\}} G(-\alpha\sqrt{Y})\Big]
\mathbbm{1}_{\{G^{-1}(u)< 0\}}
+
\mathbbm{E}\Big[\mathbbm{1}_{\{Y\leqslant [{G^{-1}(u)\over\alpha}]^2\}} G(\alpha\sqrt{Y})\Big]	
\mathbbm{1}_{\{G^{-1}(u)\geqslant 0\}}
\right\},
\end{align*}
where $Y\sim {\rm Gamma}(p,1)$.

If $\mathcal{S}: D\to (0,1)$ is defined by $\mathcal{S}(z)=G(z)$, $\forall z\in D$, by \eqref{F-t-CDF},  the family of trimodal distributions in \eqref{pdfbgev2} is obtained. That is, 
\begin{align}\label{cdf-F}
F(x;\boldsymbol{\theta})
=
F\biggl({x-\mu\over \sigma}\biggr)
=
H\biggl(G\Big({x-\mu\over \sigma}\Big)\biggr)
\end{align}
and  $f(x;\boldsymbol{\theta})={1\over \sigma}\, f((x-\mu)/ \sigma)$.

If $U$ is distributed according to \eqref{def-h-pdf} and $X\sim {\rm TD}(\boldsymbol{\theta})$, then, by \eqref{cdf-F},
the random variable $X$ admits the following stochastic representation:
\begin{align*}
X=\mu+\sigma G^{-1}(U).
\end{align*}

\section{The Gaussian case}\label{Sect:4}

In this section, 
the standard normal kernel density $g(x)=\phi(x)$, $x\in\mathbbm{R}$, is plugged into \eqref{pdfbgev2}.
Some structural properties as modality, moments, entropies and rate of the distribution are discussed.

The CDF corresponding to the normal kernel density $g$ is $G(x)=\Phi(x)$, with
\begin{align}\label{error-function}
\Phi(x)
=
\int_{-\infty}^x \phi(t)\,{\rm d}t
=
{1\over 2}
\biggl[1\pm{\rm erf}\biggl(\pm{x\over\sqrt{2}}\biggl)\biggr],
\end{align} 
where ${\rm erf}(x)$ is the error function. So, in this section, we consider the following PDF
\begin{eqnarray}\label{pdfbgev3}
f(x;\boldsymbol{\theta})
=
{1\over Z_{\boldsymbol{\theta}}}\, 
\biggl[\rho+\delta T\biggl({x-\mu\over\sigma};\alpha,p\biggr)\biggr] \,
\phi\biggl({x-\mu\over\sigma}\biggr),  \quad x\in \mathbbm{R},
\end{eqnarray}
where $T$ is as in \eqref{tranf-G} and $Z_{\boldsymbol{\theta}}$ is the normalizing factor \eqref{normalizator}, which is given by
\begin{align}\label{const-norm-Gaussian}
Z_{\boldsymbol{\theta}}
=
(\rho+ \delta)\sigma+
\delta\sigma 
\left\{
{\mathbbm{E}}
\big[
\Phi(-\alpha \sqrt{Y})
\big]
-
{\mathbbm{E}}
\big[
\Phi(\alpha \sqrt{Y})
\big]
\right\}.
\end{align}
We will denote $X\sim {\rm TD}_{\Phi}(\boldsymbol{\theta})$ for a random variable $X$ that follows \eqref{pdfbgev3}.
Plots of the TD$_\Phi$ density, where trimodality is observed, are given in Figures \ref{Fig-trimodality}
and \ref{Fig-trimodality2}.

\begin{figure}[ht]
	\begin{subfigure}{.5\textwidth}
		\centering
		\includegraphics[width=.9\linewidth]{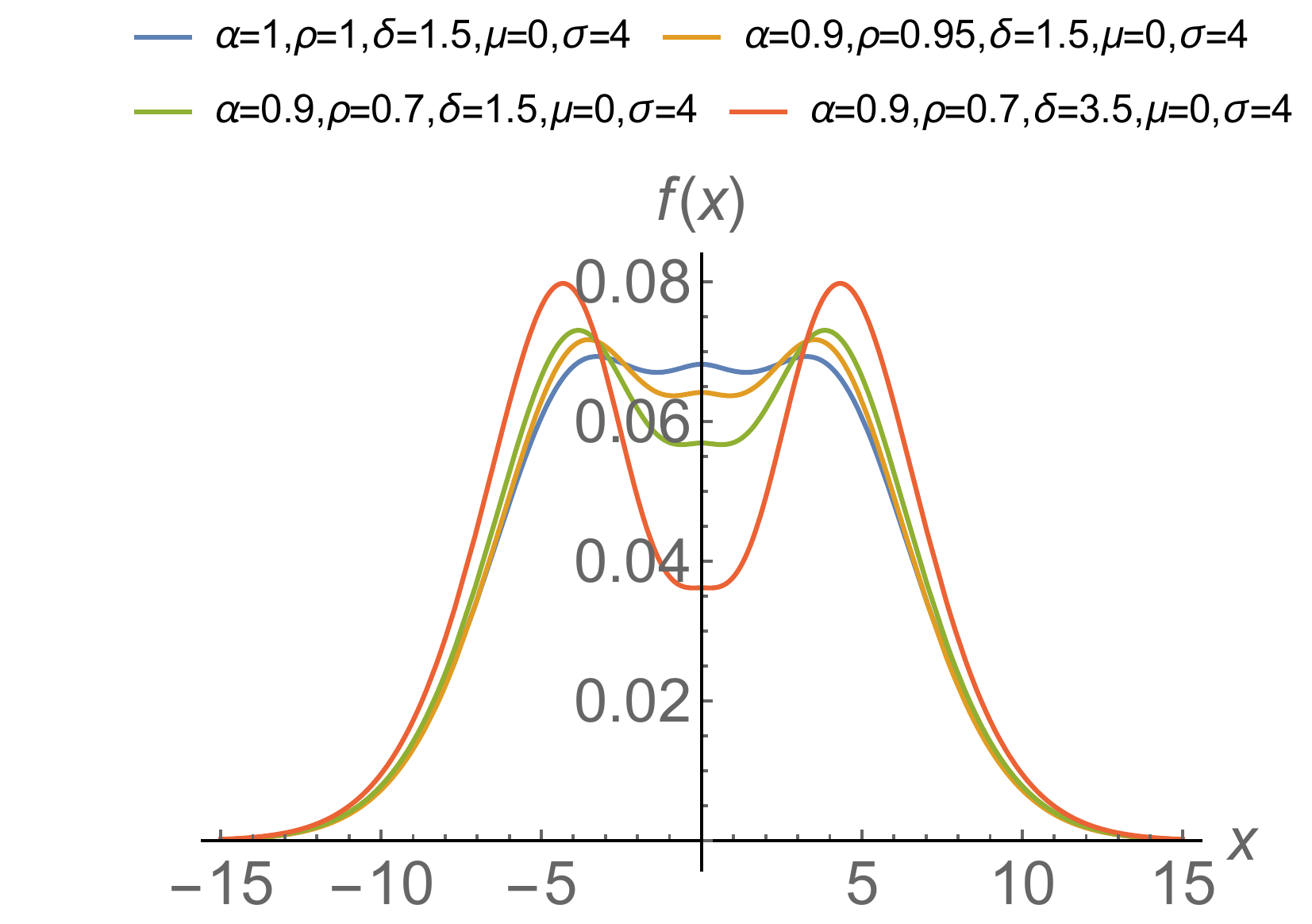}  
		\caption{The strict form of modality for trimodal normal distribution.}
		\label{Fig-trimodality}
	\end{subfigure}
	\begin{subfigure}{.5\textwidth}
		\centering
		\includegraphics[width=.9\linewidth]{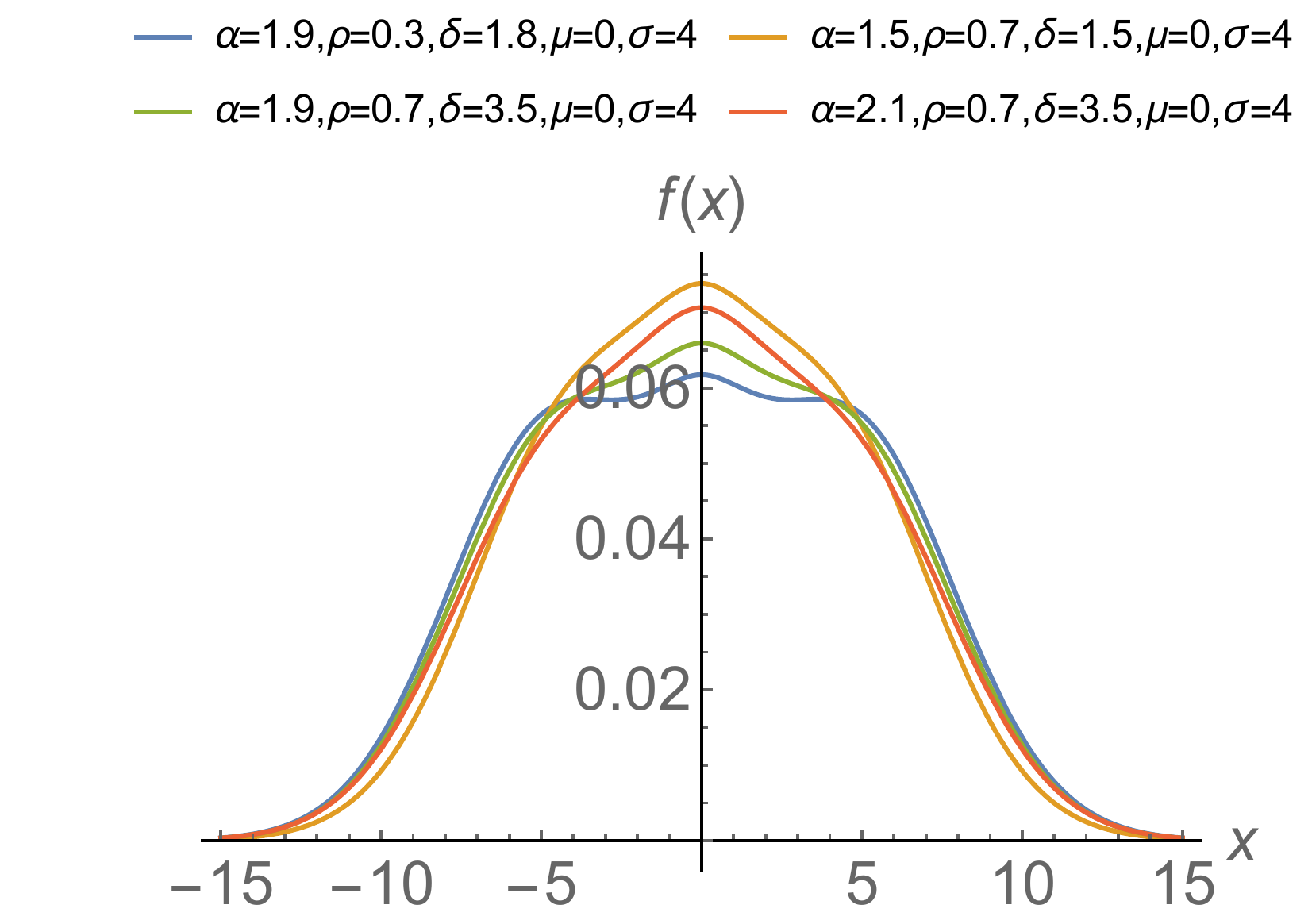}
		\caption{The soft form of modality for trimodal normal distribution.}
		\label{Fig-trimodality2}
	\end{subfigure}
	\caption{PDFs of  trimodal normal distribution.}
\end{figure}

\subsection{A study on the modality}

For a mathematical proof of next result, see Section \ref{Ap-1} of  Appendix.
\begin{lemma}\label{def-R}
	The function $R$, defined by
	\begin{align*}
	{R}(y)
	=
	{2\delta}\, 
	{{(1 / \alpha^2)^{p-1} {\rm e}^{-y/\alpha^2} \over\Gamma(p)}}
	-
	\biggl[\rho+\delta\, {\gamma(p,y/\alpha^2)\over\Gamma(p)}\biggr], \quad y>0,
	\end{align*}
	has at most two real roots.
\end{lemma}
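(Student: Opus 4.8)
The plan is to analyze the sign behavior of $R(y)$ by studying its derivative $R'(y)$ and invoking a monotonicity/convexity-type argument analogous to the one used for Lemma~\ref{def-R-0}. First I would compute $R'(y)$. Recalling that ${{\rm d}\over{\rm d}y}\gamma(p,y/\alpha^2) = (1/\alpha^2)^p\, y^{p-1}{\rm e}^{-y/\alpha^2}$, differentiating the second bracket gives $-\delta (1/\alpha^2)^p y^{p-1}{\rm e}^{-y/\alpha^2}/\Gamma(p)$, while differentiating the first term gives $2\delta (1/\alpha^2)^{p-1}{\rm e}^{-y/\alpha^2}(-1/\alpha^2)/\Gamma(p) = -2\delta (1/\alpha^2)^{p}{\rm e}^{-y/\alpha^2}/\Gamma(p)$. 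Hence
\begin{align*}
R'(y)
=
-\,{\delta\,(1/\alpha^2)^{p}\,{\rm e}^{-y/\alpha^2}\over\Gamma(p)}\,\bigl(\,2 + y^{\,p-1}\,\bigr).
\end{align*}
Wait — I must be careful with the exponents here; the precise algebra of the powers of $1/\alpha^2$ and of $y$ will need checking, but the structural point is that $R'(y)$ is (up to a positive constant depending on $\delta,\alpha,p$) equal to ${\rm e}^{-y/\alpha^2}$ times a polynomial-type factor in $y$ whose number of sign changes on $(0,\infty)$ is controlled.

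Next, the key observation is that $R'(y)$ has the form $-{\rm e}^{-y/\alpha^2} Q(y)$ where $Q(y)$ is a sum of two terms each of which is positive (or more precisely, $Q$ changes sign at most once) on $(0,\infty)$. If $Q(y) > 0$ for all $y>0$, then $R$ is strictly decreasing, so $R$ has at most one real root — even stronger than claimed. If instead the algebra produces a $Q$ that can change sign once (so $R$ is first decreasing then increasing, or vice versa), then $R$ is unimodal as a function, and a function that is monotone on each of two subintervals has at most two roots. Either way the conclusion "at most two real roots" follows. So the core of the argument is: (i) write $R'(y) = {\rm e}^{-y/\alpha^2}\,Q(y)$ with the exponential factor never vanishing; (ii) show $Q$ has at most one zero on $(0,\infty)$ by inspecting its explicit elementary form (it is, up to scaling, a combination of a constant and a power of $y$, hence manifestly of one sign or changing sign once); (iii) conclude $R$ has at most one critical point, hence at most two roots. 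Finally note that we should double check the boundary behavior, $\lim_{y\to 0^+}R(y)$ and $\lim_{y\to\infty}R(y)$: as $y\to\infty$ both the exponential term and $\gamma(p,y/\alpha^2)/\Gamma(p)\to 1$, so $R(y)\to -\rho < 0$, and as $y\to 0^+$, $\gamma(p,y/\alpha^2)\to 0$ while the first term tends to $2\delta(1/\alpha^2)^{p-1}/\Gamma(p)>0$, so $R(0^+)>0$; this confirms at least one root exists but is consistent with the "at most two" bound.

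\textbf{Main obstacle.} The genuinely delicate point is step (ii): verifying that the elementary factor $Q(y)$ multiplying ${\rm e}^{-y/\alpha^2}$ in $R'(y)$ really does have at most one sign change on $(0,\infty)$. For general $p>0$ this factor involves $y^{p-1}$, and when $0<p<1$ this is a decreasing function blowing up at the origin, so one must argue about the monotonicity of the relevant combination rather than simply reading off a sign. I expect the cleanest route is to factor out the dominant power of $y$ and reduce $Q(y)>0$ (or $Q$ monotone) to a statement about a single monotone elementary function, exactly mirroring the structure already exploited in the proof of Lemma~\ref{def-R-0}; indeed, since $R$ here is simply the $A=0$, $\beta-p = 1$... no — here $\mathfrak{h}\equiv 1$, which is explicitly the case excluded from Lemma~\ref{def-R-0}, so the argument must be redone from scratch rather than quoted, but the mechanics (differentiate, factor out the never-zero exponential, count sign changes of what remains) are identical. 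The remaining steps are routine calculus.
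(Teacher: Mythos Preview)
Your approach is correct but genuinely different from the paper's. The paper never differentiates $R$: it rewrites
\[
R(y) = 2\delta\, f(y;\alpha,p) - \bigl[\rho + \delta F(y;\alpha,p)\bigr],
\]
identifying the first term with a scaled Gamma density and the bracket with $\rho$ plus a scaled Gamma CDF, and then argues \emph{graphically}. The curve $2\delta f$ is either strictly decreasing ($p\leqslant 1$) or unimodal ($p>1$), while $\rho+\delta F$ is increasing from $\rho$ to $\rho+\delta$; a figure is invoked to conclude that two such curves meet at most twice. Your route --- compute $R'(y)$, factor out the never-vanishing $e^{-y/\alpha^2}$, and check that the residual factor $Q(y)$ changes sign at most once --- is a fully analytic substitute for that picture. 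With $R$ exactly as stated in the lemma you get $Q(y)=2+y^{p-1}>0$, so $R$ is strictly decreasing and has at most one root (stronger than claimed); if instead the first term of $R$ carries the factor $y^{p-1}$ that the computation of $f'(x;\boldsymbol\theta)$ in the Gaussian case actually produces (cf.\ the form of $\mathfrak{R}$ in Lemma~\ref{def-R-0} with $\mathfrak{h}\equiv 1$), then after pulling out $y^{p-2}$ the factor $Q$ is linear in $y$, with a single zero at $y=2(p-1)/(1+2/\alpha^2)$ when $p>1$ and no zero when $p\leqslant 1$. Either way $R$ has at most one critical point, hence at most two roots. Your argument is more rigorous and self-contained; the paper's graphical argument is looser but makes the link to the uni/bi/trimodal shapes of the Gamma density visually explicit.
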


\begin{remark}
Notice that when $\rho$ is sufficiently large, the function $R$ has no roots. For sufficiently small $\rho$, $R$ have one, or two roots depending on whether $p\leqslant 1$ or $p>1$. Furthermore, when $\delta=0$ (with $\rho>0$) or $\alpha\to\infty$, the function $R$ has no roots.
\end{remark}

\begin{proposition}\label{prop-crit}
	A point $x\in\mathbbm{R}$ is a critical point of density \eqref{pdfbgev3} if $x=\mu$ or 
	$
	R[(x-\mu)^2/\sigma^2]=0
	$,
	where $R$ is as in Lemma \ref{def-R}.
\end{proposition}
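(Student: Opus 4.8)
The plan is to mimic the proof of Proposition~\ref{prop-crit-0}, replacing the general kernel identity \eqref{der-condition} by the specific property of the Gaussian density $\phi'(x)=-x\,\phi(x)$, which is exactly \eqref{der-condition} with $\mathfrak{h}\equiv 1$. First I would differentiate \eqref{pdfbgev3} directly. Writing $u=(x-\mu)/\sigma$, we have
\begin{align*}
f'(x;\boldsymbol{\theta})
=
{1\over \sigma Z_{\boldsymbol{\theta}}}
\left\{
\delta\, {\partial\over\partial x}T(u;\alpha,p)\,\phi(u)
+
\bigl[\rho+\delta T(u;\alpha,p)\bigr]\,{\partial\over\partial x}\phi(u)
\right\}.
\end{align*}
Using $\phi'(u)=-u\phi(u)$ one gets ${\partial\over\partial x}\phi(u)=-{1\over\sigma}u\phi(u)$, and from \eqref{tranf-G}, since $\gamma(p,w)'=w^{p-1}{\rm e}^{-w}$, the chain rule with $w=u^2/\alpha^2$ gives
${\partial\over\partial x}T(u;\alpha,p)={1\over\Gamma(p)}\,(u^2/\alpha^2)^{p-1}{\rm e}^{-u^2/\alpha^2}\cdot{2u\over\sigma\alpha^2}$.

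Next I would factor out the common term $[u/(\sigma^2 Z_{\boldsymbol{\theta}})]\,\phi(u)$ from $f'(x;\boldsymbol{\theta})$. After collecting the remaining bracket, one should recognize it — up to a positive constant — as $R[(x-\mu)^2/\sigma^2]$ with $R$ as defined in Lemma~\ref{def-R}; indeed the first term becomes $2\delta\,(1/\alpha^2)^{p-1}{\rm e}^{-y/\alpha^2}/\Gamma(p)$ after simplifying $(u^2/\alpha^2)^{p-1}\cdot(1/\alpha^2)=(y/\alpha^2)^{p-1}(1/\alpha^2)$ appropriately with $y=u^2$, and the second term becomes $-[\rho+\delta\,\gamma(p,y/\alpha^2)/\Gamma(p)]$. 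Thus $f'(x;\boldsymbol{\theta})=[(x-\mu)/(\sigma^2 Z_{\boldsymbol{\theta}})]\,\phi((x-\mu)/\sigma)\,R[(x-\mu)^2/\sigma^2]$, and since $\phi>0$, $\sigma>0$, $Z_{\boldsymbol{\theta}}>0$, the equation $f'(x;\boldsymbol{\theta})=0$ holds precisely when $x=\mu$ (from the factor $x-\mu$) or when $R[(x-\mu)^2/\sigma^2]=0$. That is the claim.

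The only genuine obstacle is bookkeeping: one must carefully track the powers of $\alpha$ and of $u$ when converting the derivative of the incomplete-gamma term into the form appearing in $R$, and confirm that the exponent $p-1$ (rather than $p$, as in Lemma~\ref{def-R-0}) is the correct one for the Gaussian case — this is exactly the point where the Gaussian $\mathfrak{h}\equiv 1$ differs from the polynomially-decaying $\mathfrak{h}$ of \eqref{h-definition}, which is why a separate lemma is needed. Everything else is a one-line application of the product and chain rules. I would present the computation compactly, pointing to Lemma~\ref{def-R} for the root count and to $\lim_{x\to\pm\infty}f(x;\boldsymbol{\theta})=0$ (already used in Theorem~\ref{Teo-trimodality}) as the ingredient that upgrades "critical point" to the modality classification in the analogue of that theorem for the Gaussian subfamily.
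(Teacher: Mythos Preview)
Your proposal is correct and follows exactly the paper's approach: the paper's proof is the single-line identity
\[
f'(x;\boldsymbol{\theta})
=
\frac{1}{\sigma Z_{\boldsymbol{\theta}}}\,
\frac{x-\mu}{\sigma}\,
\phi\!\left(\frac{x-\mu}{\sigma}\right)
R\!\left[\frac{(x-\mu)^2}{\sigma^2}\right],
\]
and your computation (product rule, $\phi'(u)=-u\phi(u)$, and the chain-rule derivative of the incomplete gamma in $T$) is precisely the verification of that identity. Your closing remarks about the modality classification belong to the next theorem rather than to this proposition, but they are accurate.
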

\begin{proof}
	The proof follows from identity 	$
	f'(x;\boldsymbol{\theta})
	=
	[(x-\mu)/\sigma] 
	g((x-\mu)/\sigma) 
	{R}[(x-\mu)^2/\sigma^2]/(\sigma Z_{\boldsymbol{\theta}})	
	$.
\end{proof}

\begin{theorem}[Uni- bi- or trimodality]
	Let $X\sim {\rm TD}_{\Phi}(\boldsymbol{\theta})$ and $R$ be the function defined in Lemma \ref{def-R}. The following hold:
	\begin{itemize}
		\item[(1)] If $R$ has no real roots then  $f(x;\boldsymbol{\theta})$ is unimodal with mode $x=\mu$.
		\item[(2)] If $R$ has one real root then $f(x;\boldsymbol{\theta})$ is bimodal with minimum point $x=\mu$.
		\item[(3)] If $R$ has two distinct real roots then $f(x;\boldsymbol{\theta})$ is trimodal where $x=\mu$ is one of the modes.
	\end{itemize}
\end{theorem}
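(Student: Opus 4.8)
The plan is to mirror the argument used for Theorem~\ref{Teo-trimodality}, but now reading the monotonicity of $f$ directly off the factorisation of $f'$ recorded in the proof of Proposition~\ref{prop-crit}. First I would start from
\[
f'(x;\boldsymbol{\theta})
=
\frac{1}{\sigma Z_{\boldsymbol{\theta}}}\,
\frac{x-\mu}{\sigma}\,
\phi\!\left(\frac{x-\mu}{\sigma}\right)
R\!\left[\frac{(x-\mu)^2}{\sigma^2}\right],
\]
and observe that $\sigma Z_{\boldsymbol{\theta}}>0$ and $\phi>0$, so the sign of $f'(x;\boldsymbol{\theta})$ coincides with the sign of $R\big((x-\mu)^2/\sigma^2\big)$ when $x>\mu$ and with its opposite when $x<\mu$; moreover $f$ is symmetric about $\mu$, since $\phi$ is even and $T$ depends on $x$ only through $(x-\mu)^2$. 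Because $x\mapsto(x-\mu)^2/\sigma^2$ is an increasing bijection of $(\mu,\infty)$ onto $(0,\infty)$, the shape of $f$ on $(\mu,\infty)$ is entirely dictated by the sign pattern of $R$ on $(0,\infty)$, and symmetry then gives the shape on $(-\infty,\mu)$.

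The one analytic input I would add to Lemma~\ref{def-R} is the limit $\lim_{y\to\infty}R(y)=-(\rho+\delta)<0$, which follows from ${\rm e}^{-y/\alpha^2}\to 0$ and $\gamma(p,y/\alpha^2)/\Gamma(p)\to 1$ together with the standing assumption that $\rho$ and $\delta$ are not both zero. Combining this with the continuity of $R$ and the bound of at most two roots from Lemma~\ref{def-R}, the global sign pattern is forced in each case: if $R$ has no root then $R<0$ on all of $(0,\infty)$; if $R$ has a single sign-changing root $a$ then $R>0$ on $(0,a)$ and $R<0$ on $(a,\infty)$; and if $R$ has two distinct sign-changing roots $a<b$ then $R<0$ on $(0,a)$, $R>0$ on $(a,b)$ and $R<0$ on $(b,\infty)$. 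These are the only patterns compatible with $R$ having $0$, $1$ or $2$ sign changes on $(0,\infty)$ and $R(y)\to-(\rho+\delta)<0$ as $y\to\infty$.

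Finally I would translate each pattern into the modality statement. In case~(1), $f'>0$ on $(-\infty,\mu)$ and $f'<0$ on $(\mu,\infty)$, so $f$ is unimodal with mode $x=\mu$. In case~(2), on $(\mu,\infty)$ the density increases on $(\mu,\mu+\sigma\sqrt{a})$ and decreases on $(\mu+\sigma\sqrt{a},\infty)$, so $x=\mu+\sigma\sqrt{a}$ is a local maximum; by symmetry $x=\mu-\sigma\sqrt{a}$ is another, while $x=\mu$ is a local minimum, so $f$ is bimodal. In case~(3), $f$ decreases on $(\mu,\mu+\sigma\sqrt{a})$, increases on $(\mu+\sigma\sqrt{a},\mu+\sigma\sqrt{b})$ and decreases on $(\mu+\sigma\sqrt{b},\infty)$, so $x=\mu\pm\sigma\sqrt{a}$ are minima and $x=\mu\pm\sigma\sqrt{b}$ are maxima; since $f'$ is negative immediately to the right of $\mu$ and, by the reflection identity $f'(\mu+t)=-f'(\mu-t)$, positive immediately to its left, $x=\mu$ is also a mode, giving three modes. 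In all three cases $f>0$ and $\lim_{x\to\pm\infty}f(x;\boldsymbol{\theta})=0$ confirm that the outermost extrema are genuine maxima, completing the classification.

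The main point to handle with care is the degenerate possibility that a zero of $R$ has even multiplicity and hence is not sign-changing; this is exactly what is implicitly excluded by phrasing items~(2)--(3) in terms of the number of roots, and I would treat it as in the proof of Theorem~\ref{Teo-trimodality}, namely by reading ``one real root'' and ``two distinct real roots'' as referring to roots at which $R$ changes sign. Beyond that, the argument is a routine sign analysis, so I do not expect a substantive obstacle.
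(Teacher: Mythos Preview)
Your proposal is correct and follows essentially the same approach as the paper: the paper's proof simply says ``follow the steps of Theorem~\ref{Teo-trimodality}, replacing $\mathfrak{R}$ by $R$ and Proposition~\ref{prop-crit-0} by Proposition~\ref{prop-crit}'', which amounts to counting critical points via the factorisation of $f'$ and then using $\lim_{x\to\pm\infty}f(x;\boldsymbol{\theta})=0$ to decide which are maxima and which are minima. Your explicit computation of $\lim_{y\to\infty}R(y)=-(\rho+\delta)<0$ and the resulting sign pattern of $R$ on $(0,\infty)$ makes the alternation argument more transparent than the paper's version, but the underlying strategy is the same.
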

\begin{proof} The proof follows the same steps of proof of Theorem \ref{Teo-trimodality} by taking $R$ instead $\mathfrak{R}$, and by using Proposition \ref{prop-crit} instead of Proposition \ref{prop-crit-0}.
\end{proof}

\subsection{The normalizing factor}
In what follows we find a closed expression for the normalizing factor $Z_{\boldsymbol{\theta}}$ in \eqref{pdfbgev3}. 
By \eqref{const-norm-Gaussian} it is necessary to calculate
${\mathbbm{E}}
[
\Phi(\pm\alpha \sqrt{Y})
]
$, where $Y\sim{\rm Gamma}(p,1)$.
Indeed, using the formula \eqref{error-function}
and then
taking the change of variables $z=\sqrt{y}$ and ${\rm d} y=2z {\rm d}z$, we obtain
\begin{align}
{\mathbbm{E}}
\big[
\Phi(\pm\alpha \sqrt{Y})
\big]
=
\int_{0}^{\infty} \Phi(\pm\alpha \sqrt{y}) \, 
{y^{p-1} {\rm e}^{-y}\over\Gamma(p)}
\, {\rm d}y 
&=
{1\over 2}
+
{1\over 2}\,
\int_{0}^{\infty} {\rm erf}\biggl(\pm{\alpha \sqrt{y}\over \sqrt{2}}\biggr) \, 
{y^{p-1} {\rm e}^{-y}\over\Gamma(p)}
\, {\rm d}y \label{exp-Phi-0}
\\[0,2cm]
&
=
{1\over 2}
+
\int_{0}^{\infty}   {\rm erf}\biggl(\pm{\alpha z\over \sqrt{2}}\biggr)\, 
{z^{2p-1} {\rm e}^{-z^2} \over \Gamma(p)} \, {\rm d}z. \label{exp-Phi}
\end{align}
By using the formula (see Item  8 in Subsection 4.3 of reference \cite{Ng-Geller1969}):
\begin{align*}
\int_{0}^{\infty}
{\rm erf}(ax) x^p {\rm e}^{-b^2 x^2} 
\, {\rm d}x
=
{a\over\sqrt{\pi}}\, b^{-p-2} \Gamma\Big({p\over 2}+1\Big)\,
_2F_1\biggl({1\over 2},{p\over 2}+1; {3\over 2};-{a^2\over b^2}\biggr),
\quad b^2>0, p>-2,
\end{align*}
where $_2F_1(a_1,a_2; b_1; x)$ is the generalized Hypergeometric function, we have
\begin{align*}
\int_{0}^{\infty}   {\rm erf}\biggl(\pm{\alpha z\over \sqrt{2}}\biggr)\, 
{z^{2p-1} {\rm e}^{-z^2} \over \Gamma(p)} \, {\rm d}z
=
\pm 
{\alpha \Gamma(p+{1\over 2})\over\sqrt{2\pi}\,\Gamma(p)}\,
_2F_1\biggl({1\over 2},p+{1\over 2}; {3\over 2};-{\alpha^2\over 2}\biggr).
\end{align*}
Therefore,
\begin{align}\label{CDF21}
{\mathbbm{E}}
\big[
\Phi(\pm\alpha \sqrt{Y})
\big]
=
{1\over 2}
\pm
{\alpha \Gamma(p+{1\over 2})\over\sqrt{2\pi}\,\Gamma(p)}\,
_2F_1\biggl({1\over 2},p+{1\over 2}; {3\over 2};-{\alpha^2\over 2}\biggr).
\end{align}

Replacing \eqref{CDF21} 
in \eqref{const-norm-Gaussian}, we obtain the following closed expression for the normalizing factor $Z_{\boldsymbol{\theta}}$:
\begin{align}\label{partition-function}
Z_{\boldsymbol{\theta}}
=
(\rho+ \delta)\sigma-
{{2}\delta\sigma \alpha \Gamma(p+{1\over 2})\over\sqrt{2\pi}\, \Gamma(p)}\,
_2F_1\biggl({1\over 2},p+{1\over 2}; {3\over 2};-{\alpha^2\over 2}\biggr).
\end{align}

The flexibility of the TD model is shown. Note that the TD model can be unimodal, bimodal or trimodal. Figures \ref{Fig-trimodality} and \ref{Fig-trimodality2} show how the TD density function is influenced by parameters $\alpha$, $\rho$ and $\delta$.

\subsection{Cumulative distribution function}\label{cumfunTD}
To determine the CDF of $X\sim {\rm TD}_\Phi(\boldsymbol{\theta})$ we use formula \eqref{CDF}. So, by \eqref{CDF}, it is essential to calculate the expectation
$
\mathbbm{E}[\mathds{1}_{\{Y\leqslant ({x-\mu\over\sigma\alpha})^2\}} \Phi(\pm\alpha\sqrt{Y})],$
$x\in\mathbbm{R},
$
where $Y\sim{\rm Gamma}(p,1)$. 
Indeed, similarly to the one done in \eqref{exp-Phi}, 
\begin{align}\label{CDF22}
{\mathbbm{E}}
\Big[\mathds{1}_{\{Y\leqslant ({x-\mu\over\alpha\sigma})^2\}}
\Phi(\pm\alpha \sqrt{Y})
\Big]
=
{1\over 2}\, T\biggl({x-\mu\over\sigma};\alpha,p\biggr)
+
{1 \over \Gamma(p)} \,
I\biggl({x-\mu\over\alpha\sigma};\pm{\alpha \over \sqrt{2}},2p-1,1\biggr),
\end{align}
where
$
	I(u;a,b,c)
	=
	\int_{0}^{u}
	{\rm erf}(ax) x^b {\rm e}^{-c^2 x^2} 
	\, {\rm d}x, \ u>0, a\in\mathbbm{R}, b>0, c>0.
$
%
Since $\vert{\rm erf}(x)\vert \leqslant 1$, we get $\vert I(u;a,b,c)\vert\leqslant I(u;0,b,c)<\infty$, and therefore,  $I(u;a,b,c)$ always exists. In general, closed form solutions for the definite integral $I(u;a,b,c)$ are not available in terms of commonly used functions. 

Replacing the formulas \eqref{CDF21} and \eqref{CDF22} in \eqref{CDF}, we obtain the following closed formula for the  CDF of $X\sim {\rm TD}_\Phi(\boldsymbol{\theta})$:
{\small 
\begin{align*}
F(x;\boldsymbol{\theta})
=
\begin{cases}
 \displaystyle 
{\sigma\over Z_{\boldsymbol{\theta}}}
\displaystyle \biggl[\rho+\delta T\biggl({x-\mu\over\sigma};\alpha,p\biggr)\biggr]
\Phi\biggl( {x-\mu\over\sigma}\biggr)
+
{\delta\sigma\over Z_{\boldsymbol{\theta}}}
\biggl[
{1\over 2}
-
{\alpha \Gamma(p+{1\over 2})\over\sqrt{2\pi}\,\Gamma(p)}\,
_2F_1\biggl({1\over 2},p+{1\over 2}; {3\over 2};-{\alpha^2\over 2}\biggr)
\biggr]&
\\[0,5cm]
\displaystyle 
\hspace{4.1cm}
-
{\delta\sigma\over Z_{\boldsymbol{\theta}}}\,
\left[
{1\over 2}\, T\biggl({x-\mu\over\sigma};\alpha,p\biggr) \nonumber
-
{1\over\Gamma(p)}\,
I\biggl({x-\mu\over\alpha\sigma};{\alpha \over \sqrt{2}},2p-1,1\biggr)
\right], & {\rm if}\, x< \mu,
\\[1cm]
\displaystyle 
{\sigma\over Z_{\boldsymbol{\theta}}}
\displaystyle \biggl[\rho+\delta T\biggl({x-\mu\over\sigma};\alpha,p\biggr)\biggr]
\Phi\biggl( {x-\mu\over\sigma}\biggr)
+
{\delta\sigma\over Z_{\boldsymbol{\theta}}}
\biggl[
{1\over 2}
-
{\alpha \Gamma(p+{1\over 2})\over\sqrt{2\pi}\,\Gamma(p)}\,
_2F_1\biggl({1\over 2},p+{1\over 2}; {3\over 2};-{\alpha^2\over 2}\biggr)
\biggr]  &
\\[0,5cm]
\displaystyle 
\hspace{4.1cm}
-
{\delta\sigma\over Z_{\boldsymbol{\theta}}}\,
\left[
{1\over 2}\, T\biggl({x-\mu\over\sigma};\alpha,p\biggr) 
\nonumber
+
{1\over\Gamma(p)}\,
I\biggl({x-\mu\over\alpha\sigma};{\alpha \over \sqrt{2}},2p-1,1\biggr)
\right], & {\rm if}\, x\geqslant \mu,
\end{cases}
\end{align*}
}
\noindent
where $Z_{\boldsymbol{\theta}}$ and $T$ are as in \eqref{partition-function} and \eqref{tranf-G}, respectively.

\begin{remark}
As expected, since $\Phi$ has a symmetric distribution around $0$,
$F(\mu;\boldsymbol{\theta})={1/ 2}$ and then, $Q_2=\mu$ is the median and the mean for $X\sim {\rm TD}_\Phi(\boldsymbol{\theta})$. 
\end{remark}

\begin{remark}
Some examples where $I({x-\mu\over\alpha\sigma};\pm{\alpha \over \sqrt{2}},2p-1,1)$ in \eqref{CDF22} admits a closed form are:
\begin{enumerate}
	\item 
	By taking $\alpha=\sqrt{2}$ and by using the following formula (see Item  4 in Subsection 1.5.3, pp. 31, of \cite{Prudnikov2002}):
	\begin{align*}
	\int_{0}^{u}
	{\rm erf}(ax) x^\lambda {\rm e}^{-a^2 x^2} 
	\, {\rm d}x
	=
	{2a\over\sqrt{\pi}(\lambda+2)}\, u^{\lambda+2}\,
	_2F_2\biggl({1},{\lambda\over 2}+1; {3\over 2},{\lambda\over 2}+2;{a^2 u^2}\biggr),
	\quad \lambda>-2,
	\end{align*}
	where $_2F_2(a_1,a_2; b_1,b_2; x)$ is the generalized Hypergeometric, 
	we have   
	\begin{align*}
	I\biggl({x-\mu\over\alpha\sigma};\pm{\alpha \over \sqrt{2}},2p-1,1\biggr)
	=\pm 
	{2\over\sqrt{\pi}(2p+1)}\, 
	\biggl({x-\mu\over\alpha\sigma}\biggr)^{2p+1}
	\,_2F_2\biggl({1},p+{1\over 2}; {3\over 2},p+{3\over 2};
	\Big({\mu-x\over\alpha\sigma}\Big)^2\biggr).
	\end{align*}
	\item 
	By taking $\alpha=\sqrt{2}$ and $p=3/2$,
	\begin{eqnarray*}
	I\biggl({x-\mu\over\alpha\sigma};\pm{\alpha \over \sqrt{2}},2p-1,1\biggr)
	=
	\mp
	{1\over 2}\, \biggl({x-\mu\over \sigma\sqrt{2}}\biggr) 
	{\rm erf}\biggl({x-\mu\over \sigma\sqrt{2}}\biggr)
	\exp\biggl[{-\biggl({x-\mu\over \sigma\sqrt{2}}\biggr)^2}\biggr]
			\\[0,3cm]
	\pm
	{1\over 4\sqrt{\pi}}\,
	\biggl\{1-\exp\biggl[{-2\biggl({x-\mu\over \sigma\sqrt{2}}\biggr)^2}\biggr]\biggr\}
	\pm
	{\sqrt{\pi}\over 8}\, \biggl[{\rm erf}\biggl({x-\mu\over \sigma\sqrt{2}}\biggr)\biggr]^2.
	\end{eqnarray*}
\end{enumerate}
\end{remark}

\subsection{Moments and Shannon entropy}\label{Moments and Shannon entropy}
A routine calculation shows that the moments of $X\sim {\rm TD}_\Phi(\boldsymbol{\theta})$ can be written as (for more details, see Section \ref{Moments in the Gaussian case} in the Appendix)
\begin{align*}
\mathbbm{E}(X^n)
&=
{({{\rho}+\delta})\sigma\over Z_{\boldsymbol{\theta}}}
\sum_{k=0}^{n} 
\binom{n}{k} \mu^{n-k}\sigma^{k}\,
{2^{-k/2} k!\over (k/2)!}\,\mathds{1}_{\{k \, {\rm even}\}}
\\[0,3cm]
&
+
{\delta\sigma \over  Z_{\boldsymbol{\theta}}  \sqrt{2\pi}\Gamma(p)}
\sum_{k=0}^{n} 
\binom{n}{k} \mu^{n-k}\sigma^{k}
{\big[1-(-1)^{k-1}\big]}
\sum_{m=0}^{\lfloor k/2\rfloor}
\sum_{j=0}^{\lfloor k_m/2\rfloor}
{c_{k,m}d_{k_m,j}\,\alpha^{k_m-2j}\over(1+{\alpha^2\over 2})^{{k_m-2j\over 2}+p}}\,
\Gamma\biggl({k_m-2j\over 2}+p\biggr),
\end{align*}
where $p\geqslant 1$ is an integer, $k_m=k-2m-1$,
$Z_{\boldsymbol{\theta}}$ is as in \eqref{partition-function}, $c_{n,m}$ and $d_{n_m,j}$ are as in \eqref{exp-tri-G}.
In particular,
$\mathbbm{E}(X)=\mu$ and ${\rm Var}(X)=\sigma^{2}$.
As a consequence, from Corollary \ref{prop-exp-1} the closed expressions for skewness and kurtosis of random variable $X\sim {\rm TD}_\Phi(\boldsymbol{\theta})$ are easily obtained.
\bigskip 

On the other hand, a formula for the Shannon entropy is given by (for more details, see Section \ref{Entropy in the Gaussian case} in the Appendix)
	\begin{align*}
	S_1(X)&=
	\log(Z_{\boldsymbol{\theta}})
	-
	{
		{2({\rho}+\delta)\sigma \over Z_{\boldsymbol{\theta}}}\,
		\sum_{k=0}^{\infty} {1\over (2k+1)} 
		\biggl[
		1
		+
		{1\over (\rho+1)^{2k+1}} 
		\sum_{i=p}^{\infty}
		\widetilde{c}_{i,k}\,
		{2^{-i} (2i)!\over i!}
		\biggr]
	}
	\\[0,3cm]
	&
	+
	{2{\delta}\sigma \over Z_{\boldsymbol{\theta}}}
	\sum_{k=0}^{\infty} {1\over (2k+1)} 
	\biggl\{
	2\mathbbm{E}\big[\Phi(\alpha\sqrt{Y})\big]
	-
	1
	+
	{1\over (\rho+1)^{2k+1}} 
	\sum_{i=p}^{\infty}
	\widetilde{c}_{i,k} 
	\mathbbm{E}\big(\mathds{1}_{\{\vert Z\vert\leqslant \alpha\sqrt{Y}\}} {Z^{2i}}\big)
	\biggr\}
	\\[0,3cm]
	&
	+
	{({\rho}+\delta)\sigma\over Z_{\boldsymbol{\theta}}}\, 
	\biggl[\log(\sqrt{2\pi})+{1\over 2}\biggr]
	+
	{{\delta}\sigma\over Z_{\boldsymbol{\theta}}}
	\left\{
	{\alpha\Gamma(p+{1\over 2})\over \sqrt{2\pi}\Gamma(p)}\,
	{1\over (1+{\alpha^2\over 2})^{p+{1\over 2}}} 
	-
	\biggl[\log(\sqrt{2\pi})+{1\over 2}\biggr]
	\big(
	2
	{\mathbbm{E}}
	\big[
	\Phi(\alpha \sqrt{Y})
	\big]
	-1
	\big)
	\right\},
	\end{align*}
\noindent
where $Z_{\boldsymbol{\theta}}$ is as in \eqref{partition-function}, the coefficients $\widetilde{c}_{i,k}$'s are determined by relation	\eqref{c-srec}, ${\mathbbm{E}}\big[\Phi(\alpha\sqrt{Y})\big]$
is given in \eqref{CDF21}, and
$
\mathbbm{E}(\mathds{1}_{\{\vert Z\vert\leqslant \alpha\sqrt{Y}\}} {Z^{2i}})
=
\mathbbm{E}(\mathds{1}_{\{Z\leqslant \alpha\sqrt{Y}\}} {Z^{2i}})
-
\mathbbm{E}(\mathds{1}_{\{Z\leqslant -\alpha\sqrt{Y}\}} {Z^{2i}})
$
is determined by \eqref{exp-z-truncated} in the Appendix.

\subsection{Rate of the random variable  \texorpdfstring{$X\sim {\rm TD}_{\Phi}(\boldsymbol{\theta})$}{k}}
 Following the reference \cite{Klugman1998}, the rate of a continuous random variable $X$ is given by
	\begin{align*}
	\tau_X=-\lim_{x\to\infty} {{\rm d} \log f_X(x)\over {\rm d}x},
	\end{align*}
	where $f_X(x)$ denotes its respective PDF.
	
	A simple computation shows that, the  rate of $X\sim {\rm TD}_{\Phi}(\boldsymbol{\theta})$ is
	\begin{align*}
	\tau_{{\rm TD}_{\Phi}(\boldsymbol{\theta})}
	=
	{1\over \sigma}
	\lim_{x\to\infty}\left[
	{x-\mu\over\sigma}
-
{2\over \alpha \Gamma(p)}\,
{({x-\mu\over\alpha\sigma})^{2p-1} {\rm e}^{-({x-\mu\over\alpha\sigma})^2}\over \rho+T({x-\mu\over\sigma};\alpha,p) }
	\right]
	=
\infty.
	\end{align*}
	Then, far enough out in the tail, the distribution of $X\sim{{\rm TD}_{\Phi}(\boldsymbol{\theta})}$ looks like a Normal distribution, as expected.
	In addition, we have some comparisons between the rate of $X\sim{{\rm TD}_{\Phi}(\boldsymbol{\theta})}$ with the rates of some random variables with known distributions in the literature: Inverse-gamma, Log-normal, Generalized-Pareto, {\rm BGumbel} \cite{OVBB21}, BWeibull \cite{VC20}, BGamma \cite{VFSPO20}, exponential 
	and Normal;
	\begin{align*}
	\tau_{{\rm InvGamma}(\alpha,\beta)}
	=
	\tau_{{\rm LogNorm}(\mu,\sigma^2)}
	&=
	\tau_{{\rm GenPareto}(\alpha,\beta, \xi)}
	= 
	\tau_{\rm BWeibull(\alpha<1,\beta,\delta)}=0
	\\
	&<
	\tau_{\rm BGumbel(\mu,\beta,\delta)}=
	\tau_{\rm BWeibull(\alpha=1,\beta,\delta)}=
	\tau_{{\rm BGamma}(\alpha,1/\beta,\delta)}=
	\tau_{{\rm exp}(1/\beta)}=1/\beta
	\\
	&<
	\tau_{\rm BWeibull(\alpha>1,\beta,\delta)}
	=
	\tau_{{\rm TD}_{\Phi}(\boldsymbol{\theta})}
	=
	\tau_{{\rm Normal}(\mu,\sigma^2)}=\infty.
	\end{align*}
	In other words, the tail of the normal distribution, of $X\sim {\rm TD}_{\Phi}(\boldsymbol{\theta})$ and of ${\rm BWeibull}(\alpha>1,\beta,\delta)$  are lighter than the tail of the other distributions specified above.

\section{The Model arising from a Generalized Mixture}
\label{Sect:5}
\noindent
The weighted distributions of random variable $(W-\mu)/\sigma$, with  weight function
$w_k$, have its PDF defined by
\begin{align*}
f_k(x;\mu,\sigma)
=
{1\over \int_D w_k(y) g(y) {\rm d}y}\,
w_k{\biggl({x-\mu\over\sigma}\biggr)}
g\biggl({x-\mu\over\sigma}\biggr),
\quad {x-\mu\over\sigma}\in D, \
k=0,1,\ldots.
\end{align*}

By using the power series expansion of the incomplete gamma function in
\eqref{series expansion},
note that the PDF $f(x;\boldsymbol{\theta})$ in \eqref{pdfbgev2}  interprets as an infinite (generalized) mixture of weighted distributions of $(W-\mu)/\sigma$ with the weight functions
$w_k(y)=y^{2k}$, $k=0,1,\ldots$, and with same parameter
vector $(\mu,\sigma)$. That is, 
\begin{align}\label{pdf-decomp}
f(x;\boldsymbol{\theta})
=
c_0
f_0(x;\mu,\sigma)
+
\sum_{k=p}^{\infty}
c_{k}
f_{k}(x;\mu,\sigma),
\end{align}
where the constants $c_0$, $c_k$, with $k=p,p+1,\ldots$, that depends only on $\boldsymbol{\theta}$, respectively, are given by
\begin{align*}
c_0=
{\rho\sigma\over Z_{\boldsymbol{\theta}}};
\quad 
c_k=
{\delta\sigma\over \Gamma(p)Z_{\boldsymbol{\theta}}}\,
{(-1)^{k-p} \mathbbm{E}(W^{2k})\over k(k-p)! \alpha^{2k}}, \quad k=p,p+1,\ldots.
\end{align*}
It is straightforward to verify that $c_0+\sum_{k=p}^{\infty}
c_{k}=1$. Note that the weights $c_k$ can take on negative values.

If $X\sim {\rm TD}_\Phi(\boldsymbol{\theta})$ and $L:\mathbbm{R}\to\mathbbm{R}$ is a Borel-measurable function, then, by representation \eqref{pdf-decomp}, the expectation of truncated random variable $\mathds{1}_{\{X\leqslant b\}} L(X)$, with $b\in\mathbbm{R}$, is given by
\begin{align*}
\mathbbm{E}\big[\mathds{1}_{\{X\leqslant b\}} L(X)\big]
=
c_0
\mathbbm{E}_0\big[\mathds{1}_{\{X\leqslant b\}} L(X)\big]
+
\sum_{k=p}^{\infty}
c_{k}
\mathbbm{E}_k\big[\mathds{1}_{\{X\leqslant b\}} L(X)\big].
\end{align*}
But this representation as infinite sum is not as informative as that representation of Theorem \ref{prop-exp}. For this reason in the paper, we had been concerned with providing mathematical representations that allow us to find the closed expressions for some characteristics of the distribution (such as normalizing factor, CDF, moments and entropies) as a function of the known mathematical functions.

\section{Inference}
\label{Sect:6}
\noindent
The maximum log$_q$-likelihood estimation (ML$_q$E) and Fisher information matrix for this estimation method are introduced.  ML$_q$E drops to the maximum likelihood estimation (MLE) if $q \to 1$ \cite{Cankaya2018,Ferrari2010,Fisher1925,Tsallis2009}.

\subsection{The Log$_q$-Likelihood Function}
For a random sample $X_1, \ldots , X_n$ from the random variable $X\sim {\rm TD}(\boldsymbol{\theta})$, with parameter vector $\boldsymbol{\theta}=(\mu,\sigma,\alpha,\rho, \delta)$, let us suppose that $x_1, \ldots, x_n$ are the observed values of $X_1, \ldots , X_n$.
The ML$_q$E of $\boldsymbol{\theta}$  is defined as
\begin{align}\label{derivative-logq}
l_q(\boldsymbol{\theta};\boldsymbol{x})
=
\sum_{i=1}^{n}
\log_q f(x_i;\boldsymbol{\theta}),
\quad \boldsymbol{x}=(x_1, \ldots, x_n)\in\mathbbm{R}^n,
\end{align}
where $\log_q(x)$ is the deformed logarithm defined in \eqref{log-def} \cite{Ferrari2010,Tsallis2009}.

By taking partial derivatives in \eqref{derivative-logq}, with respect to $\theta\in\{\mu,\sigma,\alpha,\rho, \delta\}$, we have
\begin{align*}
{\partial l_q(\boldsymbol{\theta};\boldsymbol{x})\over \partial\theta}
=
\sum_{i=1}^{n}
f^{1-q}(x_i;\boldsymbol{\theta})\,
{\partial \log f(x_i;\boldsymbol{\theta})\over \partial\theta},
\end{align*}
with
\begin{align}\label{log-ver}
\log f(x;\boldsymbol{\theta})
=
-\log(Z_{\boldsymbol{\theta}})
+ 
\log\biggl[\rho+\delta T\biggl({x-\mu\over\sigma};\alpha,p\biggr)\biggr] 
+
\log g\biggl({x-\mu\over\sigma}\biggr)
\end{align}
and $Z_{\boldsymbol{\theta}}$, $T$ are as in \eqref{normalizator} and \eqref{tranf-G}, respectively.
In general, the estimating equations for the parameters have the form
\begin{align}\label{MLqE}
\sum_{i=1}^{n}
f^{1-q}(x_i;\boldsymbol{\theta})\,
{\partial \log f(x_i;\boldsymbol{\theta})\over \partial\theta}=0,
\quad \theta\in\{\mu,\sigma,\alpha,\rho, \delta\},
\end{align}
where the first-order partial derivatives of the function $\log f(x;\boldsymbol{\theta})$ are given in Section  \ref{Elements of the score vector} of the Appendix.
A solution of the system of equations in \eqref{MLqE} is called ML$_q$E estimator.
It is not possible to derive analytical solution for the ML$_q$E $\widehat{\boldsymbol{\theta}}$.

\subsection{Fisher Information based on $\log_q$}
The definition of Fisher information (FI) based on $\log_q$ is given in \cite{Cankaya2018,Plastino1997}. The elements of FI matrix are defined by
\begin{align*}
[\mathcal{I}(\boldsymbol{\theta})]_{j,k}
=
\sum_{i=1}^{n}
\mathbbm{E}\bigg[
f^{1-q}(X_i;\boldsymbol{\theta})\,
{\partial \log f(X_i;\boldsymbol{\theta})\over \partial\theta_j}\,
{\partial \log f(X_i;\boldsymbol{\theta})\over \partial\theta_k}
\bigg],
\quad \theta_j, \theta_k\in\{\mu,\sigma,\alpha,\rho, \delta\},
\end{align*}
where 
$\log f(x;\boldsymbol{\theta})$ is given in \eqref{log-ver} and $f(x;\boldsymbol{\theta})$ is the parametric model in \eqref{pdfbgev2}. When the inverse of FI matrix exists, it is well-known that the diagonal elements of inverse of FI give ${\rm Var}(\widehat{\boldsymbol{\theta}})$ based on log$_q$ \cite{BercherFisq,Cankaya2018}. In general, there is no closed form expression for the FI matrix (see Appendices \ref{Elements of the score vector} and \ref{cisect}). It is clear that when $q=1$ in the above identity, under standard regularity conditions, we obtain the classical FI matrix  \cite{Fisher1925,Lehmann1998}.


\section{Application on real data sets}\label{appsect}
We present applications to illustrate the performance of the trimodal normal model compared with smooth kernel distribution as semiparametric distribution (\texttt{SmoothKernelDistribution}) and estimation of distribution which is performed by using a function named as \texttt{FindDistribution} embedded into  Mathematica 12.0 software to find an appropriate distribution for the data set. Since Mathematica software are capable for performing the optimization, bootstrap and also includes the numerical evaluation of hypergeometric function ($_2F_1$) while conducting the modelling the data sets, practitioners can use these codes for their aims in researches. The supplementary metarials  provide the building codes for  practitioners \cite{toolsinmathe,bootmathe,quantummathe}.

The Examples 1 and 2 represent the real data called as heterodatatrain\$V5 and heterodata\$V4, respectively in the  "Rmixmod" package at R software with version 4.1.3. The numbers of sample size $n$ are $300$ and $200$ for Examples 1 and 2, respectively.  The package "multilevel" includes data called as "bh1996". The columns 11 and 13 of bh1996 data represent the modality. The Examples 3 and 4  are for data sets with sample sizes $n=7382$. The $q$ values for Examples 1-4 are chosen as $0.98,0.95,0.98,0.99$, respectively, according to the GOFTs.

\begin{table}[H]
	\centering
	\caption{The models, the estimates of parameters, statistics and information criteria for  assesment of models}
		\resizebox{\linewidth}{!}{
	\begin{tabular}{c|cc|ccc|ccc}\hline
		\multicolumn{9}{c}{\bf{Example 1}}  \\ \hline
		& \multicolumn{2}{c}{Estimates}   & \multicolumn{3}{c}{Goodness of fit tests (GOFTs)}   & \multicolumn{3}{c}{log(L) \& Information criteria}    \\ \hline	
	PM& $\hat{\mu}$ & $\hat{\sigma}$   & KS &CVM  &AD  & log(L)& AIC & BIC  \\ \hline
		$^q$TD$_\Phi$ & -2.25117  & 1.32990  &  0.0778466 &0.551431  &620.933  & -570.632 & 1155.25 & 1172.67 \\
		SK	& -2.24486 & 1.62363  & {\bf 0.0680419}  &  {\bf0.402510} &  {\bf593.949} & -563.837 &  {\bf 1130.25}  & \bf{1137.49}\\
		EstD &  -2.22717	& 1.61703 &  0.0706750 & 0.410858 & 597.038 &  {\bf -563.777} &1134.51  & 1141.45\\
		TD$_\Phi$& -2.25803  & 1.38998 &   0.0764954  & 0.559258 & 615.978 & -570.669  &1147.46 & 1166.25\\
		\hline
				\multicolumn{9}{c}{\bf{Example 2}}  \\ \hline
		PM& $\hat{\mu}$ & $\hat{\sigma}$   & KS &CVM  &AD  & log(L)& AIC & BIC  \\ \hline
		$^q$TD$_\Phi$ & -1.94546 & 1.29762  &  0.120948 &0.988454  & 471.754 & -391.988 &794.657 & 814.271 \\
SK&-1.98323 &1.63448 & {\bf0.0862193} & {\bf0.420766}&{\bf400.882} & {\bf-365.779} &{\bf 740.391} & \bf{744.768}\\
		EstD & -1.96097	& 1.65964 &  0.0873551 & 0.43558 & 401.620 &   -370.828 &743.939  & 752.477\\
		TD$_\Phi$&-1.91852  & 1.47189 &   0.120597  & 0.927997 & 430.148 & -377.753  &772.283 & 787.288\\
		\hline
			\multicolumn{9}{c}{\bf{Example 3}}  \\ \hline
			PM& $\hat{\mu}$ & $\hat{\sigma}$   & KS &CVM  &AD  & log(L)& AIC & BIC  \\ \hline
		$^q$TD$_\Phi$ & 2.77324 & 0.856429  &  0.0350506 &1.66183  & 14510.4  & -9762.83 &19535.7 & 19570.2 \\
		SK	& 2.78055 &  0.915343 &  \bf{0.0150676}  &  \bf{0.293907} & 14661.0 &\bf{-9659.05}&  \bf{19322.1} & \bf{19335.9}\\
		EstD &  2.78047	& 0.909687 &  0.0326443 & 1.31802 & \bf{14397.0} &  -9764.47&19532.9  & 19546.8\\
		TD$_\Phi$& 2.77505  & 0.882003 &   0.0333040  & 1.52912 & {\it14414.5} &{\it-9761.59}  &{\it19533.2} & 19567.7\\
		\hline
		\multicolumn{9}{c}{\bf{Example 4}}  \\ \hline
		PM& $\hat{\mu}$ & $\hat{\sigma}$   & KS &CVM  &AD  & log(L)& AIC & BIC  \\ \hline
		$^{q}$TD$_\Phi$ & -0.001341870 & 0.865658  &  0.989796 &43.8267 & 677.270  & -9550.79 &19111.6& 19146.1 \\
		SK	& -0.000056456 &  0.889187 & \bf{0.981952}  &\bf{43.0978}  &\bf{600.388} &\bf{-9472.98} &  \bf{18950.0} & \bf{18963.8}\\
		EstD &  0.000581656	& 0.884030 &  0.988856 & 43.7301 & 663.308&  -9553.08&19110.2  & 19124.0\\
		TD$_\Phi$& -0.002632980  & 0.869382 &   0.989468  & 43.7947 & 672.560 &{\it-9550.61}  &{\it19111.2} & {\it19145.8}\\
		\hline
		\end{tabular}
	}\\
 \begin{tablenotes}\tiny
	\raggedright
	\item 	PM: (Semi)-parametric models
	\item $^q$TD$_\Phi$: Objective function $\log_q$ from trimodal normal distribution.
	\item SK: The smooth kernel distribution used Gaussian (normal) distribution (semiparametric model).
	\item EstD: The automatically chosen function by "FindDistribution" in Mathematica software.
	\item TD$_\Phi$: Objective function $\log$ from trimodal normal distribution.
	\item {\it Italic represents  closeness to the values produced by SK and EstD or almost best ones}
\end{tablenotes}
	\label{tableex1}
\end{table}
The location ($\mu$) and scale ($\sigma$) are important parameters to summarize the data set. The efficient estimations of these parameters depend on the chosen function used for modelling.  Table \ref{tableex1} provides them and other statistics for testing the modelling competence of the used functions. When all of PM ($^q$TD$_\Phi$, SK, EstD and TD$_\Phi$ ) in Table \ref{tableex1} are compared, SK and EstD are rival ones among models. On the other side, the statistics and information criteria of $^q$TD$_\Phi$ and TD$_\Phi$ can be near to SK and EstD as alternative models (see also discussion in Appendix \ref{bootsim}).

\begin{table}[H]
	\centering
	\caption{Standard errors of estimators and  estimates of shape and bimodality parameters in $^q$TD$_\Phi$ and TD$_\Phi$}
		\resizebox{\linewidth}{!}{
	\begin{tabular}{c|cc|ccc}\hline
		\multicolumn{6}{c}{\bf{Example 1}}  \\ \hline
			PM& $\sqrt{{\rm Var}(\hat{\mu})}$ & $\sqrt{{\rm Var}(\hat{\sigma})}$   & $\hat{\alpha}$  ($\sqrt{{\rm Var}(\hat{\alpha})}$) &$\hat{\rho}$ ($\sqrt{{\rm Var}(\hat{\rho})}$) &$\hat{\delta}$ ($\sqrt{{\rm Var}(\hat{\delta})}$)\\ \hline
	$^q$TD$_\Phi$  & 0.152203	   & 0.124745& 1.22607(0.253034)&1.06657(7.41619) &0.233824(1.63256) \\
	
	TD$_\Phi$	& 0.167649 & 0.143923  & 1.21113(0.257971) &  1.08073(7.41113) & 0.230591(1.58755) \\ \hline
			\multicolumn{6}{c}{\bf{Example 2}}  \\ \hline
			PM& $\sqrt{{\rm Var}(\hat{\mu})}$ & $\sqrt{{\rm Var}(\hat{\sigma})}$   & $\hat{\alpha}$  ($\sqrt{{\rm Var}(\hat{\alpha})}$) &$\hat{\rho}$ ($\sqrt{{\rm Var}(\hat{\rho})}$) &$\hat{\delta}$ ($\sqrt{{\rm Var}(\hat{\delta})}$)\\ \hline
		$^q$TD$_\Phi$  & 0.071482   & 0.103308 &  37.1884(205.348) &5.53004(3048.8)&50.6147(27963.2)  \\
		TD$_\Phi$	& 0.0985512  & 0.140166  & 31.9374(217.858)  &  11.6716(5270.27) & 48.0121(21714.5) \\ \hline
	\multicolumn{6}{c}{\bf{Example 3}}  \\ \hline
PM& $\sqrt{{\rm Var}(\hat{\mu})}$ & $\sqrt{{\rm Var}(\hat{\sigma})}$   & $\hat{\alpha}$  ($\sqrt{{\rm Var}(\hat{\alpha})}$) &$\hat{\rho}$ ($\sqrt{{\rm Var}(\hat{\rho})}$) &$\hat{\delta}$ ($\sqrt{{\rm Var}(\hat{\delta})}$)\\ \hline
$^q$TD$_\Phi$  &0.000889860   & 0.000992885 &  1.161980(0.00549161) &1.39357(0.0514929) &0.400877(0.0145018)\\

TD$_\Phi$	&0.000932255 & 0.000732616 & 0.888664(0.00562133)  &  1.17315(0.0706782) & 0.182165(0.0109832) \\ \hline
			\multicolumn{6}{c}{\bf{Example 4}}  \\ \hline
PM& $\sqrt{{\rm Var}(\hat{\mu})}$ & $\sqrt{{\rm Var}(\hat{\sigma})}$   & $\hat{\alpha}$  ($\sqrt{{\rm Var}(\hat{\alpha})}$) &$\hat{\rho}$ ($\sqrt{{\rm Var}(\hat{\rho})}$) &$\hat{\delta}$ ($\sqrt{{\rm Var}(\hat{\delta})}$)\\ \hline
$^q$TD$_\Phi$  &  0.0002215930  & 0.000690205 &  0.499788(0.00379787) &1.405910(0.00379787) &0.12482400(0.00240774)  \\

TD$_\Phi$	&0.0000789721  &0.000723072   & 0.509288(0.00275330)  &  0.346905(0.00799665) &0.0312284(0.000700832) \\ \hline
	\end{tabular}}\\
	\label{tableex2}
\end{table}
Table \ref{tableex2} shows the square root of variance of estimators and also the estimates of $\hat{\alpha}$, $\hat{\rho}$ and $\hat{\delta}$ for the examples analyzed.
\begin{table}[H]
	\centering
	\caption{The estimates for  $\hat{\mu}$ and $\hat{\sigma}$ from normal distribution and robust form}
	\begin{tabular}{c|c|cccc}
		&  & Example 1 & Example 2 & Example 3 & Example 4 \\ \hline
		\multirow{2}{*}{N} & $\hat{\mu}$  &-2.24483 & -1.96589 & 2.77989  & 1.24627 × 10$^{-12}$  \\
	   	& $\hat{\sigma}$ &  1.58461 & 1.58535 & 0.908344 &0.882701  \\ \hline
		\multirow{2}{*}{R} & Median & -2.34344 & -2.23693 & 2.83333 & 0.0309632 \\
		& MAD& 1.08095  &0.981103  &  0.611111& 0.607600 \\ \hline
	\end{tabular}
 \begin{tablenotes}\tiny
	\raggedright
	\item N: Normal distribution
	\item R: Robust statistics
	\item M: Median, MAD: Median absolute deviation (Median(|x-Median(x)|))
\end{tablenotes}
	\label{tableex3}
\end{table}

Table \ref{tableex3} introduces the basic statistics to see the role of  distribution with one mode property and trimodality. 

The estimates of $\hat{\mu}$ and $\hat{\sigma}$ from different PM show that we can have a clue to imply that the existence of modality can be observed, because the estimates of $\hat{\mu}$ and $\hat{\sigma}$ from SK as a smooth kernel technique based on working on the data-adaptive approach  (which is capable to fit the modality whether or not it exists in the reality--see also discussions on Appendix \ref{benefitof}) instead of parametric approach for modelling  can be close to the estimates of $\hat{\mu}$ and $\hat{\sigma}$ from $^q$TD$_\Phi$ and TD$_\Phi$. The kernel estimation method as a smoothing tehnique is  the best one generally. Even if 1000 replication for the different design of samplings constructed by use of bootstrap technique is applied, the numerical error(s) in computation for optimization can be tricker to consider and make an accurate judgement among the modelling performance of the used four models. For example, CDF and PDF of TD$_\Phi$ depend on the \texttt{Hypergeometric2F1} in Mathematica. Soft forms of PDF of TD$_\Phi$ in Figure \ref{Fig-trimodality2} and smooth kernel technique in Ref. \cite{smoothker} can be alternative to each other when the estimates of $\hat{\mu}$ and $\hat{\sigma}$, the statistics from GOFTs, the values of log(L) and IC are taken into account. On the other side, it is very difficult to know which function will be the best one for modelling when the data sets in the finite sample size are tried to be fitted by the functions. Even if the sample sizes of Examples 3 and 4 are 7382, eventually we have finite sample size whatever it is. The population in reality will not known exactly. Consequently, an alternative function can be necessary for driving modality via parameters $\rho$ and $\delta$.  This is the reason why we make a comparison between the SK and other parametric models to observe what and how the estimates of  $\hat{\mu}$ and $\hat{\sigma}$ will be changed if PDF are changed \cite{hub64,CanEEq}. Note that the modelling and numerical error(s) are topics which can affect each other.

\section{Conclusions}\label{concsect}
\noindent
Since recent times show that an increasing popularity has been observed in the modelling for data sets having modality, producing the trimodal form of any PDF has been proposed. The trimodal form is constructed by using the technique which includes the cumulative function of Maxwell distribution, the existing unimodal distribution and the corresponding normalizing constant of the proposed distribution. The properties of trimodality have been examined. The application of producing the trimodality has been conducted for the normal distribution which is symmetric and unimodal form with two parameters which are location and scale.  The properties of trimodal normal distribution have been examined. Thus, the applicability of this distribution have been tested. The trimodal normal distribution can have different forms such as strict and soft modalities to perform a precise fitting when there exists three modes in the empirical distribution of the data sets. 

A comparison among trimodal normal, the kernel type estimation method, the probable parametric distribution driven by Mathematica software  has been performed in order to make applications for numerical evaluation of  TD$_\Phi$. The $\log_q$-likelihood estimation method and its special form with $q \to 1$ have been used to estimate of parameters of trimodal normal distribution. The proofs, properties of TD$_\Phi$ distribution and codes used for application have been given by appendices if the researchers perform to model the data sets by use of TD$_\Phi$ distribution. 

The future will be an application on the different areas of statistics such as regression modelling, the tools in the multivariate statistics and other tools based on the distribution theory. The order statistic form of TD$_\Phi$ in the least informative distribution  will be  studied for the trimodal forms of the existing distributions in the applied field of science. Additionally, the precise modelling  for inliers into data sets can also be performed by use of trimodality, the generalized logarithms,  entropy functions,  order statistic and different estimation methods all together \cite{Cankayaorder,Can21frac}. A package in R software will be prepared for practitioners after the special function in R software are improved.

\paragraph{Acknowledgements}
We acknowledge the anonymous referees for their helpful
comments, suggestions and references provided in their reports.
R. V. thanks A. V. Medino, J. Roldan and  E. M. M. Ortega for partial discussions of Theorem \ref{prop-exp} and for general paper questions.

\paragraph{Disclosure statement}
There are no conflicts of interest to disclose.

\paragraph{Funding}
This study was financed in part by the Coordenação de Aperfeiçoamento de Pessoal de Nível Superior - Brasil (CAPES) (Finance Code 001).

\paragraph{ORCID}
Roberto Vila  \
\url{https://orcid.org/0000-0003-1073-0114}
\\
Victor Serra  \
\url{https://orcid.org/0000-0003-3061-2094}
\\
Mehmet N. Çankaya  \
\url{https://orcid.org/0000-0002-2933-857X}
\\
Felipe Quintino \ \url{https://orcid.org/
0000-0003-0286-0541}


\break 
\begin{appendices}

\section{Proof of some results of Sections \ref{Structural properties} and \ref{Sect:4}}\label{Ap-1}

\begin{proof}[Proof of Lemma \ref{def-R-0}]	
	Let $Y$ be a random variable with ${\rm Gamma}(p,1/\alpha^2)$ distribution. The corresponding PDF and CDF of  $Y$, respectively, are given by $f(y;\alpha,p)={(1/\alpha^2)^p}\, y^{p-1} {\rm e}^{-y/\alpha^2}/\Gamma(p)$ and
	$F(y;\alpha,p)={\gamma(p,y/\alpha^2)/\Gamma(p)}$. Therefore,
	\begin{align}\label{R-fucntion}
	\mathfrak{R}(y)
	=
	{2\delta} 
	f(y;\alpha,p)
	-
	[\rho+\delta F(y;\alpha,p)]\mathfrak{h}(y), \quad y>0.
	\end{align}
	
	Suppose $y\geqslant \alpha^2(1-A)$. Then 
	\begin{align*}
	{\rm e}^{(\alpha^2A+y)/\alpha^2}
	=
	\sum_{k=0}^{\infty} 
	{(\alpha^2A+y)^k/\alpha^{2k}\over k!}
	\geqslant 
	{(\alpha^2A+y)^{n+1}/\alpha^{2(n+1)} \over (n+1)!},
	\end{align*}
	for each $n\in\mathbbm{N}$, such that
	\begin{align*}
	{\rm e}^{-y/\alpha^2}
	\leqslant 
	{(n+1)! \alpha^{2(n+1)} {\rm e}^{A}
		\over (\alpha^2A+y)^{n+1} }.
	\end{align*}
	Hence,
	\begin{align}
	{2\delta} 
	f(y;\alpha,p)
	&\leqslant
	{2\delta}\biggl[ 
	{(1/\alpha^2)^p\over\Gamma(p)}\, y^{p-1}\, {(n+1)! \alpha^{2(n+1)} {\rm e}^{A}
		\over (\alpha^2A+y)^{n+1} }\biggr] 
	\nonumber
	\\[0,3cm]
	&\leqslant 
	{2\delta}\biggl[ 
	{(1/\alpha^2)^p\over\Gamma(p)}\, {(n+1)! \alpha^{2(n+1)} {\rm e}^{A}
		\over (\alpha^2A+y)^{n+1-p} }\biggr] 
	\leqslant
	{\rho C\over (\alpha^2A+y)^{\beta-p}}
	=
	\rho \mathfrak{h}(y), \label{ineq-f-h}
	\end{align}
	for $\beta<n+1$, $C\geqslant 1$ and some $\rho>0$, so that
	\begin{align}\label{rho-condition}
	\rho
	\geqslant 
	{2\delta}
	\biggl[ 
	{(1/\alpha^2)^p\over\Gamma(p)}\, 
	{(n+1)! \alpha^{2(n+1)} {\rm e}^{A}}
	\biggr].
	\end{align}
	Since $F(y;\alpha,p)$ is a CDF, from \eqref{ineq-f-h} it follows that
	\begin{align}\label{ineq-f_h}
	{2\delta} 
	f(y;\alpha,p)
	\leqslant
	[\rho+\delta F(y;\alpha,p)]\mathfrak{h}(y)
	\leqslant
	(\rho+\delta)\mathfrak{h}(y)
	, \quad \forall y\geqslant \alpha^2(1-A).
	\end{align}
	In other words, for every $y\geqslant \alpha^2(1-A)$ and $\rho$ large enough, the tail of Gamma distribution $2\delta f(y;\alpha,p)$ is lighter than the tail of $[\rho+\delta F(y;\alpha,p)]\mathfrak{h}(y)$ which decays polynomially.
	
	\smallskip
	On the other hand, it is well-known that:
	when $p< 1$, the Gamma distribution is exponentially shaped and asymptotic to both the vertical and horizontal axes;
	the Gamma distribution with shape parameter $p= 1$ and scale parameter $1/\alpha^2$ is the same as an exponential distribution of scale parameter (or mean) $1$;
	when $p$ is greater than one, the Gamma distribution assumes a maximum value (unimodal), but skewed shape. The skewness reduces as the value of $p$ increases.
	
	Based on the shapes of the Gamma distribution and on the inequality \eqref{ineq-f_h}, in Figure \ref{Fig-trimodality-2}, we graphically sketch the functions $2\delta f(y;\alpha,p)$ and $[\rho+\delta F(y;\alpha,p)]\mathfrak{h}(y)$, and consider all possible cases, by varying the parameters  $\alpha$, $\rho$ as in \eqref{rho-condition}, $\delta$ and $p$ (known), in which the graphs of $2\delta f(y;\alpha,p)$ and $[\rho+\delta F(y;\alpha,p)]\mathfrak{h}(y)$ intersect (or not). 
	\begin{figure}[H]
		\centering
		\includegraphics[scale=1.15]{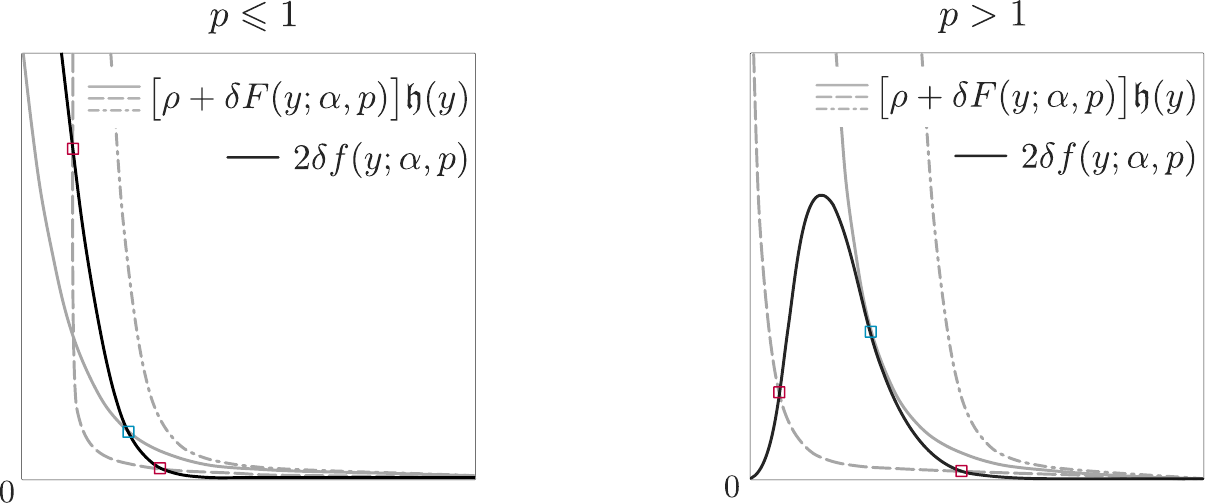}
		\caption{The graphs of $2\delta f(y;\alpha,p)$ and $[\rho+\delta F(y;\alpha,p)]\mathfrak{h}(y)$ have at most two common points.}
		\label{Fig-trimodality-2}
	\end{figure}
	\noindent
	Then, in both cases $p\leqslant 1$ or  $p> 1$,
	the function $\mathfrak{R}$ in \eqref{R-fucntion} 
	has at most two real zeros. 
	This completes the proof.
\end{proof}

\begin{proof}[Proof of Proposition \ref{existence-tsallis}]  A simple algebric manipulation shows that
	\begin{align*}
	S_q(X)=\dfrac{1}{q-1}\, \bigg[1-\int_{\sigma D+\mu} f^q(x;\boldsymbol{\theta}) \, {\rm d}x\bigg], \quad q\neq 1.
	\end{align*}	
	
	Since $0<T(x;\alpha,p)<1$ for almost all $x\in D$, we have 
	\begin{align*}
	0\leqslant f(x;\boldsymbol{\theta})\leqslant {(\rho+\delta)\over Z_{\boldsymbol{\theta}}}\, g\biggl({x-\mu\over\sigma}\biggr).
	\end{align*}
	Then, by using that the function $x\mapsto x^q$ is increasing, for $x>0$ and $q>0$, we have 
	\begin{align*}
	0\leqslant f^q(x;\boldsymbol{\theta})\leqslant {(\rho+\delta)^q\over Z^q_{\boldsymbol{\theta}}}\, g^q\biggl({x-\mu\over\sigma}\biggr),
	\end{align*}
	for all $x\in \sigma D+\mu$. Consequently,
	\begin{align*}
	\int_{\sigma D+\mu} f^q(x;\boldsymbol{\theta})\, {\rm d}x
	\leqslant
	{(\rho+\delta)^q\over Z^q_{\boldsymbol{\theta}}}\, 
	\int_{D}
	g^q\biggl({x-\mu\over\sigma}\biggr)
	\, {\rm d}x
	=
	{(\rho+\delta)^q\sigma\over Z^q_{\boldsymbol{\theta}}}\, \big[1-(q-1)S_q(W)\big],
	\quad q\neq 1.
	\end{align*}
	Hence, if $S_q(W)$ exists then of $S_q(X)$ also exists.
\end{proof}

\begin{proof}[Proof of Proposition \ref{prop-Shannon entropy}]
	By taking the logarithm of each side of	\eqref{pdfbgev2}, by definition of Shannon entropy, we have
	\begin{align}\label{eq-1}
	S_1(X)
	=
	\log(Z_{\boldsymbol{\theta}})
	+
	\mathbbm{E}\biggl[-\log\biggr(\rho+\delta T\biggl({X-\mu\over\sigma};\alpha,p\biggr)\biggr)\biggr]
	+
	\mathbbm{E}\biggl[-\log g\biggl({X-\mu\over\sigma}\biggr)\biggr].
	\end{align}
	
	By taking $L(x)=\log(\rho+\delta T(x;\alpha,p))$ and $L(x)=-\log g(x)$, $\forall x\in \sigma D+\mu$, in Corollary \ref{prop-exp-1}, respectively, we have
	\begin{align}\label{eq-2}
	\!\!\!\!\!\!\!
	\mathbbm{E}\biggl[-\log\biggr(\rho&+\delta T\biggl({X-\mu\over\sigma};\alpha,p\biggr)\biggr)\biggr]
	=
	-{
		{({\rho}+\delta)\sigma \over Z_{\boldsymbol{\theta}}}\,
		\mathbbm{E}\big[\log(\rho+\delta T(W;\alpha,p))\big]
	}	 \nonumber
	\\[0,2cm]
	&-
	{
		{{\delta}\sigma \over Z_{\boldsymbol{\theta}}}\,
		\left\{
		\mathbbm{E}\left[\mathds{1}_{\{W\leqslant-\alpha\sqrt{Y}\}} \log(\rho+\delta T(W;\alpha,p))\right]
		-
		\mathbbm{E}\left[\mathds{1}_{\{W\leqslant\alpha\sqrt{Y}\}} \log(\rho+\delta T(W;\alpha,p))\right]
		\right\}
	}
	\end{align}
	and
	\begin{align}\label{eq-3}
	\!\!\!\!
	\mathbbm{E}\biggl[-\log g\biggl({X-\mu\over\sigma}\biggr)\biggr]
	=
	{
		{({\rho}+\delta)\sigma \over Z_{\boldsymbol{\theta}}}\,
		S_1(W)
		+
		{{\delta}\sigma\over Z_{\boldsymbol{\theta}}}\,
		\left\{
		{\mathbbm{E}}
		\left[
		S_1\big(\mathds{1}_{ \{W\leqslant -\alpha\sqrt{Y}\} } W\big)
		\right]
		-
		{\mathbbm{E}}
		\left[
		S_1\big(\mathds{1}_{ \{W\leqslant \alpha\sqrt{Y}\} } W\big)
		\right]
		\right\}
	}.
	\end{align}
	By replacing the identities \eqref{eq-2} and \eqref{eq-3} in \eqref{eq-1}, the proof follows.
\end{proof}

\begin{proof}[Proof of Lemma \ref{def-R}]
	The PDF and CDF of $Y\sim{\rm Gamma}(p,1/\alpha^2)$, respectively, are given by $f(y;\alpha,p)={{(1/\alpha^2)^{p-1} {\rm e}^{-y/\alpha^2} /\Gamma(p)}}$ {and}  
	$F(y;\alpha,p)={\gamma(p,y/\alpha^2)/\Gamma(p)}$. Then, $R$ takes on the following form
	\begin{align}\label{def-R-f}
	{R}(y)
	=
	{2\delta} 
	f(y;\alpha,p)
	-
	[\rho+\delta F(y;\alpha,p)].
	\end{align}
	\begin{figure}[H]
		\centering
		\includegraphics[scale=1.15]{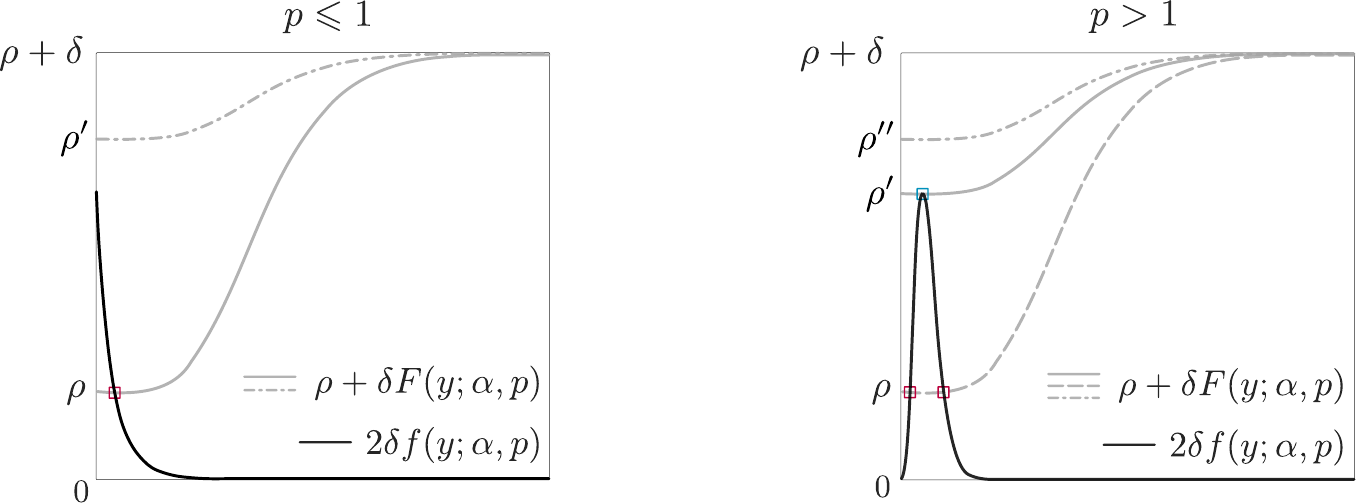}
		\caption{
			The graphs of $2\delta f(y;\alpha,p)$ and $\rho+\delta F(y;\alpha,p)$ have at most two points of intersection.}
		\label{Fig-2}	
	\end{figure}
	
	Based on the shapes of the Gamma law,
	in Figure \ref{Fig-2}, we graphically sketch the functions $2\delta f(y;\alpha,p)$ and $\rho+\delta F(y;\alpha,p)$, and consider all possible cases, by varying the parameters  $\alpha$, $\rho$, $\delta$ and $p$ (known), in which the graphs of $2\delta f(y;\alpha,p)$ and $\rho+\delta F(y;\alpha,p)$ intersect (or not).
	
	So, for the cases when $p\leqslant 1$ or  $p> 1$, the function $R$ in \eqref{def-R-f} has at most two real zeros. This completes the proof.
\end{proof}

\section{Moments in the Gaussian case}\label{Moments in the Gaussian case}
In this section, for simplicity, we consider $p\geqslant 1$ an integer.
To get the moments of $X\sim {\rm TD}_\Phi(\boldsymbol{\theta})$,
by Corollary \ref{moments}, it is necessary to determine 
$
\mathbbm{E}(\mathds{1}_{\{Z\leqslant\pm\alpha\sqrt{Y}\}} Z^n)
$, $n\geqslant 1$,
where $Z\sim N(0,1)$ and $Y\sim{\rm Gamma}(p,1)$ are independent. Indeed, since $Z$ and $Y$ are independent, by definition of expectation,
\begin{align}\label{ide-expc}
\mathbbm{E}\big(\mathds{1}_{\{Z\leqslant\pm\alpha\sqrt{Y}\}} Z^n\big)
=
\int_{0}^{\infty} 
\mathbbm{E}\big(\mathds{1}_{\{Z\leqslant\pm\alpha\sqrt{y}\}} Z^n\big)
\, 
{y^{p-1} {\rm e}^{-y}\over\Gamma(p)}
\, {\rm d}y.
\end{align}

Let us define $n_m=n-2m-1$,
\begin{align}
c_{n,m}= {n!2^{-m} n_m!\over m!(n_m+1)!}
\quad \text{and}\quad
d_{n_m,j}= {(-1)^j\over j!(n_m-2j)! 2^j}. \label{exp-tri-G}
\end{align}
By using the following known formula (see Theorem 2.3. of \cite{PENDER201540}): for each real numbers $a,b$ such that $a<b$,
\begin{align*}
&\mathbbm{E}\big(\mathds{1}_{\{a\leqslant Z\leqslant b\}} Z^n\big)
=
\sum_{m=0}^{\lfloor n/2\rfloor}
\sum_{j=0}^{\lfloor n_m/2\rfloor}
c_{n,m} d_{n_m,j}\,
\big[a^{n_m-2j}\phi(a)-b^{n_m-2j}\phi(b)\big],
\end{align*}
where ${\lfloor x\rfloor}$ is the greatest integer less than or equal to $x$, we have
\begin{align*}
\mathbbm{E}\big(\mathds{1}_{\{Z\leqslant\pm\alpha\sqrt{y}\}} Z^n\big)
=
-
\sum_{m=0}^{\lfloor n/2\rfloor}
\sum_{j=0}^{\lfloor n_m/2\rfloor}
c_{n,m} d_{n_m,j}\,
(\pm\alpha\sqrt{y})^{n_m-2j}\phi(\pm\alpha\sqrt{y}).
\end{align*}
Replacing the above identity in  \eqref{ide-expc},
\begin{align*}
\mathbbm{E}\big(\mathds{1}_{\{Z\leqslant\pm\alpha\sqrt{Y}\}} Z^n\big)
=
-
\sum_{m=0}^{\lfloor n/2\rfloor}
\sum_{j=0}^{\lfloor n_m/2\rfloor}
c_{n,m} d_{n_m,j}\,
\int_{0}^{\infty} 
(\pm\alpha\sqrt{y})^{n_m-2j} \phi(\pm\alpha\sqrt{y})
\, 
{y^{p-1} {\rm e}^{-y}\over\Gamma(p)}
\, {\rm d}y.
\end{align*}
But, by definition of gamma distribution, 
\begin{align*}
\int_{0}^{\infty} 
(\pm\alpha\sqrt{y})^{n_m-2j} \phi(\pm\alpha\sqrt{y})
\, 
{y^{p-1} {\rm e}^{-y}\over\Gamma(p)}
\, {\rm d}y
=
{(\pm\alpha)^{n_m-2j}\over\sqrt{2\pi}\Gamma(p)(1+{\alpha^2\over 2})^{{n_m-2j\over 2}+p}}\,
\Gamma\biggl({n_m-2j\over 2}+p\biggr).
\end{align*}
Hence
\begin{eqnarray}\label{exp-z-truncated}
\mathbbm{E}\big(\mathds{1}_{\{Z\leqslant\pm\alpha\sqrt{Y}\}} Z^n\big)
=
-{1\over \sqrt{2\pi}\Gamma(p)}
\sum_{m=0}^{\lfloor n/2\rfloor}
\sum_{j=0}^{\lfloor n_m/2\rfloor}
{c_{n,m} d_{n_m,j}\,(\pm\alpha)^{n_m-2j}\over(1+{\alpha^2\over 2})^{{n_m-2j\over 2}+p}}\,
\Gamma\biggl({n_m-2j\over 2}+p\biggr).
\end{eqnarray}

By using \eqref{exp-z-truncated} and the known formula
\begin{align*}
\mathbbm{E}(Z^n)
={2^{-n/2} n!\over (n/2)!}\, \mathds{1}_{\{n \, {\rm even}\}},
\end{align*}
in Corollary \ref{moments}, we get the formula of moments for $X\sim {\rm TD}_\Phi(\boldsymbol{\theta})$ given in  Subsection \ref{Moments and Shannon entropy}.

\section{Entropy in the Gaussian case}\label{Entropy in the Gaussian case}
By Remark \ref{rem-H1}, the Shannon entropy $S_1(X)$ of $X\sim {\rm TD}_\Phi(\boldsymbol{\theta})$ exists whenever $S_1(Z)$ of  $Z\sim N(0,1)$ also exists. Since $S_1(Z)=\log(\sqrt{2\pi})+{1/ 2}$, the  existence of $S_1(X)$ is guaranteed. So it makes sense to find a closed expression for $S_1(X)$.
To find this expression,
by Proposition \ref{prop-Shannon entropy}, it is enough to determine the expectations:
$
\mathbbm{E}[\mathds{1}_{\{Z\leqslant \pm\alpha\sqrt{Y}\}} \log(\rho+\delta T(Z;\alpha,p))]
$
and
$\mathbbm{E}[S_1(\mathds{1}_{\{Z\leqslant \pm\alpha\sqrt{Y}\}}Z)]
$,
where $T$ is as in \eqref{tranf-G} and $Z\sim N(0,1)$ and $Y\sim{\rm Gamma}(p,1)$ are independent. As in the previous subsection, for simplicity, in this subsection we consider $p\geqslant 1$ an integer.

A simple calculation shows that, for each real numbers $a,b$ such that $a<b$,
\begin{align*}
S_1(\mathds{1}_{\{a\leqslant Z\leqslant b\}}Z)
&=
-\int_{a}^b \phi(z) \log\phi(z) \, {\rm d}z
\\[0,2cm]
&=
{1\over 2}\,
\big[a\phi(a)-b\phi(b)\big]
+
{1\over 2}\,
\biggl[\log(\sqrt{2\pi})+{1\over 2}\biggr]
\left[{\rm erf}\biggl({b\over \sqrt{2}}\biggr) 
- {\rm erf}\biggl({a\over \sqrt{2}}\biggr) \right].
\end{align*}
Then
\begin{align*}
\mathbbm{E}\big[S_1(\mathds{1}_{\{Z\leqslant \pm\alpha\sqrt{Y}\}}Z)\big]
&=
\mp
{1\over 2}
\int_{0}^{\infty}
\alpha\sqrt{y}\phi(\pm\alpha\sqrt{y})\,
{y^{p-1} {\rm e}^{-y}\over\Gamma(p)}
\, {\rm d}y
\\[0,2cm]
&+
{1\over 2}
\biggl[\log(\sqrt{2\pi})+{1\over 2}\biggr]
\int_{0}^{\infty}
\left[{\rm erf}\biggl({\pm\alpha\sqrt{y}\over \sqrt{2}}\biggr) 
+1 \right]
{y^{p-1} {\rm e}^{-y}\over\Gamma(p)}
\, {\rm d}y.
\end{align*}
But,
\begin{itemize}
	\item  by definition of Gamma distribution,
	\begin{align*}
	\int_{0}^{\infty}
	\alpha\sqrt{y}\phi(\pm\alpha\sqrt{y})\,
	{y^{p-1} {\rm e}^{-y}\over\Gamma(p)}
	\, {\rm d}y
	=
	{\alpha\Gamma(p+{1\over 2})\over \sqrt{2\pi}\Gamma(p)}\,
	{1\over (1+{\alpha^2\over 2})^{p+{1\over 2}}},
	\end{align*}
	\noindent
	\item
	and by using \eqref{exp-Phi-0}, 
	\begin{align*}
	\int_{0}^{\infty}
	\left[{\rm erf}\biggl({\pm\alpha\sqrt{y}\over \sqrt{2}}\biggr) 
	+1 \right]
	{y^{p-1} {\rm e}^{-y}\over\Gamma(p)}
	\, {\rm d}y
	=
	2{\mathbbm{E}}
	\big[
	\Phi(\pm\alpha \sqrt{Y})
	\big].
	\end{align*}
\end{itemize}
Hence
\begin{eqnarray}\label{exp-form-1}
\mathbbm{E}\big[S_1(\mathds{1}_{\{Z\leqslant \pm\alpha\sqrt{Y}\}}Z)\big]
=
\mp
{\alpha\Gamma(p+{1\over 2})\over 2\sqrt{2\pi}\Gamma(p)}\,
{1\over (1+{\alpha^2\over 2})^{p+{1\over 2}}} 
+
\biggl[\log(\sqrt{2\pi})+{1\over 2}\biggr]
{\mathbbm{E}}
\big[
\Phi(\pm\alpha \sqrt{Y})
\big].
\end{eqnarray}

The proof of the following result is technical.
\begin{proposition} 
	If $Z\sim N(0,1)$ the following hold
	\begin{align}\label{exp-log-0}
	&\mathbbm{E}\big[ \log(\rho+\delta T(Z;\alpha,p))\big]
	=
	2\sum_{k=0}^{\infty} {1\over (2k+1)} 
	\biggl[
	1
	+
	{1\over (\rho+1)^{2k+1}} 
	\sum_{i=p}^{\infty}
	\widetilde{c}_{i,k}\,
	{2^{-i} (2i)!\over i!}
	\biggr];
	\\[0,3cm]
	&\mathbbm{E}\big[\mathds{1}_{\{Z\leqslant \pm\alpha\sqrt{Y}\}} \log(\rho+\delta T(Z;\alpha,p))\big] \nonumber
	\\[0,2cm]
	&\hspace*{2.7cm}
	=
	2\sum_{k=0}^{\infty} {1\over (2k+1)} 
	\biggl\{
	\mathbbm{E}\big[\Phi(\pm\alpha\sqrt{Y})\big]
	+
	{1\over (\rho+1)^{2k+1}} 
	\sum_{i=p}^{\infty}
	\widetilde{c}_{i,k} \mathbbm{E}\big(\mathds{1}_{\{Z\leqslant \pm\alpha\sqrt{Y}\}} {Z^{2i}}\big)
	\biggr\}.  \label{exp-log}
	\end{align}
\end{proposition}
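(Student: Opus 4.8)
The plan is to prove the truncated identity \eqref{exp-log} for both sign choices first, and then obtain \eqref{exp-log-0} from it by symmetry. Since $Z\stackrel{d}{=}-Z$ and $Z$ is independent of $Y$, and since both $z\mapsto z^{2i}$ and $z\mapsto\log(\rho+\delta T(z;\alpha,p))$ are even (the latter because $T$ depends on $z$ only through $z^{2}$), replacing $Z$ by $-Z$ turns $\mathbbm{E}[\mathds{1}_{\{Z\leqslant-\alpha\sqrt{Y}\}}(\cdot)]$ into $\mathbbm{E}[\mathds{1}_{\{Z\geqslant\alpha\sqrt{Y}\}}(\cdot)]$; adding the two versions of \eqref{exp-log} and using $\Phi(t)+\Phi(-t)=1$ together with $\mathbbm{E}(\mathds{1}_{\{Z\leqslant\alpha\sqrt{Y}\}}Z^{2i})+\mathbbm{E}(\mathds{1}_{\{Z\geqslant\alpha\sqrt{Y}\}}Z^{2i})=\mathbbm{E}(Z^{2i})=2^{-i}(2i)!/i!$ then yields \eqref{exp-log-0}. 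So it suffices to establish \eqref{exp-log}.

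For \eqref{exp-log} the strategy is term-by-term integration of a power-series representation of the integrand. Since $p\geqslant1$ is an integer, I would use the power series expansion of the incomplete gamma function \eqref{series expansion} to write $T(z;\alpha,p)$, and hence $\rho+\delta T(z;\alpha,p)$, as an even power series in $z$ with constant term $\rho$ (because $T(0;\alpha,p)=0$) and explicitly known coefficients. After dividing by $\rho+1$ and inserting a suitable series expansion of the logarithm — the odd (arctanh-type) expansion $\log\frac{1+x}{1-x}=2\sum_{k\geqslant0}\frac{x^{2k+1}}{2k+1}$ accounts for the prefactor $2$ and for the odd index $2k+1$, while the accompanying geometric series produces the factor $(\rho+1)^{-(2k+1)}$ — one is led, after reorganization, to a series of the shape $2\sum_{k\geqslant0}\frac{1}{2k+1}\bigl[1+(\rho+1)^{-(2k+1)}\sum_{i\geqslant p}\widetilde{c}_{i,k}z^{2i}\bigr]$, whose coefficients $\widetilde{c}_{i,k}$ are exactly the ones generated by the Cauchy-product recurrence \eqref{c-srec}. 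Keeping $\mathds{1}_{\{Z\leqslant\pm\alpha\sqrt{Y}\}}$ inside the expectation and integrating this series against the law of $(Z,Y)$, the constant contribution at level $k$ becomes $\mathbbm{E}[\mathds{1}_{\{Z\leqslant\pm\alpha\sqrt{Y}\}}]=\mathbbm{E}[\Phi(\pm\alpha\sqrt{Y})]$ (conditioning on $Y$, as in \eqref{exp-Phi-0}), and the $z^{2i}$ contribution becomes $\widetilde{c}_{i,k}\,\mathbbm{E}[\mathds{1}_{\{Z\leqslant\pm\alpha\sqrt{Y}\}}Z^{2i}]$, which is precisely the quantity evaluated in \eqref{exp-z-truncated}; collecting the terms gives \eqref{exp-log}. (For \eqref{exp-log-0} the same manipulation without the indicator replaces $\mathbbm{E}[\Phi(\pm\alpha\sqrt{Y})]$ by $\mathbbm{E}[\mathds{1}]=1$ and $\mathbbm{E}[\mathds{1}_{\{Z\leqslant\pm\alpha\sqrt{Y}\}}Z^{2i}]$ by $\mathbbm{E}(Z^{2i})=2^{-i}(2i)!/i!$, consistent with the symmetry reduction above.)

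The main obstacle is the rigorous justification of exchanging the doubly infinite sum over $(k,i)$ with the expectation. This is delicate here because the even moments $\mathbbm{E}(Z^{2i})=(2i-1)!!$ of the standard normal grow faster than any geometric sequence, so that absolute convergence of $\sum_{k,i}|\widetilde{c}_{i,k}|\,\mathbbm{E}(Z^{2i})\,(\rho+1)^{-(2k+1)}$ is not automatic and must be extracted from the precise structure of the coefficients $\widetilde{c}_{i,k}$ supplied by \eqref{c-srec} — in particular from the factorial decay $1/(i-p)!$ inherited from the incomplete-gamma coefficients — possibly at the price of a mild restriction on $\alpha$ or of reading the inner $i$-sum as an iterated rather than an absolutely convergent series. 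Once a dominating bound is available, Fubini–Tonelli (equivalently, dominated convergence applied to the partial sums of the logarithm series) legitimizes the interchange, and the remaining steps are the routine evaluations already recorded in \eqref{exp-Phi-0}, \eqref{CDF21} and \eqref{exp-z-truncated}.
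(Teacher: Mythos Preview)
Your proposal is correct and follows essentially the same route as the paper: the paper also inserts the arctanh-type expansion $\log z=2\sum_{k\geqslant0}\frac{1}{2k+1}\bigl(\frac{z-1}{z+1}\bigr)^{2k+1}$, expands $T$ via the incomplete-gamma series \eqref{series expansion}, applies the Gradshteyn--Ryzhik formulas for quotients and powers of power series to produce the coefficients $\widetilde{c}_{i,k}$ of \eqref{c-srec}, and then integrates term by term to obtain a single formula for $\mathbbm{E}\bigl[\mathds{1}_{\{a\leqslant Z\leqslant b\}}\log(\rho+\delta T(Z;\alpha,p))\bigr]$ from which both \eqref{exp-log-0} and \eqref{exp-log} are read off as specializations. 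Your only deviation---deriving \eqref{exp-log-0} from \eqref{exp-log} by the symmetry $Z\stackrel{d}{=}-Z$ rather than by specializing a common truncated formula---is a harmless reordering, and your explicit flagging of the Fubini step is in fact more scrupulous than the paper, which simply writes ``assuming we can change the series with the integral sign'' without further justification.
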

\begin{proof}
We consider the power series expansions of $\log(z)$ and of the incomplete gamma function:
\begin{align}\label{series expansion}
\log(z)
=
2\sum_{k=0}^{\infty} {1\over 2k+1} \biggl({z-1\over z+1}\biggr)^{2k+1},
\quad 
\gamma(p,z)
=
\sum_{k=0}^{\infty}
{(-1)^k\over k!}\, {z^{p+k}\over p+k}
=
\sum_{k=p}^{\infty}
{(-1)^{k-p}\over (k-p)!}\, {z^{k}\over k},
\end{align}
respectively.
Using these expansions, we have
\begin{align*}
\log(\rho+\delta T(z;\alpha,p))
&=
2\sum_{k=0}^{\infty} {1\over 2k+1} 
\biggl[{\rho-1+\delta T(z;\alpha,p)\over \rho+1+\delta T(z;\alpha,p)}\biggr]^{2k+1}
\\[0,2cm]
&=
2\sum_{k=0}^{\infty} {1\over 2k+1} 
\Biggl[{\rho-1+{\delta\over\Gamma(p)} 
	\sum_{i=p}^{\infty}
	{(-1)^{i-p}\over 2i(i-p)! \alpha^{2i}}\, {z^{2i}}
	\over \rho+1+{\delta \over\Gamma(p)}
	\sum_{j=p}^{\infty}
	{(-1)^{j-p}\over 2j(j-p)! \alpha^{2j}}\, {z^{2j}}
}\Biggr]^{2k+1}.
\end{align*}
Letting 
$b_0=\rho-1$ and
$b_i=[{\delta(-1)^{i-p}/ 2i(i-p)! \alpha^{2i} \Gamma(p)}]\mathds{1}_{\{i\geqslant p\}}$; 
and similarly, defining
$a_0=\rho+1$ and
$a_j=[{\delta(-1)^{j-p}/ 2j(j-p)! \alpha^{2j}}\Gamma(p)]\mathds{1}_{\{j\geqslant p\}}$;
the last series can be expressed as
\begin{align*}
=
2\sum_{k=0}^{\infty} {1\over 2k+1} 
\Biggl(
{
	\sum_{i=0}^{\infty}
	b_i {z^{2i}}
	\over 
	\sum_{j=0}^{\infty}
	a_j {z^{2j}}
}\Biggr)^{2k+1}.
\end{align*}
By application of the equation in Section 0.313 of \cite{Gradshteyn2000} for division of power series, the above series is written as
\begin{align}\label{rec-form-1}
=
2\sum_{k=0}^{\infty} {1\over 2k+1} 
\Biggl(
{1\over \rho+1}
\sum_{i=0}^{\infty}
c_i {z^{2i}}
\Biggr)^{2k+1},
\end{align}
where the coefficients $c_i$'s are determined
from the recurrence equation
\begin{align*}
c_n=\biggl(b_n-{1\over \rho+1}\sum_{i=p}^{n} c_{n-i} a_i\biggr)\mathds{1}_{\{n\geqslant p\}}.
\end{align*}
By application of the equation in Section 0.314 of \cite{Gradshteyn2000}
for power series raised to powers, the series \eqref{rec-form-1} is equal to
\begin{align*}
2\sum_{k=0}^{\infty} {1\over (2k+1)(\rho+1)^{2k+1}} 
\sum_{i=0}^{\infty}
\widetilde{c}_{i,k} {z^{2i}}
=
2\sum_{k=0}^{\infty} {1\over (2k+1)} 
\biggl[
1
+
{1\over (\rho+1)^{2k+1}} 
\sum_{i=p}^{\infty}
\widetilde{c}_{i,k} {z^{2i}}
\biggr],
\end{align*}
where the coefficients $\widetilde{c}_{i,k}$'s are determined by
$\widetilde{c}_{0,k}=(\rho+1)^{2k+1}$
and from the recurrence
relation 
\begin{align}\label{c-srec}
\widetilde{c}_{m,k}=
{1\over m(\rho+1)}
\sum_{i=p}^{m}\big[2i(k+1)-m\big]a_i \widetilde{c}_{m-i,k} \mathds{1}_{\{m\geqslant p\}}.
\end{align}
In short, we have 
\begin{align*}
\log(\rho+\delta T(z;\alpha,p))
=
2\sum_{k=0}^{\infty} {1\over (2k+1)} 
\biggl[
1
+
{1\over (\rho+1)^{2k+1}} 
\sum_{i=p}^{\infty}
\widetilde{c}_{i,k} {z^{2i}}
\biggr].
\end{align*}

By using the above expansion and assuming we can change the series with the integral sign, we have
	\begin{align*}
	\mathbbm{E}\big[\mathds{1}_{\{a\leqslant Z\leqslant b\}} \log(\rho+\delta T(Z;\alpha,p))\big]
	&=
	\int_{a}^{b}
	\log(\rho+\delta T(z;\alpha,p))
	\phi(z)\, {\rm d}z \nonumber
	\\[0,2cm]
	&=
	2\sum_{k=0}^{\infty} {1\over (2k+1)} 
	\biggl[
	\Phi(b)-\Phi(a)
	+
	{1\over (\rho+1)^{2k+1}} 
	\sum_{i=p}^{\infty}
	\widetilde{c}_{i,k} \mathbbm{E}\big(\mathds{1}_{\{a\leqslant Z\leqslant b\}} {Z^{2i}}\big)
	\biggr].   
	\end{align*}
From the above identity the identities in \eqref{exp-log-0} and \eqref{exp-log} follow.
\end{proof}

Since $S_1(Z)=\log(\sqrt{2\pi})+{1/ 2}$, by substituting  identities \eqref{exp-log-0}, \eqref{exp-log} and \eqref{exp-form-1}  in Proposition \ref{prop-Shannon entropy}, we get the formula of the Shannon entropy for $X\sim {\rm TD}_\Phi(\boldsymbol{\theta})$ given in Subsection \ref{Moments and Shannon entropy}.

\section{Derivatives of the function
\texorpdfstring{$\log f(x;\boldsymbol{\theta})$}{k} }
\label{Elements of the score vector}
Let  $Y\sim{\rm Gamma}(p,1)$ and let $T$ be as in \eqref{tranf-G}. Then

\begin{align*}
	\begin{array}{lllll}
	\displaystyle
		{\partial \log f(x;\boldsymbol{\theta})\over\partial \mu} 
		= 
		-\frac{2 (\frac{x-\mu}{\sigma \alpha} )^{2(p-1)}\,
			{\rm e}^{ - (\frac{x-\mu}{\alpha\sigma})^2 }}{\sigma^2 \alpha^2\Gamma(p)}\,
		\log\left[
		\rho + \delta T\left( \frac{x-\mu}{\sigma}; \alpha,p \right) \right] 
		  -
		  {1\over\sigma} 
		  \frac{g' \left( \frac{x-\mu}{\sigma} \right)}{g \left( \frac{x-\mu}{\sigma} \right)};		
		\\[0,7cm]
		\displaystyle
		{\partial \log f(x;\boldsymbol{\theta})\over\partial \sigma}
		= 
		-  
		\frac{\rho + \delta}{Z_{\boldsymbol \theta}}
		-
		\frac{\delta}{Z_{\boldsymbol \theta}}
		\big\{ \mathbbm{E}[G(-\alpha \sqrt{Y})] - \mathbbm{E}[G(\alpha \sqrt{Y} ) ]\big\}  
				\\[0,6cm]
		\displaystyle
		\hspace*{8.61cm}
		-
		\frac{2\delta ( \frac{x-\mu}{\sigma \alpha}  )^{2p-1} 
		{\rm e}^{ - (\frac{x-\mu}{\alpha\sigma})^2 }
	}{\sigma^3\left[\rho + \delta T(\frac{x-\mu}{\sigma};\alpha;p)\right] \Gamma(p)} 
		-
\frac{(x-\mu)}{\sigma^2} 
\frac{g' \left( \frac{x-\mu}{\sigma} \right)}{g \left( \frac{x-\mu}{\sigma} \right)}  ;
		\\[0,7cm]
		\displaystyle
		{\partial \log f(x;\boldsymbol{\theta})\over\partial \alpha}
		= 
		-
		\frac{2 \sigma \delta\Gamma(p+1/2)}{\sqrt{2\pi} Z_{\boldsymbol{\theta}}\Gamma(p)}\,
		{_2F_1}\biggl( 
		\frac{1}{2}; p+\frac{1}{2}; \frac{3}{2}; \frac{-\alpha^2}{2} 
		\biggr) 
		\\[0,6cm]
		\displaystyle
		\hspace*{2.3cm}
		- 
		\frac{2 \sigma \Gamma(p+1/2)}{\sqrt{2\pi}Z_{\boldsymbol{\theta}}\Gamma(p)}
		\biggl[ 
		\biggl(
		\frac{\alpha^2}{2} + 1 
		\biggr)^{-(p+{1\over 2})} 
		\!\!\!- 
		{_2F_1}
		\biggl(
		\frac{1}{2};p+\frac{1}{2};\frac{3}{2};\frac{-\alpha^2}{2}
		\biggr) 
		\biggr] 
		\!-\! 
		\frac{2\delta 
		(\frac{x-\mu}{\sigma \alpha} )^{2p}\, 
	{\rm e}^{- (\frac{x-\mu}{\alpha\sigma})^2 }	
	}{\alpha\left[\rho + \delta T( \frac{x-\mu}{\sigma}; \alpha, p)\right]};
		\\[0,7cm]
		\displaystyle
		{\partial \log f(x;\boldsymbol{\theta})\over\partial \rho}
		= 
		-\frac{\sigma}{Z_{\boldsymbol \theta}} +  \frac{1}{\rho + \delta T(\frac{x-\mu}{\sigma};\alpha;p)};
		\\[0,7cm]
		\displaystyle
		{\partial \log f(x;\boldsymbol{\theta})\over\partial \delta}
		= 
		-
		\frac{\delta}{Z_{\boldsymbol \theta}} 
		\big\{ 
		1 +  \mathbbm{E}[G(-\alpha\sqrt{Y})] - \mathbbm{E}[G(\alpha \sqrt{Y})]     
		\big\} 
		+ 
		\frac{T\left( \frac{x-\mu}{\sigma}; \alpha, p \right)}{ \rho + \delta T \left( \frac{x-\mu}{\sigma}; \alpha, p \right)    }.
	\end{array}
\end{align*}

\section{Inference, Optimization, Bootstrap and Applications}


\subsection{Rao-Cramer lower bounds of estimators from ML$_q$E and MLE}

Since we provide the square roots of Rao-Cramer lower bounds of estimators from ML$_q$E and MLE, the confidence interval of estimators can be constructed as the following form:
\begin{equation*}
	\hat{\theta} \mp z_{\tau/2} \sqrt{\text{Var}_q(\hat{\theta})},
\end{equation*}
\noindent where $\hat{\theta}$ is chosen as $(\hat{\mu},\hat{\sigma},\hat{\alpha},\hat{\rho},\hat{\delta})$. $z_{\tau/2}$ is the critical value from the standard normal distribution when the significance level $\tau$ is chosen by researcher \cite{hub64,Lehmann1998,Ferrari2010}. $\text{Var}_q(\hat{\theta})$ is the theoretical variance of estimator evaluated numerically by the codes in Section \ref{cisect}. Thus, one can observe the limit values of estimates of parameters to generate the different kind of artificial data sets if it is necessary to do so at an experiment.

The lower limit values of confidence intervals for the estimates must be as follows: $\hat{\sigma}_L>0$, $\hat{\alpha}_L>0$, $\hat{\rho}_L \geqslant 0$ and $\hat{\delta}_L \geqslant 0$. In other words, these values should be greater than zero due to the defined values of these parameters.

\subsection{Benefit of bootstrapped data when semiparametric and parametric models are used for fitting data sets}\label{benefitof}

The general tendency for modelling is performed by the existing techniques which are the estimated distribution including parametric models, kernel smoothing. As is expected, the kernel smoothing technique as a semiparametric approach for modelling can show better performance for finite sample size when compared with parametric models \cite{smoothker}. However, we can need to assume that a data set is member of a parametric model. In this case, if an artificial data set is necessary to model the probable results in the future or if conducting an experiment is expensive for researchers, the necessity of parametric models such as bimodal \cite{denizspain} and trimodal normal distribution can be inevitable; because, the smooth kernel distribution used for fitting on the data set is data-adaptive or semiparametric approach \cite{yellowcoversmooth}. In this case, the parametric model puts a restriction in where we have situations which give the results mimicing the parametric model perfectly because of the probabilities coming from PDF. If we have such results, then we need to make a comparison between kernel smoothing and the trimodal forms of a function which can be chosen as normal, Student $t$, etc. \cite{bootmathe}. 

Since the real data set includes results taken at the moment or sometimes experiments have to be conducted at only one time due to the cost of running of experiment, different collections of the same data should be generated by use of the bootstrap technique.  Any mixed models as a hetero form can be modelled by using  TD$_\Phi$  distribution if the trimodality exists. Since hetero-mixing forms a non-identically distributed data set, trimodal distribution can be necessary to perform an efficient fitting on the data sets observed from an experiment. 

Since bootstrap technique performs a random choice from a data set of experiment, we need to make a comparison among the semiparametric and parametric models whether or not which one will be performable in fitting on the data set \cite{bootmathe,yellowcoversmooth}.  The soft and strict forms of trimodal normal distributions are capable to perform an efficient fitting on data set and is an alternative approach when compared with smoothing technique if trimodality exists. Because, it is concluded that a parametric model must be necessary to generate the random numbers from corresponding probabilities of parametric model, i.e PDF, for research instead of doing an another experiment again. If we generate artificial numbers by use of smoothing technique, these numbers will be generated according to the used smoothing technique which has its probabilities corresponding empirical probability function representing the probabilities coming from each event in the finite sample size (see Subsection \ref{sktechapp}) \cite{yellowcoversmooth}. When we come across big data analytics, the population starts to evolve and take what its real form is as. In this case, a parametric model can be necessary \cite{bigdata}.

 
 
 

\subsection{Driving the bootstrap and the optimization of $\log_q$ likelihood function for estimations of parameters}

GOFTs which perform a testing for assesment of the used (semi)-parametric model can show different performance for each model, which is why the different GOFTs have been applied to test the fitting performance of models. The bootstrap technique in Mathematica  is applied to the real data set (see Appendix \ref{bootsim}). The replication number of bootstrap is $1000$. Thus, the probable mistakings in convergence of the optimization and the different scenarios of real data sets are tried  to be clarified. It is also noted that the data observed after the experiment is conducted can include the measurement error and the measurements can depend on many factors which are known as the randomness, mistakes, unhidden factors, etc. in the experiment \cite{bootmathe,quantummathe}. 

The bootstrapped form of the real data is not only beneficial to jump on these kind of problems in the process of measurement but also the replicated optimization should be useful for us to make the convergence around global point (GP) or reach the real GP where the optimization can reach if the replication is performed. Further, SK technique depends on the chosen function for kernel and it is a semiparametric technique \cite{smoothker,yellowcoversmooth}. As is shown by Table \ref{tableex1}, SK has performance on the modelling. However, the performance of trimodal normal distribution (TD$_\Phi$) can have superior performance when the closeness to statistics generated by SK and the estimated distribution (EstD) which is parametric method is taken into account. EstD integrated into Mathematica software tries to find the best function (mixed form of the parametric models or other parametric models)  while conducting the fitting on the real data set. For example, a data set can be a mixing form of two or three normal distributions with the corresponding parameters  from populations $1$ $2$ or $3$, respectively. If we have a mixed form for the real data set, it can also be member of trimodal distribution which is eventually not known by researchers, which is an important gap in conducting a research. For this reason, TD$_\Phi$ has been proposed and also we make a comparison among them to observe the performance of parametric model (see also discussion in introduction Section \ref{introsec}).

The initial values for starting the optimization performed by \texttt{FindMaximum} with constraints in the parameters $\alpha$, $\rho$ and $\delta$ searched at the interval $(0,100]$ are generated by uniform distribution with $[0,1]$ \cite{bootmathe,quantummathe}. 

%
%

\section{Mathematica 12.0 codes}
\subsection{Generating procedure for random numbers if \texttt{SmoothKernelDistribution} is used}\label{sktechapp}
The codes for generating random numbers from the smooth kernel distribution in Mathematica is given by the following lines:

\begin{verbatim}
	SK = SmoothKernelDistribution[x, {"Adaptive", Automatic, .1},PerformanceGoal
	-> "Quality"];
	F[m_, s_] := CDF[SK, x];
	m := Moment[SK, 1]; s := Sqrt[Moment[SK, 2] - Moment[SK, 1]^2];
	For[i = 0, i < n, i++,
	z = Table[
	x /. FindRoot[F[m, s] - RandomReal[], {x, Lower point, Upper point}], {i, n}]
	]
\end{verbatim}

One can get the random number from trimodal normal distribution by use of  subsection \ref{stocrep} or CDF  of TD$_\Phi$ in Subsection \ref{cumfunTD}.


\subsection{Optimization and statistics}
\begin{verbatim}
iv := RandomReal[];
FM = FindMaximum[{Total[Log[f]], 
	a< \[Alpha] <= b && a <= \[Rho] <= b && 
	a <= \[Delta] <= b}, {{\[Alpha], iv}, {\[Rho] , 
		iv}, {\[Delta], iv}, {\[Mu], Median[x]}, {\[Sigma], 
		Median[Abs[x - Median[x]]]}}];
	EstD = FindDistribution[x]; 
	SK = SmoothKernelDistribution[x, {"Adaptive", Automatic, .1},PerformanceGoal
	 -> "Quality"];
 statistics = 
 {
 	Moment[EstD, 1], Moment[SK, 1],  Mean[x], Median[x],  
	  Sqrt[Moment[EstD, 2] - Moment[EstD, 1]^2], 
	  Sqrt[Moment[SK, 2] - Moment[SK, 1]^2], 
	  StandardDeviation[x], Median[Abs[x - Median[x]]]
 };
	
\end{verbatim}
\subsection{The codes for bootstrap}\label{bootsim}
A part for bootstrap is given by the following form \cite{bootmathe}:
\begin{verbatim}
	For[i = 1, i <= replication, i++,
	     	x := Table[RandomChoice[data, Length[data]], {i,1,Length[data]}]
	   ]
\end{verbatim}

\subsection{CDF of TN$_\Phi$}
\begin{verbatim}
p := 3/2; ns := n;
x = Sort[x];
nsn = 0; nsp = 0;
For[i = 1, i <= ns, i++,
If[x[[i]] < \[Mu], nsn = nsn + 1]
]
For[i = 1, i <= nsn, i++,
If[x[[i]] < \[Mu], xn[i] := x[[i]]]
]
For[i = nsn + 1, i <= ns, i++,
If[x[[i]] >= \[Mu], xp[i] := x[[i]]]
]
For[i = 1, i <= nsn, i++,
FN[i] = (\[Sigma]/
Z)*(\[Rho] + \[Delta]*
T[xn[i]/\[Sigma] - \[Mu]/\[Sigma], \[Alpha], p])*
CDF[NormalDistribution[\[Mu], \[Sigma]], 
xn[i]] + ((\[Delta]*\[Sigma])/
Z)*(1/2 - (\[Alpha]*Gamma[p + 1/2]*
Hypergeometric2F1[1/2, p + 1/2, 3/2, -\[Alpha]^2/2])/(Sqrt[
2*Pi]*Gamma[p])) - ((\[Delta]*\[Sigma])/
Z)*((1/2)*T[xn[i]/\[Sigma] - \[Mu]/\[Sigma], \[Alpha], p] - 
NIntegrate[
Erf[(\[Alpha]/Sqrt[2])*z]*z^(2*p - 1)*Exp[-1^2*z^2], {z, 
	0, (xn[i] - \[Mu])/(\[Alpha]*\[Sigma])}, 
Method -> {"LobattoKronrodRule"}]/Gamma[p]);
]
For[i = nsn + 1, i <= ns, i++,
FP[i] = (\[Sigma]/
Z)*(\[Rho] + \[Delta]*
T[xp[i]/\[Sigma] - \[Mu]/\[Sigma], \[Alpha], p])*
CDF[NormalDistribution[\[Mu], \[Sigma]], 
xp[i]] + ((\[Delta]*\[Sigma])/
Z)*(1/2 - (\[Alpha]*Gamma[p + 1/2]*
Hypergeometric2F1[1/2, p + 1/2, 3/2, -\[Alpha]^2/2])/(Sqrt[
2*Pi]*Gamma[p])) - ((\[Delta]*\[Sigma])/
Z)*((1/2)*T[xp[i]/\[Sigma] - \[Mu]/\[Sigma], \[Alpha], p] + 
NIntegrate[
Erf[(\[Alpha]/Sqrt[2])*z]*z^(2*p - 1)*Exp[-1^2*z^2], {z, 
	0, (xp[i] - \[Mu])/(\[Alpha]*\[Sigma])}, 
Method -> {"LobattoKronrodRule"}]/Gamma[p]);
]
FNFP := Join[{Table[FN[j], {j, nsn}], Table[FP[j], {j, nsn + 1, ns}]}];
JoinFNP := Join[FNFP[[1]], FNFP[[2]]];
\end{verbatim}
\subsection{Goodness of fit tests for the proposed and used distributions}
The goodness of fit tests such as  Kolmogorov-Smirnov (KS), Cram\'{e}r--von Mises and Anderson-Darling are evaluated by using the the following codes for TN$_\Phi$, EstD and SK. 
\begin{verbatim}
For[i = 1, i <= ns, i++,
CVM[i] = (JoinFNP[[i]] - (2*i - 1)/(2*ns))^2 + 1/(12*ns);
AD[i] = (-1/
ns)*((2*i - 1)*(Log[JoinFNP[[i]]] + Log[1 - JoinFNP[[i]]]));
DP[i] = i/ns - JoinFNP[[i]]; DN[i] = JoinFNP[[i]] - (i - 1)/ns;
CVMED[i] = (CDF[estimated\[ScriptCapitalD], x][[
i]] - (2*i - 1)/(2*ns))^2 + 1/(12*ns);
ADED[i] = (-1/
ns)*((2*i - 1)*(Log[CDF[EstD], x][[i]]] + 
Log[1 - CDF[EstD], x][[i]]]));
DPED[i] = i/ns - CDF[EstD], x][[i]]; 
DNED[i] = CDF[estimated\[ScriptCapitalD], x][[i]] - (i - 1)/ns;
CVMSK[i] = (CDF[SK, x][[i]] - (2*i - 1)/(2*ns))^2 + 1/(12*ns);
ADSK[i] = (-1/ns)*((2*i - 1)*(Log[CDF[SK, x][[i]]] + Log[1 - CDF[SK, x][[i]]]));
DPSK[i] = i/ns - CDF[SK, x][[i]]; 
DNSK[i] = CDF[SK, x][[i]] - (i - 1)/ns;
]
DPmax = Max[Table[DP[j], {j, ns}]]; DNmax = Max[Table[DN[j], {j, ns}]]; 
KS = Max[DPmax, DNmax];StaCVM = Total[Table[CVM[j], {j, ns}]];
StaAD = Total[Table[AD[j], {j, ns}]];{KS, StaCVM, StaAD}
\end{verbatim}

In order to get information criteria (IC) such as Akaike and Bayesian,  \texttt{CDF} of Mathematica must be replaced with its corresponding \texttt{PDF} and the formulae of IC are used. For example,
\begin{verbatim}
	 -2*Total[Log[PDF[SK, x]]] + 2*p
\end{verbatim}

\subsection{The codes for variances of estimators}\label{cisect}
The PDF of TD$_\Phi$ is given by
\begin{verbatim}
f[\[Mu]_, \[Sigma]_, \[Alpha]_, \[Rho]_, \[Delta]_] := \
\[Sigma]*((1/((\[Rho] + \[Delta])*\[Sigma] - ((Sqrt[
2]*\[Delta]*\[Sigma]*\[Alpha]*
Gamma[p + 1/2])/(Sqrt[Pi]*Gamma[p]))*
Hypergeometric2F1[1/2, p + 1/2, 
3/2, -\[Alpha]^2/2]))*(\[Rho] + \[Delta]*(1 - 
Gamma[p, (x - \[Mu])^2/(\[Sigma]^2* \[Alpha]^2)]/
Gamma[p]))*(exp^(-((x - \[Mu])^2/(2 \[Sigma]^2)))/(
Sqrt[2 \[Pi]] \[Sigma])));
g[\[Mu]_, \[Sigma]_] := exp^(-((x - \[Mu])^2/(2 \[Sigma]^2)))/(
Sqrt[2 \[Pi]] \[Sigma]); 
T[p_, \[Mu]_, \[Sigma]_, \[Alpha]_] := 
1 - Gamma[p, (x - \[Mu])^2/(\[Sigma]^2* \[Alpha]^2)]/Gamma[p];

Dgm2[\[Mu], \[Sigma]] := D[g[\[Mu], \[Sigma]], \[Mu]]; 

Z := (\[Rho] + \[Delta])*\[Sigma] - ((Sqrt[
2]*\[Delta]*\[Sigma]*\[Alpha]*Gamma[p + 1/2])/(Sqrt[Pi]*
Gamma[p]))*
Hypergeometric2F1[1/2, p + 1/2, 3/2, -\[Alpha]^2/2];
\end{verbatim}
The score ${\partial \log  f(x;\boldsymbol{\theta}) \over \partial \mu}$ for $\mu$ is given by
\begin{verbatim}
em := ((-2*((x - \[Mu])/(\[Sigma]*\[Alpha]))^(2*p - 2)*
Exp[-((x - \[Mu])/(\[Sigma]*\[Alpha]))^2])/(\[Sigma]^2*\
\[Alpha]^2*Gamma[p]))*
Log[\[Rho] + \[Delta]*
T[p, \[Mu], \[Sigma], \[Alpha]]] - (1/\[Sigma])*
Dgm2[\[Mu], \[Sigma]];
\end{verbatim}
The score ${\partial \log  f(x;\boldsymbol{\theta}) \over \partial \sigma}$ for $\sigma$ is given by
\begin{verbatim}
es := -\[Rho]/Z - \[Delta]/
Z - (\[Delta]/
Z)*(NIntegrate[
CDF[NormalDistribution[\[Mu], \[Sigma]], -\[Alpha]*y^0.5]*
PDF[GammaDistribution[p, 1], y], {y, 0, Infinity}, 
Method -> {"LobattoKronrodRule"}] - 
NIntegrate[
CDF[NormalDistribution[\[Mu], \[Sigma]], \[Alpha]*y^0.5]*
PDF[GammaDistribution[p, 1], y], {y, 0, Infinity}, 
Method -> {"LobattoKronrodRule"}]) - ((2*\[Delta]*((x - \
\[Mu])/(\[Sigma]*\[Alpha]))^(2*p - 1)*
Exp[-((x - \[Mu])/(\[Sigma]*\[Alpha]))^2])/(\[Sigma]^3*(\[Rho] \
+ \[Delta]*T[p, \[Mu], \[Sigma], \[Alpha]])*
Gamma[p])) - (((x - \[Mu])/\[Sigma]^2)*Dgs2[\[Mu], \[Sigma]]);
\end{verbatim}
The score ${\partial \log  f(x;\boldsymbol{\theta}) \over \partial \alpha}$ for $\alpha$ is given by
\begin{verbatim}
ea := -(2*\[Sigma]*\[Delta]*Gamma[p + 0.5])/(Sqrt[2*Pi]*Z*Gamma[p])*
Hypergeometric2F1[1/2, p + 1/2, 
3/2, -\[Alpha]^2/
2] - (2*\[Sigma]*Gamma[p + 1/2])/(Sqrt[2*Pi]*Z*
Gamma[p])*((\[Alpha]^2/2 + 1)^(-p - 1/2) - 
Hypergeometric2F1[1/2, p + 1/2, 
3/2, -\[Alpha]^2/
2]) - (2*\[Delta]*((x - \[Mu])/(\[Sigma]*\[Alpha]))^(2*p)*
Exp[-((x - \[Mu])/(\[Sigma]*\[Alpha]))^2])/(\[Alpha]*(\[Rho] + \
\[Delta]*T[p, \[Mu], \[Sigma], \[Alpha]]));
NIntegrate[
em^2*f[\[Mu], \[Sigma], \[Alpha], \[Rho], \[Delta]]^(2 - 
q), {x, -Infinity, Infinity}, Method -> {"LobattoKronrodRule"}];
\end{verbatim}
The score ${\partial \log  f(x;\boldsymbol{\theta}) \over \partial \rho}$ for $\rho$ is given by
\begin{verbatim}
er := -\[Sigma]/Z + 1/(\[Rho] + \[Delta]*T[p, \[Mu], \[Sigma], \[Alpha]]);
\end{verbatim}
The score ${\partial \log  f(x;\boldsymbol{\theta}) \over \partial \delta}$ for $\delta$ is given by
\begin{verbatim}
ed := -\[Delta]/Z*(1 + NIntegrate[CDF[NormalDistribution[\[Mu], \[Sigma]],
-\[Alpha]*y^0.5]*PDF[GammaDistribution[p, 1], y], {y, 0, Infinity}, 
Method -> {"LobattoKronrodRule"}] - 
NIntegrate[CDF[NormalDistribution[\[Mu], \[Sigma]], \[Alpha]*y^0.5]*
PDF[GammaDistribution[p, 1], y], {y, 0, Infinity}, 
Method -> {"LobattoKronrodRule"}]) + 
T[p, \[Mu], \[Sigma], \[Alpha]] / 
(\[Rho] + \[Delta]*T[p, \[Mu], \[Sigma], \[Alpha]]);
\end{verbatim}
An element for ${\partial \log  f(x;\boldsymbol{\theta}) \over \partial \mu}$ of Fisher information matrix based on $\log_q$ is computed by the expression and its corresponding elements of matrix \cite{Cankaya2018}:
\begin{verbatim}
Imm := NIntegrate[em^2*f[\[Mu], \[Sigma], \[Alpha], \[Rho], \[Delta]]^(2 - q), 
{x, -Infinity, Infinity}, Method -> {"LobattoKronrodRule"}]
\end{verbatim}
The Fisher matrix M and inverse of M are given by
\begin{verbatim}
M = n*{{Imm, Ims, Ima, Imr, Imd}, {Ims, Iss, Isa, Isr, Isd}, {Ima, 
		Isa, Iaa, Iar, Iad}, {Imr, Isr, Iar, Irr, Ird}, {Imd, Isd, Iad, 
		Ird, Idd}};IM := Inverse[M]/n
\end{verbatim}

\end{appendices}

\end{document}